\theoremstyle{plain}% default
\newtheorem{theorem}{Theorem}
\newtheorem{lemma}{Lemma}  
\newtheorem{fact}{Fact}
\newtheorem{proposition}{Proposition} 
\newtheorem{corollary}{Corollary}
\theoremstyle{definition}
\newtheorem{definition}{Definition}
\theoremstyle{remark}
\newcommand{\braced}[1]{{\left\{#1\right\}}}
\newcommand{\fst}{\textit{fst}}
\newcommand{\lst}{\textit{lst}}
\begin{document}

\title{Faster Small-Constant-Periodic Merging Networks}

\author{Marek Piotr\'ow\\
Institute of Computer Science, University of Wroc\l aw,\\
ul.~Joliot-Curie~15, PL-50-383 Wroc\l aw, Poland\\
email: Marek.Piotrow@ii.uni.wroc.pl}
\date{}
\maketitle

\begin{abstract} %************************************************************
We consider the problem of merging two sorted sequences on a comparator network
that is used repeatedly, that is, if the output is not sorted, the network is
applied again using the output as input. The challenging task is to construct such
networks of small depth (called a period in this context). In our previous paper
{\em Faster 3-Periodic Merging Network} we reduced twice the time of merging on
$3$-periodic networks, i.e. from $12\log N$ to $6\log N$, compared to the first
construction given by Kuty{\l}owski, Lory{\'s} and Oesterdikhoff. Note that
merging on $2$-periodic networks require linear time. In this paper we extend our
construction, which is based on Canfield and Williamson $(\log N)$-periodic
sorter, and the analysis from that paper to any period $p \ge 4$. For $p\ge 4$ our
$p$-periodic network merges two sorted sequences of length $N/2$ in at most
$\frac{2p}{p-2}\log N + p\frac{p-8}{p-2}$ rounds. The previous bound given by
Kuty{\l}owski at al. was $\frac{2.25p}{p-2.42}\log N$. That means, for example,
that our $4$-periodic merging networks work in time upper-bounded by $4\log N$ and
our $6$-periodic ones in time upper-bounded by $3\log N$ compared to the
corresponding $5.67\log N$ and  $3.8\log N$ previous bounds. Our construction is
regular and follows the same periodification schema, whereas some additional
techniques were used previously to tune the construction for $p \ge 4$. Moreover, 
our networks are also periodic sorters and tests on random permutations
show that average sorting time is closed to $\log^2 N$.
\end{abstract}

\begin{keyword}
parallel merging, oblivious merging, comparison networks, merging 
networks, periodic networks, comparators

AMS: 68Q05, 68Q25
\end{keyword}

%\end{frontmatter}

\section{Introduction}  %*********************************************
\label{intro}

Comparator networks are probably the simplest, comparison-based parallel model
that is used to solve such tasks as sorting, merging or selecting \cite{k}. Each
network represents a data-oblivious algorithm, which can be easily implemented in
other parallel models and hardware. Moreover, sorting networks can be applied in
secure, multi-party computation (SMC) protocols. They are also used to encode
cardinality constrains to propositional formulas \cite{ano} and are strongly
connected with switching networks \cite{l}. The most famous constructions of
sorting networks are Odd-Even and Bitonic networks of depth $\frac{1}{2}\log^2 N$
due to Batcher \cite{b} and AKS networks of depth $O(\log N)$ due to Ajtai, Komlos
and Szemeredi \cite{aks}. The long-standing disability to decrease a large
constant hidden behind the asymptotically optimal complexity of AKS networks to a
practical value has resulted in studying easier, sorting-related problems, whose
optimal networks have small constants. For a review on merging networks and sorting 
network see, for example, Knuth \cite{k}.

A comparator network consists of a set of $N$ registers, each of which can store
an item from a totally ordered set, and a sequence of comparator stages.  Each
stage is a set of comparators that connect disjoint pairs of registers and,
therefore, can work in parallel (a comparator is a simple device that takes a
contents of two registers and performs a compare-exchange operation on them: the
minimum is put into the first register and the maximum into the second one).
Stages are run one after another in synchronous manner, hence we can consider the
number of stages as the running time. The size of a network is defined to be the
total number of comparators in all its stages.

A network $A$ consisting of stages $S_1,S_2,\ldots,S_d$ is called $p$-perio\-dic
if $p<d$ and for each $i$, $1\le i\le d-p$, stages $S_i$ and $S_{i+p}$ are
identical.  A periodic network can be easier to implement, because one can use the
first $p$ stages in a cycle: if the output of $p$-th stage is not correct (sorted,
for example), the sequence of $p$ stages is run again. In pure oblivious context,
such computations are stopped after a predefined number of passes. We can also
define a $p$-periodic network just by giving the total number of stages and a
description of its first $p$ stages. A challenging task is to construct a family
of small-periodic networks for sorting-related problems with the running time
equal to, or not much greater than that of non-periodic networks.

Dowd et al.\ \cite{dpsr} gave the construction of $\log N$-periodic sorting
networks of $N$ registers with running time of $\log^2 N$. Bender and Williamson
introduced a large class of such networks \cite{bw}.  Kuty{\l}owski et al.\
\cite{klow} introduced a general method to convert a non-periodic sorting network
into a 5-periodic one, but the running time increases by a factor of $O(\log N)$
during the conversion. For simpler problems such as merging or correction there
are constant-periodic networks that solve the corresponding problem in
asymptotically optimal logarithmic time \cite{klo,o,p}. In particular,
Kuty{\l}owski, Lory{\'s} and Oesterdikhoff \cite{klo} have given a description of
$3$-periodic network that merges two sorted sequences of $N$ numbers in time
$12\log N$ and a similar network of period $4$ that works in $5.67\log N$.  They
sketched also a construction of merging networks with periods larger than 4 and
running time decreasing asymptotically to $2.25\log N$. Note that $2$-periodic
merging networks require linear time.

In this paper we extend our construction from \cite{mpi14} of a new family of
$3$-periodic merging networks, which is based on Canfield and Williamson $(\log
N)$-periodic sorter \cite{cw}, and the underlying analysis to any period $p \ge
4$. For $p\ge 4$ our $p$-periodic network merges two sorted sequences of length
$N/2$ in at most $\frac{2p}{p-2}\log N + p\frac{p-8}{p-2}$ rounds. The previous
bound given by Kuty{\l}owski at al. \cite{klo} was $\frac{2.25p}{p-2.42}\log N$.
That means, for example, that our $4$-periodic merging networks work in time
upper-bounded by $4\log N$ and our $6$-periodic ones in time upper-bounded by
$3\log N$, compared to the corresponding $5.67\log N$ and  $3.8\log N$ previous
bounds. Our construction is regular and follows the same periodification schema as
we used for $3$-periodic merging networks, whereas some additional techniques were
used previously to tune the construction for $p \ge 4$. Increasing $p$ further, the
multiplicative constant decreases approaching 2. The construction is pretty
simple, but its analysis is quite complicated. 

The advantage of constant-periodic networks is that they have pretty simple
patterns of communication links, that is, each node (register) of such a
network can be connected only to a constant number of other nodes. Such
patterns are easier to implement, for example, in hardware.  Moreover, a node
uses these links in a simple periodic manner and this can save control login
and simplify timing considerations. We can also easily implement an early
stopping property with $p$-periodic networks: if none of the comparators
exchanged values in the last $p$ stages, we could stop the computation. Since
our networks are also periodic sorters, we have used this property to measure
sorting times on random permutations and the results are quite surprising: the
average sorting time of N items is closed to $\log^2 N$. Results are 
presented in Section \ref{sec4}.

The paper is organized as follows. In Section \ref{sec2} we introduce
a new periodification scheme, define our new family of $p$-periodic
merging networks and give the main theorem. Section \ref{sec3} is
devoted to its proof, where we order the set of registers into a
matrix and analyse the behaviour of our network by tracing the numbers
of ones in its columns.

\section{Periodic merging networks}\label{sec2} %*********************

Our merging networks are based on the Canfield and Williamson
\cite{cw} $(\log N)$-periodic sorters. In the following proposition we recall the 
definition of the networks and their merging/sorting properties (see also Fig. 
\ref{anotherCW}). Recall that $[i:j]$ denotes a comparator connecting registers 
$i$ and $j$.
\begin{proposition}(see \cite{cw}) For $k\ge 1$ let $S_1 = \braced{[2i:2i+1]:
i=0,1,\ldots,2^{k-1}-1}$ and for $j=1,\ldots,k-1$ let $S_{j+1} =
\braced{[2i+1:2i+2^{k-j}]: i=0,1,\ldots,2^{k-1}-2^{k-j-1}-1}$. Let $CW_k =
S_1,\ldots,S_k$ be a network of $N_k=2^k$ registers numbered $0, \ldots, N_k-1$.
Then (i) if two sorted sequences of length $2^{k-1}$ are given in registers with
odd and even indices, respectively, then $CW_k$ is a merging network; (ii) $CW_k$
is a $k$-pass periodic sorting network.
\end{proposition}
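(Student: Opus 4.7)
The plan is to invoke the zero-one principle, reducing both claims to combinatorial statements on 0-1 sequences, and then to treat parts (i) and (ii) separately by induction.

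For part (i), I would induct on $k$. Given a 0-1 input with the even-indexed subsequence sorted and containing $a$ zeros, and the odd-indexed subsequence sorted with $b$ zeros, stage $S_1$ sorts each adjacent pair $(2i,2i+1)$. A quick case-check on $(x_i,y_i)\in\{0,1\}^2$ shows that afterwards the even subsequence is still sorted with $\max(a,b)$ zeros and the odd subsequence is still sorted with $\min(a,b)$ zeros, so the total number of zeros is preserved and the ``discrepancy'' $|a-b|$ is made visible. I would then trace the action of stages $S_2,\ldots,S_k$ using their explicit form $[2i+1:2i+2^{k-j}]$: the aim is to recognize this family of wide-shift comparators, after reindexing, as an instance of $CW_{k-1}$ acting on a derived length-$2^{k-1}$ problem, which via the inductive hypothesis completes the merge.

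For part (ii), the natural strategy is to show that after $t$ iterated passes of $CW_k$ the sequence satisfies a monotone invariant that strengthens until full sortedness is reached at $t=k$. A candidate invariant is that every aligned window of length $2^t$ is already sorted; the base case $t=1$ follows immediately from $S_1$, and the inductive step would use part (i) block-wise within each window of length $2^{t+1}$, since by hypothesis the even and odd positions of such a window are each already sorted. After $k$ passes the unique length-$N_k$ window is sorted, which is the full statement.

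The main obstacle lies in the inductive step of (ii): the stages $S_2,\ldots,S_k$ have comparators that span many windows of a given length, so the block-wise reduction requires a nontrivial lemma stating that once the length-$2^t$ blocks are sorted, the wide comparators of later stages do not disturb this invariant and behave as if localized inside length-$2^{t+1}$ blocks. A similar reindexing lemma is needed to substantiate the reduction to $CW_{k-1}$ in part (i). If a clean block-decomposition is elusive, I would fall back on a global potential-function argument tracking the number of zeros in prefixes, along the lines of the original Canfield--Williamson analysis cited in the statement.
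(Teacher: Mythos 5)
A preliminary remark: the paper offers no proof of this proposition at all --- it is imported from \cite{cw} as a known result --- so there is no in-paper argument to match yours against; what follows is an assessment of your proposal on its own merits. The closest thing the paper has to the needed machinery is Fact~\ref{f33}, the ones-counting lemma for shifted comparator families $S_{A,B,h}$, and that is indeed the tool your ``fallback'' gestures at. Your main route for part (i), however, does not work: stages $S_2,\ldots,S_k$ of $CW_k$ cannot be reindexed into a copy of $CW_{k-1}$, for the simple reason that the comparator counts disagree. Stage $S_3$ of $CW_k$ has $2^{k-1}-2^{k-3}=3\cdot 2^{k-3}$ comparators, while the stage of $CW_{k-1}$ it would have to become has only $2^{k-2}-2^{k-3}=2^{k-3}$ (and two disjoint copies of $CW_{k-1}$ give $2^{k-2}$, still wrong), so no relabelling of registers can realize the reduction. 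The viable argument is the potential-function one: your analysis of $S_1$ is correct, and each $S_{j+1}$ is a family of comparators from the odd subsequence to the even subsequence with index shift $2^{k-j-1}$, so it preserves sortedness of both subsequences and maps the pair of ones-counts $(m_{\mathrm{odd}},m_{\mathrm{even}})$ to $(\min(m_{\mathrm{odd}},m_{\mathrm{even}}+h),\max(m_{\mathrm{odd}}-h,m_{\mathrm{even}}))$ with $h=2^{k-j-1}$; the shifts $2^{k-2},\ldots,2,1$ drive the discrepancy from at most $2^{k-1}$ down to at most $1$ with the right sign, which yields sortedness. That should be the primary argument, not a contingency.

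For part (ii) the proposed invariant is false, already at its base case. Take $k=2$ and input $(1,1,0,0)$: stage $S_1=\{[0:1],[2:3]\}$ leaves it unchanged and $S_2=\{[1:2]\}$ produces $(1,0,1,0)$, so after one full pass neither aligned window of length $2$ is sorted (a second pass does sort it, consistent with the claim that $CW_2$ is a $2$-pass sorter). The same phenomenon occurs for every $k$: the long comparators of $S_2,\ldots,S_k$ destroy the pairwise order established by $S_1$, so ``every aligned window of length $2^t$ is sorted after $t$ passes'' fails at $t=1$ and the whole block-decomposition induction for (ii) collapses. Note also that (ii) does not follow from (i) by a one-line reduction, since after one pass an arbitrary input is not of the form ``two sorted sequences in the odd and even registers''; the sorting claim needs its own invariant (again phrased in terms of how the ones-counts and sortedness of the odd/even subsequences evolve across passes), which is precisely the content of the cited Canfield--Williamson analysis.
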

\begin{figure}%[ht]
\centering \epsfig{file=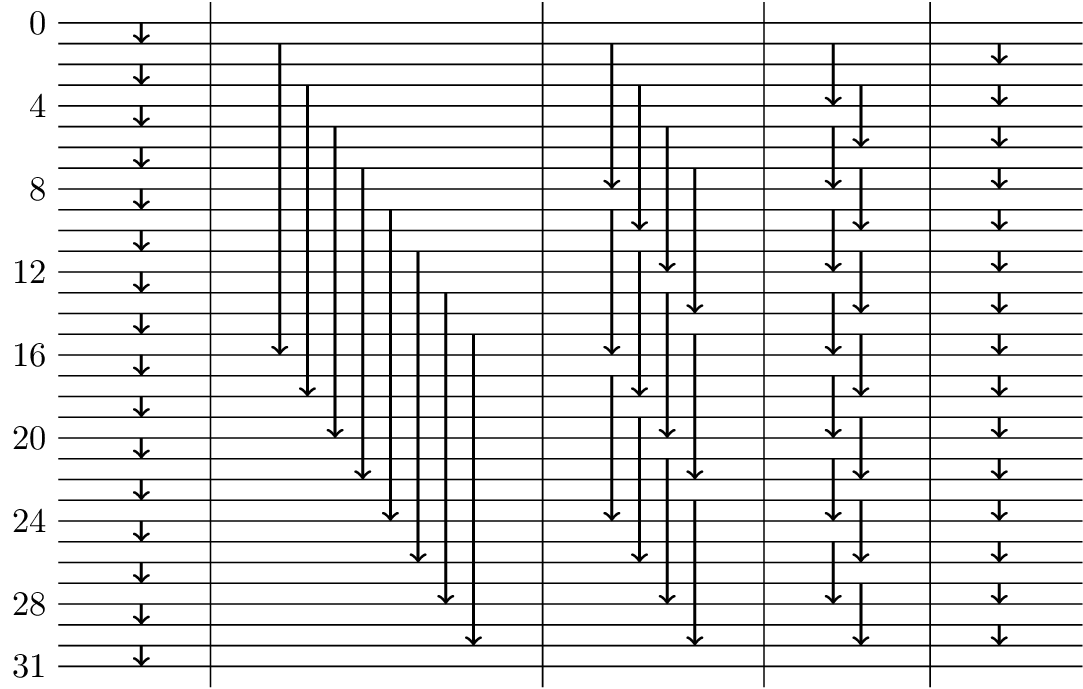, height=1.35in} 
\hspace{24mm}
\epsfig{file=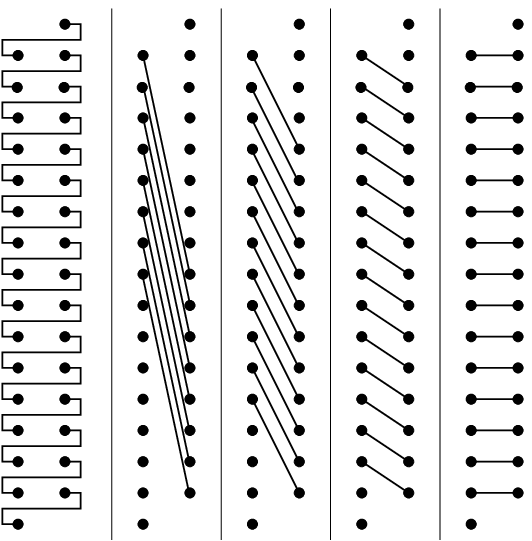, width=2in, height=1.35in} 
\caption{The Canfield and Williamson $(\log N)$-periodic sorter $CW_5$, where
$N=32$. On the left side, registers and comparators are represented by horizontal
lines and arrows, respectively. On the right side, registers and comparators are
represented by dots and edges, respectively. Stages are separated by vertical
lines.} 
\label{mergeCW}\label{anotherCW}
\end{figure}
We would like to implement a version of this network as a $p$-periodic comparator
network. We begin with the definition of an intermediate construction $P^p_k$
which structure is similar to the structure of $CW_k$. Then we transform it to
$p$-periodic network $M^p_k$. Observe that in any $N$-register merging network we
must have all {\em short} comparators $[i:i+1]$, $0 \le i < N-1$, and consecutive
short comparators $[i-1:i]$ and $[i:i+1]$ must be in different stages.  The idea
is to replace each register $i$ in $CW_k$ (except the first and the last ones)
with a sequence of $\lceil\frac{k-2}{p-2}\rceil$ consecutive registers, move the
endpoints of $i$-th group of $p-2$ {\em long} comparators one register further or
closer depending on the parity of $i$ and insert between each group of stages
containing long comparators a stage with short comparators joining the endpoints
of those long ones. The result is depicted in Fig. \ref {merge62}. In this way, we
obtain a network in which each register is used in at most $p$ consecutive stages.
Therefore the network $P^p_k$ can be packed into the first $p$ stages and used
periodically to get the desired $p$-periodic merging network. %
\begin{figure}%[ht]
\centering
\epsfig{file=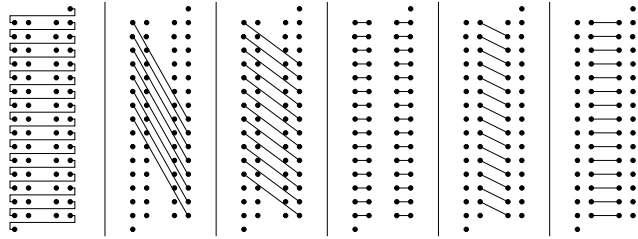, height=1.23in}
\caption{$P^4_5$ as an implementation of $CW_5$. Registers and comparators
  are represented by dots and edges, respectively. Stages are separated
  by vertical lines. Stages with short horizontal comparators are
  inserted between stages with long comparators.}
\label{merge62}
\end{figure}

A comparator $[i:j]$ is {\em standard} if $i<j$. All networks defined in this paper 
are built only of standard comparators. For an $N$-register
network $A = S_1,S_2,\ldots,S_d$, where $S_1,S_2,\ldots,S_{d}$ denote
stages, and for an integer $j\in\{1,\ldots,N\}$, we will use the
following notations:
\begin{align*}
   \fst(j,A) &= \min\braced{1\le i\le d: j \in regs(S_i)}\\
   \lst(j,A) &= \max\braced{1\le i\le d: j \in regs(S_i)}\\
   delay(A)  &= \max_{j\in\{1,\ldots,N\}}\braced{\lst(j,A)-\fst(j,A)+1}
\end{align*}
where $regs(\{[i_1:j_1], \ldots, [i_r:j_r]\})$ denotes the set
$\{i_1, j_1, \ldots, i_r, j_r\}$.

Let us define formally the new family of merging networks.  For each
$k\ge p\ge 4$ we would like to transform the network $CW_k$ into a new
network $P^p_k$. 
\begin{definition}\label{defMk}
  Let $n_k=2^{k-1}-1$ be one less than the half of the number of registers in
  $CW_k$, $b^p_k=2\lceil\frac{k-2}{p-2}\rceil$ and $D^p_k = k - 1 +
  \frac{b^p_k}{2}$. The number of registers of $P^p_k$ is defined to be 
  $N^p_k + 2$, where $N^p_k=n_k\cdot b^p_k$. The stages of $P^p_k =
  S^p_{k,1}\cup\{[0:1],[N_k:N_k+1]\}, S^p_{k,2}, \ldots, S^p_{k,D^p_k}$ are
  defined by the following equations:
\begin{align*}
S^p_{k,1} &= \braced{[b^p_ki:b^p_ki+1]: i=1,\ldots,n_k-1}\\ 
S^p_{k,j+s} &= \left\{[b^p_ki+j:b^p_k(i+2^{k-s-1}-1)+(b^p_k-j+1)]: 
                    i=0,\ldots,n_k-2^{k-s-1} \right\}\\ 
& \text{where } 1 \le j \le \frac{b^p_k}{2} \text{ and } (p-2)(j-1) < s 
                               \le \min((p-2)j, k-1);\\ 
S^p_{k,(p-1)j+1} &= \left\{[b^p_ki+j:b^p_ki+j+1],
               [b^p_ki+(b^p_k-j):b^p_ki+(b^p_k-j+1)]: i=0,\ldots,n_k-1 \right\},\\ 
& \text{where } 1 \le j \le \frac{k-2}{p-2}.
\end{align*}
\end{definition}
The networks $P^4_5$ and $P^4_6$ are depicted in Figures \ref{merge62} and 
\ref{merge46}, respectively.
\begin{fact}
$delay(P^p_k) = p$ for any $ k\ge p \ge 4$.  \qed
\end{fact}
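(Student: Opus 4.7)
My plan is to verify the fact by case analysis on the position of a register inside its group of $b^p_k$ consecutive registers, directly from Definition \ref{defMk}. Write each interior register as $r = b^p_k i + t$ with $0\le t\le b^p_k$, and let $b = b^p_k/2$. I would exploit the coarse structure of the stages: the short stages sit at indices $1$ and $(p-1)j + 1$ for $j = 1, \ldots, \lfloor (k-2)/(p-2)\rfloor$, so consecutive short stages are exactly $p - 1$ apart, and between them the $p - 2$ long stages $S^p_{k, j+s}$ for $(p-2)(j-1) < s \le \min((p-2)j, k-1)$ all share the same left-endpoint offset $j$ and right-endpoint offset $b^p_k - j + 1$.

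For a left-type offset $t \in \{1,\ldots,b\}$, I would collect the stages containing register $b^p_k i + t$: it is the left endpoint of long comparators with parameter $j = t$ (stages $(p-1)(t-1) + 2, \ldots, (p-1)t$, with a possible truncation only when $t = b$ and $k-2$ is not divisible by $p-2$), the left endpoint of a short comparator in stage $(p-1)t + 1$ when $t \le \lfloor(k-2)/(p-2)\rfloor$, and the right endpoint of a short comparator in stage $(p-1)(t-1) + 1$ (coming from $S^p_{k,1}$ when $t = 1$, and from $S^p_{k,(p-1)(t-1)+1}$ when $t \ge 2$). Assembling these, the stages form the contiguous window $(p-1)(t-1) + 1, \ldots, (p-1)t + 1$ of length exactly $p$ in the generic range $1 \le t \le \lfloor (k-2)/(p-2)\rfloor$. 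A completely symmetric calculation handles the right-type offsets $t \in \{b+1, \ldots, b^p_k\}$, using $j = b^p_k - t + 1$ in the long comparator formula and the right-half short comparators $[\cdot + (b^p_k - j) : \cdot + (b^p_k - j + 1)]$; this again yields a window of length $p$ in the symmetric generic range.

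It remains to check the degenerate cases. For the two extreme registers $0$ and $N^p_k + 1$, only the explicit comparators $[0:1]$ and $[N^p_k : N^p_k + 1]$ of stage $1$ apply, so their delay equals $1$. For the middle offsets $t = b$ and $t = b+1$ in the non-divisible case (when $(k-2)/(p-2)$ is not an integer), the terminating short stage $(p-1)b + 1$ does not exist, so a direct count of the surviving long stages together with the one remaining short stage yields $D^p_k - (p-1)(b-1) = k + p - 2 - (p-2)b$, which is at most $p - 1$ since $b \ge (k-2)/(p-2)$. Thus in every case $\lst - \fst + 1 \le p$. Finally, register $r = 1$ appears in stage $1$ (via $[0:1]$), in stages $2, \ldots, p-1$ as left endpoint of long comparators with $j = 1$ and $s = 1, \ldots, p-2$, and in stage $p$ as left endpoint of the short comparator $[1:2] \in S^p_{k,(p-1)+1}$, giving delay exactly $p$. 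Combining, $delay(P^p_k) = p$.

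The main obstacle will be bookkeeping rather than insight: one has to pair each register with every comparator type (the two extremal comparators, the $S^p_{k,1}$ family, both halves of each $S^p_{k,(p-1)j+1}$, and the long family indexed by $j$ and $s$) and verify that the resulting stage indices interleave tightly. The non-divisible case requires the most care, because the upper limit $\min((p-2)j, k-1)$ in the long-stage range switches from $(p-2)j$ to $k-1$ at $j = b$, shortening the window for those specific offsets; I would write out this boundary separately and note that the resulting delay only decreases, so the supremum $p$ is still attained at $r = 1$.
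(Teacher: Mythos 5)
Your verification is correct, and it is essentially the argument the paper intends: the Fact is stated there without proof, justified only by the preceding remark that by construction each register is used in at most $p$ consecutive stages, which is exactly what your window computation $(p-1)(t-1)+1,\ldots,(p-1)t+1$ (and its mirror image) makes precise, with equality witnessed at register $1$. One minor imprecision: in the non-divisible boundary case you should invoke the strict inequality $b^p_k/2 > (k-2)/(p-2)$, hence $(p-2)\,b^p_k/2 \ge k-1$ by integrality, to obtain the bound $p-1$; the non-strict $b^p_k/2 \ge (k-2)/(p-2)$ you cite only gives $\le p$, which however still suffices for the required upper bound.
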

Let $A = S_1,S_2,\ldots,S_d$ and $A' = S'_1,S'_2,\ldots,S'_{d'}$ be
$N$-input comparator networks such that for each $i$, $1\le i
\le\min(d,d')$, $regs(S_i) \cap regs(S'_i) = \emptyset$.  Then $A\cup
A'$ is defined to be a network with stages $(S_1 \cup S'_1), (S_2 \cup
S'_2), \ldots, (S_{\max(d,d')} \cup S'_{\max(d,d')})$, where empty
stages are added at the end of the network of smaller depth.

For any comparator network $A=S_1,\ldots,S_d$ and
$D = delay(A)$, let us define a network $B=T_1,\ldots,T_D$ to be a {\em 
compact form} of $A$, where $T_q=\bigcup\braced{S_{q+pD}: 0\le p\le(d-q)/D}$, 
$1\le q\le D$. Observe that $B$ is correctly defined due to the delay
of $A$.  Moreover, $depth(B)=delay(B)=delay(A)$.
\begin{definition}
For $k\ge p\ge 4$ let $M^p_k$ denote the compact form of $P^p_k$ with the first
and the last registers deleted. That is, the network $M^p_k = T^{p,k}_1,\ldots,
T^{p,k}_p$ is using the set of registers numbered $\{1,2, \ldots, N^p_k\}$, where
$N^p_k=n_k\cdot b^p_k$, $n_k = 2^{k-1}-1$, $b^p_k = 2\lceil\frac{k-2}{p-2}\rceil$,
and for $j=1, \ldots, p$ the stage $T^{p,k}_j$ is defined as
$\bigcup\{S^p_{k,j+pi}: 0 \le i \le \frac{D^p_k-j}{p}\}$, where $D^p_k = k-1 + 
\frac{b^p_k}{2}$.
\end{definition}
It is not necessary to delete the first and the last registers of $P^p_k$ but 
this will simplify proofs a little bit in the next section. The network 
$M^4_5$ is given in Fig. \ref{merge3p}.
\begin{figure}%[ht]
\centering
\epsfig{file=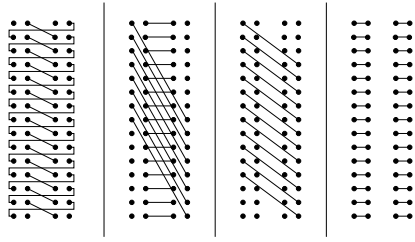, height=1.23in}
\caption{The $M^4_5$ network that is $4$-periodic.}
\label{merge3p}
\end{figure}
\begin{theorem} \label{3merger}\label{3MERGER} 
  For any $p\ge 4$ there exists a family of $p$-periodic comparator networks
  $M^p_k$, $k\ge p$, such that each $M^p_k$ is a $p$-periodic, $(b^p_k-1)$-pass
  merger of two sorted sequences given in odd and even registers, respectively.
  The running time of $M^p_k$ is $p(b_k-1) \le \frac{2p}{p-2}k +
  p\frac{p-8}{p-2} \le \frac{2p}{p-2} \log N^p_k + p\frac{p-8}{p-2}$, where $b^p_k
  = 2\lceil\frac{k-2}{p-2}\rceil$ and $N^p_k = (2^{k-1}-1)\cdot b^p_k$ is the
  number of registers in $M^p_k$.
\end{theorem}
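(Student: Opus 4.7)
The plan is to apply the 0-1 principle, work with $P^p_k$ in place of $M^p_k$, and track column-wise counts of 1s in a matrix layout over successive passes.

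First, by the 0-1 principle it is enough to consider 0-1 inputs in which the odd-indexed registers form a monotone sequence and the even-indexed registers form a monotone sequence, and to show that after $b^p_k - 1$ passes the contents are sorted. Since $M^p_k$ is by construction the compact form of $P^p_k$, every $p$ stages of $M^p_k$ realise exactly the same set of comparator operations as one run of $P^p_k$, so I can equivalently analyse $P^p_k$ iterated $b^p_k - 1$ times.

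Next, I would arrange the $N^p_k$ registers as an $n_k \times b^p_k$ matrix by placing register $b^p_k i + r$ at column $i$ (for $0 \le i < n_k$) and row $r$ (for $1 \le r \le b^p_k$), so that each column plays the role of a ``stretched'' register of $CW_k$. Under this layout the stages of $P^p_k$ fall into three easy-to-describe classes: $S^p_{k,1}$ consists of inter-column short comparators at the boundaries of consecutive columns; the stages $S^p_{k,(p-1)j+1}$ are purely intra-column short comparators on rows $(j,j+1)$ and $(b^p_k-j,b^p_k-j+1)$; and the remaining stages $S^p_{k,j+s}$ are long comparators linking row $j$ of column $i$ with row $b^p_k-j+1$ of column $i+2^{k-s-1}-1$, mirroring the edges of stage $S_{s+1}$ of $CW_k$. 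The input hypothesis translates into a highly structured initial pattern of 0s and 1s on this matrix.

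The heart of the proof is an induction on the pass number. Let $c_i$ denote the number of 1s in column $i$. The invariant I plan to maintain after each completed pass is, roughly: (a)~every column is internally sorted, 0s above 1s; and (b)~the sequence $c_0,\ldots,c_{n_k-1}$ is monotone except inside a ``transition window'' whose width starts at most $b^p_k$ and shrinks by one per pass, reaching $0$ after $b^p_k-1$ passes. Establishing the inductive step requires tracing, stage by stage within a single pass of $P^p_k$, how the block of long-comparator stages associated with each row $j$ (shifts $s=(p-2)(j-1)+1,\ldots,(p-2)j$) performs the $CW_k$-like routing of 1s between distant columns, while the interleaved short stage at index $(p-1)j+1$ re-sorts each column so that the block for the next row $j+1$ operates on correctly ordered data. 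I expect this bookkeeping to be the main obstacle: the period-$p$ packing distributes the $CW_k$-like long comparators across several stages of one pass, so the invariant must be verified at each intermediate stage and not merely between passes, and the interaction of the boundary short comparators of $S^p_{k,1}$ with the deeper long comparators in the same $M^p_k$-stage requires care.

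Once correctness is established, the timing claim is immediate: $M^p_k$ has period $p$ and runs $b^p_k - 1$ passes, giving $p(b^p_k - 1)$ total stages, and the two closed-form bounds follow from the elementary estimate $\lceil (k-2)/(p-2)\rceil \le (k+p-5)/(p-2)$ and from $N^p_k = (2^{k-1}-1)b^p_k$, which yields $k \le \log N^p_k$ up to additive lower-order terms absorbed into the $p(p-8)/(p-2)$ summand.
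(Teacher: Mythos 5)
Your proposal reproduces the paper's outer shell (0--1 principle, matrix layout, tracking counts of ones per column, then the easy timing arithmetic), but the two steps that carry all the weight are either wrong or missing. First, the reduction ``every $p$ stages of $M^p_k$ realise exactly the same set of comparator operations as one run of $P^p_k$, so I can analyse $P^p_k$ iterated $b^p_k-1$ times'' is not justified and is false as stated: stage $T^{p,k}_j$ of the compact form applies $S^p_{k,j},S^p_{k,j+p},S^p_{k,j+2p},\dots$ \emph{simultaneously}, so one period of $M^p_k$ applies the comparators of $P^p_k$ in a different order, and repeated periods process the data in a pipelined fashion; for comparator networks the order of stages matters, so you cannot simply substitute $(P^p_k)^{b^p_k-1}$ for $(M^p_k)^{b^p_k-1}$. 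The paper never makes this reduction -- it analyses the stages $T^{p,k}_1,\dots,T^{p,k}_p$ of $M^p_k$ directly.

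Second, your matrix is the transpose of the one that makes the problem tractable, and your invariant is neither sufficient nor established. You group the registers into $n_k=2^{k-1}-1$ blocks of size $b^p_k$ (one per stretched $CW_k$ register) and track $n_k$ counts $c_0,\dots,c_{n_k-1}$; the long comparators couple blocks at distance $2^{k-s-1}-1$, so the evolution of this exponentially long count vector is non-local and there is no reason the ``transition window'' shrinks by one per pass (the $CW_k$-style stages halve disorder regions, and the $b^p_k/2$ rows are at different phases of the merge at any given time because of the pipelining). Moreover ``monotone outside a window'' does not imply sortedness: you would also need the counts outside the window to be saturated at $0$ or $b^p_k$. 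The paper instead takes as columns the $b^p_k$ sets $C^{p,k}_j=\{j+ib^p_k\}$ (one per row position), proves each such column stays sorted, and reduces the whole computation to iterating $p$ explicit $\min/\max$ maps on a vector of length $b^p_k$ only; the convergence of that short vector to a flat sequence is then proved via balanced sequences, reduction to half-length sequences, interval tracking, and a monotonicity sandwich for the unbalanced case. That entire analysis -- Subsections 3.1--3.3, which is the substance of the theorem -- is exactly the ``bookkeeping'' your proposal defers, so as written the proof has a gap precisely where the difficulty lies.
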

This is the main theorem of the paper. The rest of paper is devoted to its
proof, which is based on the general observation that $M^p_k$ merges
$\lceil\frac{k-2}{p-2}\rceil$ pairs of sorted subsequences, one after another,
in pipeline fashion. Details are given in the next section.

\section{Proof of Theorem \ref{3merger}}\label{sec3} %******************** 

The first observation we would like to make is that we can consider
inputs consisting only of 0's and 1's. The well-known Zero-One Principle
states that any comparator network that sorts 0-1 input sequences
correctly sorts also arbitrary input sequences \cite{k}.  In the similar
way, one can prove that the same property holds also for merging:
\begin{proposition}
If a comparator network merges any two 0-1 sorted sequences, then it
correctly merges any two sorted sequences. \qed 
\end{proposition}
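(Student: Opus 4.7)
The plan is to mirror the proof of the classical Zero-One Principle for sorting networks, with the only adaptation being the verification that the reduction respects the ``two sorted sequences'' hypothesis. I would argue by contrapositive: assuming the network fails to merge some pair of sorted sequences, I would exhibit two sorted 0-1 sequences on which it also fails.

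The core tool is the commutation of comparators with monotone non-decreasing functions. First I would prove the single-comparator lemma: for any monotone $f$ and any inputs $a,b$, one has $\min(f(a),f(b)) = f(\min(a,b))$ and $\max(f(a),f(b)) = f(\max(a,b))$. A straightforward induction on the number of stages then shows that for any comparator network $A$ and any monotone $f$, applying $A$ to $f(x_1),\ldots,f(x_N)$ yields $f(y_1),\ldots,f(y_N)$, where $y_1,\ldots,y_N$ is the output of $A$ on $x_1,\ldots,x_N$. This is the standard ingredient used in the proof of the sorting version.

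Now suppose $A$ merges every pair of sorted 0-1 sequences but fails on some pair of sorted sequences $X,Y$ stored in the odd and even registers. Then the output $y_1,\ldots,y_N$ contains an adjacent inversion, i.e.\ there is an index $i$ with $y_i > y_{i+1}$. I would pick a threshold $t$ with $y_{i+1} \le t < y_i$ and take the threshold function $f_t(x) = 0$ if $x \le t$ and $f_t(x)=1$ otherwise. Since $f_t$ is monotone non-decreasing, applying it coordinatewise to $X$ and $Y$ yields two sorted 0-1 sequences placed in the odd and even registers. By the commutation property, the output of $A$ on this 0-1 input is $f_t(y_1),\ldots,f_t(y_N)$, and by the choice of $t$ this output has $f_t(y_i)=1$ in position $i$ followed by $f_t(y_{i+1})=0$ in position $i+1$, hence it is not sorted. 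This contradicts the assumption that $A$ merges every pair of sorted 0-1 sequences.

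The only point that requires any care beyond the standard sorting argument is checking that the coordinatewise image under $f_t$ of a sorted sequence is again sorted, which is immediate from monotonicity; so the reduction genuinely preserves the ``two sorted sequences'' input hypothesis, and no real obstacle arises. The commutation lemma is the one place a careful (but entirely routine) induction is needed.
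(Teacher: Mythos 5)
Your proof is correct and is precisely the standard Zero-One Principle argument (comparators commute with monotone threshold functions, argue by contrapositive) that the paper invokes when it says the merging version can be proved ``in the similar way''; the paper itself omits the details. Your added check that thresholding preserves sortedness of the two input halves is exactly the one adaptation needed, so nothing is missing.
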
 
It follows that we can analyze computations of the network $M^p_k$, $k\ge p\ge 4$,
by describing each state of registers as a 0-1 sequence $\overline{x}=(x_1,
\ldots, x_{N^p_k})$, where $x_i$ represents the content of register $i$. If
$\overline{x}$ is an input sequence for $b^p_k-1$ passes
of $M_k$, then by $\overline{x}^{(i)}$ we denote the content of registers after
$i$ passes of $M^p_k$, $i=0, \ldots, b^p_k-1$,, that is,
$\overline{x}^{(0)} = \overline{x}$ and $\overline{x}^{(i+1)} =
M^p_k(\overline{x}^{(i)})$. Since $M^p_k$ consists of $p$ stages $T^{p,k}_1$,
\ldots, $T^{p,k}_p$, we extend the notation to describe the output of each stage:
$\overline{x}^{(i,0)} = \overline{x}^{(i)}$ and $\overline{x}^{(i,j)} =
T^{p,k}_j(\overline{x}^{(i,j-1)})$, for $j=1, \ldots, p$. For other values of $j$
we assume that $\overline{x}^{(i,j)} = \overline{x}^{(i+j\div p,j\bmod p)}$. We
will use this superscript notation for other equivalent representations of
sequence $\overline{x}$.

Now let us fix some technical notations and definitions. A 0-1
sequence can be represented as a word over $\Sigma=\{0,1\}$. A
non-decreasing (also called {\em sorted}) 0-1 sequence has a form of
$0^*1^*$ and can be equivalently represented by the number of ones (or
zeros) in it. For any $x\in\Sigma^*$ let $ones(x)$ denote the number
of $1$ in $x$.  If $x\in\Sigma^n$ then $x_i$, $1\le i\le n$, denotes
the $i$-th letter of $x$. Generally, for $A=\{i_1,\ldots,i_m\}, 1\le
i_1<\ldots,<i_m\le n$, let $x_A$ denotes the word $x_{i_1}\ldots
x_{i_m}$. We say that a sequence $\overline{x} = (x_1, \ldots,
x_{N_k})$ is {\em 2-sorted} if both $(x_1, x_3, \ldots, x_{N_k-1})$
and $(x_2, x_4, \ldots, x_{N_k})$ are sorted.

The roadmap of the proof in the next three subsections is as follows:
\begin{enumerate}
\item In Subsection \ref{ss:1} we reduce the analysis of periodic applications of
our $p$ stages to a 0-1 input to an analysis of periodic applications of $p$ quite
simple functions to a short sequence of integers representing the numbers of ones
in columns. 
\item In Subsection \ref{ss:2} we start the analysis of computations on sequences
with, so called, {\em balanced} sequences. A sequence $(c_1, c_2, \ldots, c_{2n})$
is called balanced if $c_1+c_{2n} = c_2 + c_{2n-1} = \ldots = c_n + c_{n+1}$.
Being balanced is preserved by the simple functions.
\item In subsection \ref{ss:3} we use balanced sequences as upper and lower bounds
on unbalanced sequences. The analysed functions are monotone.
\end{enumerate}

\subsection{Reduction to Analysis of Columns}\label{ss:1} %************************

For any $k\ge p\ge 4$ let $n_k=2^{k-1}-1$, $b^p_k=2\lceil\frac{k-2}{p-2}\rceil$
(thus $N^p_k=n_k\cdot b^p_k$) and $D^p_k = k-1 + \frac{b^p_k}{2}$. The set of
registers $Reg^p_k=\{1,\ldots,N^p_k\}$ can be analysed as an $n_k\times b^p_k$
matrix with $C^{p,k}_j=\{j+ib^p_k: 0\le i<n_k\}$, $j=1,\ldots,b^p_k$, as columns.
A content of all registers in the matrix, that is $x\in\Sigma^{N_k}$, can be
equivalently represented by the sequence of contents of registers in $C^{p,k}_1$,
$C^{p,k}_2$, \ldots, $C^{p,k}_{b^p_k}$, that is $(x_{C^{p,k}_1}, \ldots,
x_{C^{p,k}_{b^p_k}})$.  Since $b^p_k$ is an even number, the following fact is
obviously true. %
\begin{fact} If $x\in\Sigma^{N^p_k}$ is 2-sorted then each $x_{C^{p,k}_j}$,
  $j=1,\ldots,b^p_k$, is sorted. \qed
\end{fact}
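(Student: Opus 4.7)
The plan is to unfold the definitions of $C^{p,k}_j$ and of ``2-sorted'' and argue that the single crucial ingredient is the parity of $b^p_k$. I would first observe that $b^p_k = 2\lceil\frac{k-2}{p-2}\rceil$ is even by construction, so that every element of $C^{p,k}_j = \{j + i b^p_k : 0 \le i < n_k\}$ shares the parity of $j$. A short sanity check confirms that these indices all lie in $\{1,\ldots,N^p_k\}$, since the largest one is $j + (n_k-1)b^p_k \le n_k b^p_k = N^p_k$.

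With that in hand, $C^{p,k}_j$, listed in increasing order, is a strictly increasing subsequence of the odd positions $\{1,3,\ldots,N^p_k-1\}$ when $j$ is odd, and of the even positions $\{2,4,\ldots,N^p_k\}$ when $j$ is even. Since $x$ is 2-sorted, the restrictions $(x_1,x_3,\ldots,x_{N^p_k-1})$ and $(x_2,x_4,\ldots,x_{N^p_k})$ are both non-decreasing. Restricting a non-decreasing sequence to a subset of its positions while preserving the order yields another non-decreasing sequence, and this is precisely how $x_{C^{p,k}_j}$ is obtained from the appropriate parity class. Hence $x_{C^{p,k}_j}$ is sorted for every $j$.

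The only step that needs genuine attention is the parity of $b^p_k$; everything else is a routine unpacking of definitions, which is why the statement is tagged with \qed already. Note that if $b^p_k$ had been odd, consecutive entries of a column would alternate between odd and even registers and neither half of the 2-sorted guarantee would be available; the explicit factor of $2$ in $b^p_k$ is exactly what makes columns respect the 2-sorted structure.
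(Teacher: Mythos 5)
Your argument is correct and is exactly the reasoning the paper relies on: it states the fact without proof, remarking only that it is ``obviously true'' because $b^p_k$ is even, which is precisely the parity observation you identify as the crucial ingredient. Your elaboration --- that each column is an order-preserving selection from one parity class, and that 2-sortedness makes each parity class non-decreasing --- is a faithful unpacking of that same idea.
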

That is, the columns are sorted at the beginning of a computation of
$b^p_k-1$ passes of $M^p_k$. The first lemma we would
like to prove is that columns remain sorted after each stage of the computation.
We start with a following technical fact:
\begin{fact}\label{f33}
Let $A=\{a_1,\ldots,a_n\}$ and $B=\{b_1,\ldots,b_n\}$ be subsets of
$\{1,\ldots,N^p_k\}$ such that $a_1 < b_1 < a_2 < b_2 < \ldots < a_n <
b_n$. Let $h\ge 0$ and $S_{A,B,h}=\{[a_i:b_{i+h}]: 1\le i \le
n-h\}$. Then for any $x\in\Sigma^{N^p_k}$ such that $x_A$ and $x_B$
are sorted, the output $y=S_{A,B,h}(x)$ has the following properties:
\begin{enumerate}[(i)]
\item $y_A$ and $y_B$ are sorted.
\item Let $m_1=ones(x_A)$ and $m_2=ones(x_B)$. Then $ones(y_A) =
  \min(m_1,m_2+h)$ and $ones(y_B) = \max(m_1-h,m_2)$.
\end{enumerate}
\end{fact}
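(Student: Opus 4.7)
The plan is to exploit two structural observations about $S_{A,B,h}$: first, the comparators are ``aligned'' in the sense that if we list the touched positions of $A$ in order they are $a_1,\ldots,a_{n-h}$ and the touched positions of $B$ in order they are $b_{h+1},\ldots,b_n$, while $a_{n-h+1},\ldots,a_n$ and $b_1,\ldots,b_h$ are untouched; second, the comparator $[a_i:b_{i+h}]$ writes $\min(x_{a_i},x_{b_{i+h}})$ into $a_i$ and $\max(x_{a_i},x_{b_{i+h}})$ into $b_{i+h}$. Since $x_A$ and $x_B$ are sorted $0^*1^*$ sequences, these observations reduce everything to elementary counting and monotonicity.

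For part (i), I would prove $y_A$ is sorted by checking $y_{a_i}\le y_{a_{i+1}}$ in three regimes. For $i,i+1\le n-h$ both entries are minima of pairs, and the pointwise inequalities $x_{a_i}\le x_{a_{i+1}}$ and $x_{b_{i+h}}\le x_{b_{i+h+1}}$ give the result via monotonicity of $\min$. At the transition $i=n-h$ one has $y_{a_{n-h}}=\min(x_{a_{n-h}},x_{b_n})\le x_{a_{n-h}}\le x_{a_{n-h+1}}=y_{a_{n-h+1}}$, using sortedness of $x_A$. For $i\ge n-h+1$ both entries are untouched and the inequality is inherited from $x_A$. The argument for $y_B$ is symmetric, with $\max$ replacing $\min$ and the boundary inequality $y_{b_h}=x_{b_h}\le x_{b_{h+1}}\le \max(x_{a_1},x_{b_{h+1}})=y_{b_{h+1}}$.

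For part (ii), having established sortedness it suffices to count ones. Writing $x_A=0^{n-m_1}1^{m_1}$ and $x_B=0^{n-m_2}1^{m_2}$, a touched position $a_i$ with $1\le i\le n-h$ becomes $1$ precisely when $x_{a_i}=x_{b_{i+h}}=1$, i.e.\ $i>\max(n-m_1,\,n-m_2-h)$; the number of such $i$ in $\{1,\ldots,n-h\}$ is $\max(0,\min(m_1,m_2+h)-h)$. The untouched block $a_{n-h+1},\ldots,a_n$ contributes $\min(h,m_1)$ ones. A short case split on whether $\min(m_1,m_2+h)\le h$ or not collapses the sum to $\min(m_1,m_2+h)$. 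Finally, $ones(y_B)=\max(m_1-h,m_2)$ drops out for free from conservation of ones, $ones(y_A)+ones(y_B)=m_1+m_2$. The only nontrivial step is this case split; it is routine and the identities $\min(a,b)+\max(a,b)=a+b$ and $a-\min(a,b)=\max(0,a-b)$ make it painless.
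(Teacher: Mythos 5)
Your proof is correct and follows essentially the same route as the paper: part (i) is the identical three-regime monotonicity argument, and part (ii) is the same elementary counting exploiting the $0^*1^*$ form of $x_A$ and $x_B$. The only cosmetic difference is in (ii), where you tally the ones of $y_A$ directly (touched block plus untouched tail) and recover $ones(y_B)$ from conservation of ones, whereas the paper splits on $m_1\le m_2+h$ versus $m_1>m_2+h$ and counts the $1$--$0$ exchanges; the two computations are equivalent.
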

\begin{proof}
To prove (i) we show only that $y_{a_i}\le
y_{a_{i+1}}$ for $i=1,\ldots,n-1$. If $1\le i<n-h$ then $y_{a_i} =
\min(x_{a_i},x_{b_{i+h}}) \le \min(x_{a_{i+1}},x_{b_{i+h+1}}) =
y_{a_{i+1}}$ since $\min$ is a non-decreasing function and both $x_A$
and $x_B$ are sorted . If $i=n-h$ then $y_{a_i} =
\min(x_{a_i},x_{b_{i+h}}) \le x_{a_{i+1}} = y_{a_{i+1}}$. For $i>n-h$
we have $y_{a_i} = x_{a_i} \le x_{a_{i+1}} = y_{a_{i+1}}$.

To prove (ii) let $m'_1=\min(m_1,m_2+h)$ and $m'_2=\max(m_1-h,m_2)$.
We consider two cases. If $m_1\le m_2+h$ then $m_1-h\le m_2$ and we
get $m'_1=m_1$ and $m'_2=m_2$. In this case no comparator from
$S_{A,B,h}$ exchanges 0 with 1. To see this assume a.c. that a
comparator $[a_i:b_{i+h}]$ exchanges $x_{a_i}=1$ with
$x_{b_{i+h}}=0$. Then $i>n-m_1$ and $i+h\le n-m_2$ hold because of the
definitions of $m_1$ and $m_2$. It follows that $n-m_1 < n-m_2-h$,
thus $m_1-h > m_2$ --- a contradiction. If $m_1 > m_2+h$ then $m'_1 =
m_2+h$ and $m'_2=m_1-h$. In this case let us observe that a comparator
$[a_i:b_{i+h}]$ exchanges $x_{a_i}=1$ with $x_{b_{i+h}}=0$ if and only
if $m_2+h \le n-i < m_1$. Therefore $ones(y_A) = m_1-(m_1-m_2-h) =
m_2+h$ and $ones(y_B) = m_2+(m_1-m_2-h) = m_1-h$. \qed
\end{proof}

Since now on we continue the proof for a fixed value $p\ge 4$ and omit $p$ in 
superscripts/subscripts of our denotations, for example, we write $M_k$ instead of 
$M^p_k$.

According to the definition of $M_k$, it consists of stages $T^k_1, \ldots, T^k_p$,
where $T^k_i = \bigcup\{S_{k,i+pj}: 0 \le j\le \frac{D_k-i}{p}\}$ (sets $S_{k,j}$ 
are defined in Def.~\ref{defMk}). Using the notation from Fact \ref{f33}, the
following fact is an easy consequence of Definition \ref{defMk}.
\begin{fact}\label{f34} Let $L_i=C^k_i$ and $R_i=C^k_{b_k-i+1}$ denote the
  corresponding left and the right columns of registers, and
  $h_i=2^{k-i-1}-1$, $i=1,\ldots,\frac{b_k}{2}$. Then
\begin{enumerate}[(i)]
\item $regs(S_{k,1})\subseteq L_1\cup R_1$ and $S_{k,1} =
  S_{R_1-\{N_k\},L_1-\{1\},0}$;
\item $regs(S_{k,j+s})\subseteq L_j\cup R_j$ and
  $S_{k,j+s} = S_{L_j,R_j,h_s}$, for $j=1,\ldots,\frac{b_k}{2}$ and $(p-2)(j-1) < 
  s \le \min((p-2)j, k-1)$;
\item $regs(S_{k,(p-1)j+1})\subseteq L_j\cup L_{j+1}\cup R_{j+1}\cup
  R_j$ and $S_{(p-1)j+1} = S_{L_j,L_{j+1},0}\cup S_{R_{j+1},R_j,0}$, for
  $j=1,\ldots,\frac{b_k}{2}-1$;
\item $regs(S_{k,D_k})\subseteq L_{b_k/2}\cup R_{b_k/2}$ and
  $S_{k,D_k} = S_{L_{b_k/2},R_{b_k/2},0}$;
\item if $(L_j\cup R_j)\cap regs(S_{k,i})\neq\emptyset$ then $(p-1)(j-1)+1\le
  i\le \min((p-1)j+1,D_k)$, for any $j=1,\ldots,\frac{b_k}{2}$.  \qed
\end{enumerate} 
\end{fact}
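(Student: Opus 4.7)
The plan is a bookkeeping verification against the column definition. The whole proof rests on the single observation that every register index $r \in \{1, \ldots, N_k\}$ decomposes uniquely as $r = qb_k + j$ with $0 \le q < n_k$ and $1 \le j \le b_k$, placing $r$ in $C_j$ (and hence in $L_j$ if $j \le b_k/2$ or in $R_{b_k - j + 1}$ otherwise). I would state this first, then identify the column of each comparator endpoint in Definition~\ref{defMk} by reading off its residue modulo $b_k$.

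For part (i), the endpoints $b_k i$ and $b_k i + 1$ have residues $b_k$ and $1$, so they lie in $R_1$ and $L_1$; as $i$ ranges in $1, \ldots, n_k - 1$ we miss exactly the extreme registers $1$ and $N_k$, giving $R_1 \setminus \{N_k\}$ and $L_1 \setminus \{1\}$. For part (ii) the right endpoint simplifies as
\[
b_k(i + 2^{k-s-1} - 1) + (b_k - j + 1) = b_k(i + 2^{k-s-1}) - j + 1,
\]
whose residue $b_k - j + 1$ places it in $R_j$; indexing $L_j$ by $a_m = j + (m-1)b_k$ and $R_j$ by $b_m = b_k m - j + 1$ rewrites the comparator as $[a_{i+1} : b_{i + 2^{k-s-1}}]$, matching $S_{L_j, R_j, h_s}$ with $h_s = 2^{k-s-1} - 1$. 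In every case, the interleaving condition $a_1 < b_1 < a_2 < b_2 < \ldots$ required by $S_{A, B, h}$ is a one-line check from $1 \le j \le b_k/2$ and the evenness of $b_k$. Parts (iii) and (iv) follow the same recipe; (iv) needs a split into cases depending on whether $(k-2)/(p-2)$ is an integer, because in the integer case $S_{k, D_k}$ arises from the short-comparator template at $j = b_k/2$, where $L_{j+1} = R_j$ and the two subsets $S_{L_j, L_{j+1}, 0}$ and $S_{R_{j+1}, R_j, 0}$ collapse into $S_{L_{b_k/2}, R_{b_k/2}, 0}$, whereas in the non-integer case it arises from the long-comparator template with $j = b_k/2$, $s = k - 1$, so $h_s = 2^0 - 1 = 0$.

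For part (v), I would fix $j$ and enumerate every stage in Definition~\ref{defMk} that can touch $L_j \cup R_j$: the long-comparator stages $S_{k, j + s}$ contribute stage indices in the range $((p-1)(j-1) + 1, \min((p-1)j, D_k)]$; the short-comparator stages $S_{k, (p-1)j' + 1}$ with $j' \in \{j - 1, j\}$ contribute the endpoints $(p-1)(j-1) + 1$ and, when present, $(p-1)j + 1$; and the boundary stages $S_{k, 1}$ and $S_{k, D_k}$ lie inside the same interval at $j = 1$ and $j = b_k/2$ respectively. The union is exactly $[(p-1)(j-1) + 1, \min((p-1)j + 1, D_k)]$.

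The main obstacle is not mathematical difficulty but purely keeping residues, stage-index ranges, and boundary cases straight. Aside from the single algebraic simplification displayed above, every claim reduces to a mechanical residue-and-range check; the collapsing identity $L_{b_k/2 + 1} = R_{b_k/2}$ and the integer/non-integer split at $s = k - 1$ in (iv) are the only subtle points worth flagging explicitly.
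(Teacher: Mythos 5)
Your proposal is correct: the paper gives no proof of this fact, dismissing it as ``an easy consequence of Definition~\ref{defMk}'', and your residue-and-range bookkeeping (unique decomposition $r=qb_k+j$, re-indexing $L_j$ and $R_j$ to match the $S_{A,B,h}$ template, and the integer/non-integer split for $S_{k,D_k}$) is precisely the verification the author leaves to the reader. The boundary observations you flag --- the collapse $C_{b_k/2+1}=R_{b_k/2}$ and the fact that $h_{k-1}=0$ --- are exactly the right subtleties to make explicit.
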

\begin{lemma} \label{l35} If the initial content of registers is a
  2-sorted 0-1 sequence $x$ then after each stage of multi-pass
  computation of $M_k=T^k_1, \ldots,T^k_p$ the content of each column $C^k_j$,
  $j=1,\ldots,b_k$, is sorted, that is, each $(x^{(s,i)})_{C^k_j}$ is of
  the form $0^*1^*$, $s=0,\ldots$, $i=1, \ldots, p$.
\end{lemma}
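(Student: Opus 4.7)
The plan is to prove the invariant by induction on the number of stages of $M_k$ that have been executed, counting stages across arbitrarily many passes. The base case (no stages applied yet) holds by the Fact preceding the lemma: a $2$-sorted input $x$ makes each $x_{C^k_j}$ sorted because $b_k$ is even. For the inductive step, assume that every column is sorted before applying some stage $T^k_i$ of $M_k$; I will show that every column is sorted afterwards as well.

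The heart of the argument is to decompose $T^k_i = \bigcup_{j\ge 0} S_{k,i+pj}$ into its sub-stages and observe that (a)~different sub-stages touch disjoint sets of registers, so $T^k_i$ can be analysed one sub-stage at a time, and (b)~each sub-stage preserves the sortedness of the columns it acts on. Point (a) follows from Fact \ref{f34}(v): any sub-stage $S_{k,s}$ touches only columns indexed inside the window $(p-1)(j'-1)+1 \le s \le (p-1)j'+1$ of length $p-1$. Since $p-1<p$, two sub-stage indices $s$ and $s+p$ cannot lie in the same window and hence correspond to different $j'$'s, which use disjoint sets of columns.

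For (b), Fact \ref{f34}(i)--(iv) classifies each sub-stage as either $S_{A,B,h}$ with $A,B$ (subsets of) a pair of mirror columns $L_{j'},R_{j'}$, or as a disjoint union $S_{L_{j'},L_{j'+1},0}\cup S_{R_{j'+1},R_{j'},0}$ on two pairs of adjacent columns. In every such case the two sets involved are arithmetic progressions with common difference $b_k$, and a direct check using $j'\le b_k/2$ shows that their elements strictly interleave in the pattern $a_1<b_1<a_2<b_2<\cdots$ required by Fact \ref{f33}; for example, for $L_{j'}$ and $R_{j'}$ one checks $j' < b_k-j'+1 < j'+b_k < b_k-j'+1+b_k < \cdots$. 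Since the inductive hypothesis guarantees that each column fed into the sub-stage is already sorted, Fact \ref{f33}(i) yields that the affected columns remain sorted. Columns not touched by the sub-stage are unchanged. Collecting this over all sub-stages of $T^k_i$ gives the inductive step.

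The main obstacle is purely bookkeeping: one must verify the strict-interleaving hypothesis of Fact \ref{f33} separately in each of the four templates of Fact \ref{f34}, paying particular attention to the boundary sub-stages $S_{k,1}$ (where the endpoints $1$ and $N_k$ of $L_1$ and $R_1$ are excluded) and $S_{k,D_k}$. Once those case checks are done, no further calculation is needed; monotonicity of compare-exchange, repackaged as Fact \ref{f33}(i), does the rest.
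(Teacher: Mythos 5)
Your proof is correct and follows essentially the same route as the paper's: induction over stages, decomposing each $T^k_i$ into the disjoint sub-stages $S_{k,i+pj}$ via Fact \ref{f34}, and invoking Fact \ref{f33}(i) to see that each sub-stage maps sorted columns to sorted columns. You merely spell out the interleaving checks and the boundary cases ($S_{k,1}$, $S_{k,D_k}$) that the paper leaves implicit.
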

\begin{proof} By induction it suffices to prove that for each sequence
  $y\in\Sigma^{N_k}$ with sorted columns $C^k_j$, $j=1,\ldots,b_k$, the
  outputs $z_i=T^k_i(y)$, $i=1, \ldots, p$ have also the columns sorted. Since
  each $T^k_i$, as a mapping, is a composition of mapping $S_{k,i+pj}, 0 \le
  j\le \lfloor \frac{D_k-i}{p} \rfloor$, each of which, due to Facts 
  \ref{f33} and \ref{f34}, transforms sorted columns into sorted columns, the
  lemma follows.  \qed
\end{proof}

From now on, instead of looking at 0-1 sequences with sorted columns, we
will analyse the computations of $M_k$ on sequences of integers
$\overline{c}=(c_1,\ldots,c_{b_k})$, where $c_t$, $t=1,\ldots,b_k$,
denote the number of ones in a sorted column $C^k_t$. Transformations of
0-1 sequences defined by sets $S_{k,j}$, $j=1, \ldots, D_k$ will be
represented by the following mappings:
\begin{definition} \label{defFun}
Let $k\ge p$, $h_i=2^{k-i-1}-1$ for $i = 1, \ldots, k-1$ and $b_k =
2\lceil\frac{k-2}{p-2}\rceil$. For $j = 1, \ldots, \frac{b_k}{2}$ and  $s = 1, 
\ldots, k-1$ the functions $dec^k_{j,s}$, $mov^k_j$ and $cyc^k$ over
sequences of $b_k$ reals are defined as follows. Let $\overline{c}=(c_1, \ldots,
c_{b_k})$ and $t\in\{1,\ldots,b_k\}$.
\begin{align*}
(dec^k_{j,s}(\overline{c}))_t &= \begin{cases}
\min(c_j,c_{b_k-j+1}+h_s) & \text{if $t=j$}\\
\max(c_j-h_s,c_{b_k-j+1}) & \text{if $t=b_k-j+1$}\\
c_t                     & \text{otherwise}\\ 
\end{cases}\\
(mov^k_j(\overline{c}))_t &= \begin{cases}
\min(c_t,c_{t+1}) & \text{if $t=j$ or $t=b_k-j$}\\
\max(c_{t-1},c_t) & \text{if $t=j+1$ or $t=b_k-j+1$}\\
c_t              & \text{otherwise}\\ 
\end{cases}\\
(cyc^k(\overline{c}))_t &= \begin{cases}
\max(c_1,c_{b_k}-1) & \text{if $t=1$}\\
\min(c_1+1,c_{b_k}) & \text{if $t=b_k$}\\
c_t                & \text{otherwise}\\ 
\end{cases}
\end{align*}
\end{definition}
\begin{fact} \label{fct-5}
Let $x\in\Sigma^{N_k}$ be a 0-1 sequence with sorted columns $C^k_1, \ldots,
  C^k_{b_k}$, let $c_i=ones(x_{C^k_i})$ and $\overline{c} =
  (c_1,\ldots,c_{b_k})$. Let $y_j=S_{k,j}(x)$, $d_{j,i}=ones((y_j)_{C^k_i})$
  and $\overline{d_j} =(d_{j,1},\ldots,d_{j,b_k})$, where
  $i = 1, \ldots, b_k$ and $j = 1 ,\ldots, D_k$. Then
\begin{enumerate}[(i)]
\item $\overline{d_1} = cyc^k(\overline{c})$
\item $\overline{d_{j+s}} = dec^k_{j,s}(\overline{c})$, for any
  $j=1,\ldots,\frac{b_k}{2}$ and $(p-2)(j-1) < s \le \min((p-2)j, k-1)$
\item $\overline{d_{(p-1)j+1}} = mov^k_j(\overline{c})$, for any
  $1 \le j \le \frac{k-2}{p-2}$
\end{enumerate}
\end{fact}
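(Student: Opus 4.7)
The plan is to reduce each of the three items to a direct application of Fact~\ref{f33}(ii), exploiting the decomposition of every stage $S_{k,j}$ provided by Fact~\ref{f34}. Since Fact~\ref{f33}(i) guarantees that the output columns remain sorted, the counts $d_{j,i}$ completely determine the output content on each column, so it suffices to track numbers of ones.

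For part (ii), Fact~\ref{f34}(ii) gives $S_{k,j+s}=S_{L_j,R_j,h_s}$ with $L_j=C^k_j$ and $R_j=C^k_{b_k-j+1}$. The interleaving hypothesis of Fact~\ref{f33} holds because $j<b_k-j+1$ and the two columns sit in the same rows, so successive entries alternate. Setting $m_1=c_j$ and $m_2=c_{b_k-j+1}$ in Fact~\ref{f33}(ii) yields $d_{j+s,j}=\min(c_j,c_{b_k-j+1}+h_s)$ and $d_{j+s,b_k-j+1}=\max(c_j-h_s,c_{b_k-j+1})$, while columns not meeting $L_j\cup R_j$ are untouched; this matches $dec^k_{j,s}$ on the nose. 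Part (iii) is analogous: Fact~\ref{f34}(iii) writes $S_{k,(p-1)j+1}$ as a disjoint union of two $S_{A,B,0}$ stages, one on $(L_j,L_{j+1})$ and one on $(R_{j+1},R_j)$. Applying Fact~\ref{f33}(ii) with $h=0$ to each summand independently gives precisely the four coordinates of $mov^k_j$, and all other columns are left alone.

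Part (i) is the only place where something genuinely has to be verified rather than merely read off, because the registers $1$ and $N^p_k$ are excluded from $S_{k,1}=S_{R_1-\{N^p_k\},L_1-\{1\},0}$. I would first apply Fact~\ref{f33}(ii) to the truncated column pair with $h=0$, obtaining that $L_1\setminus\{1\}$ ends up with $\max(m_1,m_2)$ ones and $R_1\setminus\{N^p_k\}$ with $\min(m_1,m_2)$ ones, where $m_1=ones(x_{R_1-\{N^p_k\}})$ and $m_2=ones(x_{L_1-\{1\}})$. Then I would add back the fixed contributions of registers $1$ and $N^p_k$: since column $1$ is sorted, register $1$ contains a $1$ iff $c_1=n_k$, so $m_2=c_1$ when $c_1<n_k$ and $m_2=c_1-1=n_k-1$ when $c_1=n_k$, giving $d_{1,1}=\max(m_1,m_2)+[c_1=n_k]$; a short case analysis shows this always equals $\max(c_1,c_{b_k}-1)$. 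The computation for column $b_k$ is symmetric and yields $\min(c_1+1,c_{b_k})$, matching $cyc^k$.

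The only subtlety is bookkeeping the boundary effect in part (i); I expect this to be the main, though very mild, obstacle. The extreme cases ($c_1\in\{0,n_k\}$ or $c_{b_k}\in\{0,n_k\}$) must be checked separately because the formulas $m_1=\max(c_{b_k}-1,0)$ and $m_2=\min(c_1,n_k-1)$ involve clipping, but the saturation of $\max$/$\min$ in the $cyc^k$ formulas absorbs exactly these boundary discrepancies, so the identity holds uniformly. Nothing beyond arithmetic is needed after that.
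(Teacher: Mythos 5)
Your proposal is correct and follows essentially the same route as the paper: both reduce each item to Fact~\ref{f33}(ii) via the stage decompositions in Fact~\ref{f34}, with the only real work being the boundary bookkeeping for $S_{k,1}$ in part (i), which you handle by the same add-back-and-case-split argument (your cases $c_1=n_k$ and $c_{b_k}=0$ are exactly the paper's cases on the values $x_1$ and $x_{N_k}$, since $x_1=1$ iff $c_1=n_k$ and $x_{N_k}=0$ iff $c_{b_k}=0$). The only nitpick is that in part (iii) the extreme index $j=b_k/2$ (when $(p-2)\mid(k-2)$) is covered by Fact~\ref{f34}(iv) rather than (iii), a single $S_{A,B,0}$ instead of a union of two, but the same computation applies.
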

\begin{proof} Generally, the fact follows from Fact \ref{f34} and the
  part (ii) of Fact \ref{f33} We prove only its parts (i) and (ii). Part
  (iii) can be proved in a similar way.

\textit{(i)~} Observe that $y_1=S_{k,1}(x)=S_{R_1-\{N_k\},L_1-\{1\},0}(x)$
  due to Fact \ref{f34} \textit{(ii)}. It follows that only the content
  of columns $L_1=C^k_1$ and $R_1=C^k_{b_k}$ can change, but they remain
  sorted (according to Lemma \ref{l35}). Using Fact
  \ref{f33} \textit{(ii)} we have: $m_1 = ones(x_{R_1-\{N_k\}}) =
  c_{b_k}-x_{N_k}$, $m_2 = ones(x_{L_1-\{1\}}) = c_1-x_1$ and
\[d_{1,1} = \max(m_1,m_2)+x_1 = \max(c_{b_k}-x_{N_k}+x_1,c_1),\]
\[d_{1,b_k} = \min(m_1,m_2)+x_{N_k} = \min(c_{b_k},c_1+x_{N_k}-x_1).\]
Now let us consider the following three cases of values $x_1$ and
$x_{N_k}$: \\[3pt]
\noindent\textbf{Case $x_1=0$ ~and~ $x_{N_k}=1$.} Then 
  $d_{1,1} = \max(c_{b_k}-1,c_1) = cyc^k(\overline{c})_1$ and 
  $d_{1,b_k} = \min(c_{b_k},c_1+1)=cyc^k(\overline{c})_{b_k}$.\\[3pt]
\noindent\textbf{Case $x_1=1$.} Then $c_1=n_k$, $c_{b_k}\le n_k$ and
  $c_{b_k}-x_{N_k}\le n_k-1$. In this case: $d_{1,1} = \max(n_k,c_{b_k}-x_{N_k}+1)
  = n_k = \max(c_1,c_{b_k}-1) = cyc^k(\overline{c})_1$ and 
  $d_{1,b_k} = \min(n_k-1+x_{N_k},c_{b_k}) = c_{b_k} = \min(c_1+1,c_{b_k})
  = cyc^k(\overline{c})_{b_k}$.\\[3pt]
\noindent\textbf{Case $x_{N_k}=0$.} Then $c_{b_k}=0$ and $c_1-x_1\ge
  0$. In this case: $d_{1,1} = \max(c_1,x_1) = c_1 = \max(c_1,c_{b_k}-1) = 
  cyc^k(\overline{c})_1$ and $d_{1,b_k} = \min(c_1-x_1,c_{b_k}) = c_{b_k}
  = \min(c_1+1,c_{b_k}) = cyc^k(\overline{c})_{b_k}$.

\textit{(ii)~} We fix any $j\in\{1,\ldots,\frac{b_k}{2}\}$ and $(p-2)(j-1) < s \le
\min((p-2)j, k-1)$ and observe that $y_{j+s}=S_{k,j+s}(x)=S_{L_j,R_j,h_s}(x)$ due
to Fact \ref{f34} \textit{(ii)}. It follows that only the content of columns
$L_j=C^k_j$ and $R_j=C^k_{b_k-j+1}$ can change, but they remain sorted (according
to Lemma \ref{l35}). Using Fact \ref{f33} \textit{(ii)} we have:
\begin{align*}
d_{j+s,j} = ones((y_{j+s})_{L_j}) = \min(c_j,c_{b_k-j+1}+h_s) &= 
  (dec^k_{j,s}(\overline{c}))_j,\\
d_{j+s,b_k-j+1} = ones((y_{j+s})_{R_j}) = \max(c_j-h_s,c_{b_k-j+1}) &= 
  (dec^k_{j,s}(\overline{c}))_{b_k-j+1}.
\end{align*} \qed
\end{proof}
\begin{definition} \label{defQ}
Let $k\ge p$. For $x = 1, \ldots, k$ let $MV^k_x = \{mov^k_j: 1\le j\le 
\frac{k-2}{p-2} \text{ and } x+j \equiv 1 \pmod p\}$ and $DC^k_x =  
\{dec^k_{j,s}: 1\le j\le \frac{b_k}{2} \text{ and } (x+j)\bmod p \notin \{1,2\} 
\text{ and } s = (p-2)(j-1) - 1 + (x + j - 1) \bmod p \le k-1\}$. Let 
$Q^k_1$, \ldots, $Q^k_p$ denote the following sets of functions.
\begin{align*}
Q^k_1 &= \left\{cyc^k\right\} \cup MV^k_1 \cup DC^k_1\\
Q^k_i &=   MV^k_i \cup DC^k_i \quad\text{ for } 2 \le i \le p.
\end{align*}
\end{definition}
Let us observe that each function in $Q^k_i$, $i=1, \ldots, p$, can modify only
a few positions in a given sequence of numbers. Moreover, different
functions in $Q^k_i$ modify disjoint sets of positions. For a
function $f:R^m\mapsto R^m$ let us
define \[args(f)=\left\{i\in\{1,\ldots,m\}:\exists_{\overline{c}\in R^m}
(f(\overline{c}))_i\neq (\overline{c})_i\right\}\] The following facts
formalize our observations.
\begin{fact}\label{fct-6}
Let $k\ge p$. Then $args(cyc^k)=\{1,b_k\}$, $args(dec^k_{j,s})=\{j,b_k-j+1\}$,
$args(mov^k_j)=\{j,j+1,b_k-j,b_k-j+1\}$, where $j=1,\ldots,\frac{b_k}{2}$.
\end{fact}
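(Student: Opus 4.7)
The plan is to verify both inclusions directly from Definition~\ref{defFun}. The ``$\subseteq$'' direction is immediate for all three functions: for every index $t$ outside the claimed set, the piecewise definition assigns $(f(\overline{c}))_t = c_t$ regardless of $\overline{c}$, so $t\notin args(f)$. This knocks out the ``otherwise'' clause in each of the three displayed cases of Definition~\ref{defFun} simultaneously.

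For the reverse inclusion, I exhibit, for each index in each claimed $args$ set, a concrete witness input $\overline{c}$ on which the function actually alters that coordinate. For $cyc^k$, taking $c_1 = 0$ and $c_{b_k} = 2$ gives $(cyc^k(\overline{c}))_1 = \max(0,1) = 1 \ne c_1$ and $(cyc^k(\overline{c}))_{b_k} = \min(1,2) = 1 \ne c_{b_k}$. For $dec^k_{j,s}$, since $h_s = 2^{k-s-1} - 1 \ge 0$ in the given range $1 \le s \le k-1$, I pick $c_j = h_s + 1$ and $c_{b_k-j+1} = 0$, which makes position $j$ become $\min(h_s+1, h_s) = h_s \ne c_j$ and position $b_k-j+1$ become $\max(1, 0) = 1 \ne c_{b_k-j+1}$. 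For $mov^k_j$, setting $c_j = 1, c_{j+1} = 0$ forces position $j$ to become $\min(1,0) = 0$ and position $j+1$ to become $\max(1,0) = 1$, changing both; an analogous choice $c_{b_k-j} = 1, c_{b_k-j+1} = 0$ handles the right-hand pair. When $j = b_k/2$ the two pairs coincide as a set, which is consistent with the claimed equality interpreted set-theoretically.

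There is no substantive obstacle: the fact is a direct unfolding of the three piecewise definitions, combined with the trivial observation that $\min$ and $\max$ of two distinct real numbers each differ from the larger (respectively smaller) input. I would present it in the paper as a two-sentence justification and omit the routine case analysis.
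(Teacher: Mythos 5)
Your proposal is correct and takes essentially the same route as the paper, which states Fact~\ref{fct-6} without proof as an immediate unfolding of Definition~\ref{defFun}: the identity behaviour off the listed indices gives one inclusion, and your explicit witnesses (all legitimate, since the functions are defined on arbitrary real sequences and $j\ne b_k-j+1$ because $b_k$ is even) give the other, including the correct set-theoretic collapse of $args(mov^k_{b_k/2})$ to two indices.
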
 
\begin{fact}\label{fct-7}
For each pair of functions $f,g\in Q^k_i$, $f\neq g$, $i=1, \ldots, p$, we have
\begin{enumerate}[(i)]
\item $args(f) \cap args(g) = \emptyset$;
\item for any $\overline{c}=(c_1,\ldots,c_{b_k})$ and 
                                                  $j\in\{1,\ldots,b_k\}$
\[
(f(g(\overline{c})))_j =  \begin{cases}
(f(\overline{c}))_j & \text{if $j\in args(f)$}\\
(g(\overline{c}))_j & \text{if $j\in args(g)$}\\
c_j                 & \text{otherwise}\\
\end{cases}
\]
\end{enumerate} 
\end{fact}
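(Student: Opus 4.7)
My plan is to dispose of (ii) as a formal consequence of (i) and the pointwise locality recorded in Fact \ref{fct-6}, and then to prove (i) by a short case analysis driven by the residue conditions that define $MV^k_x$ and $DC^k_x$.

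For (ii), I would first observe that each of $cyc^k$, $mov^k_j$, and $dec^k_{j,s}$ is \emph{pointwise local}: reading off Definition \ref{defFun}, the coordinate $(f(\overline{c}))_t$ depends only on the entries of $\overline{c}$ indexed by $args(f)$ when $t\in args(f)$, and equals $c_t$ otherwise. Granting (i), $args(g)\cap args(f)=\emptyset$, so $g(\overline{c})$ and $\overline{c}$ agree on $args(f)$; hence $(f(g(\overline{c})))_j=(f(\overline{c}))_j$ for $j\in args(f)$. For $j\in args(g)$ we have $j\notin args(f)$, so $f$ acts as the identity at that coordinate, giving $(g(\overline{c}))_j$; and for $j$ outside both sets, both maps act as the identity at $j$, giving $c_j$.

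For (i), fix $i\in\{1,\ldots,p\}$ and distinct $f,g\in Q^k_i$. Because the residue conditions in $MV^k_x$ and $DC^k_x$ are complementary and $s$ in a $dec$-function is uniquely determined by $(i,j)$, each $j\in\{1,\ldots,b_k/2\}$ contributes at most one function to $Q^k_i$; hence $f$ and $g$ carry distinct $j$-indices (treating $cyc^k$ as the boundary value $j=0$). Disjointness of args then splits into four subcases:
\begin{enumerate}[(a)]
\item Two $mov$-functions. The shared condition $i+j\equiv 1\pmod{p}$ forces $j_2-j_1\equiv 0\pmod{p}$, so $|j_2-j_1|\ge p\ge 4$; combined with $j_1,j_2\le b_k/2$ this excludes every possible overlap among the quadruples $\{j_\ell,\,j_\ell+1,\,b_k-j_\ell,\,b_k-j_\ell+1\}$.
\item Two $dec$-functions. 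Disjointness of the pairs $\{j_\ell,\,b_k-j_\ell+1\}$ reduces to $j_1\neq j_2$ and $j_1+j_2\neq b_k+1$, both ensured by $j_1,j_2\le b_k/2$.
\item One $mov$ and one $dec$. The residues give $j_{\mathrm{mov}}\equiv 1-i$ while $j_{\mathrm{dec}}\not\equiv 1-i,\,2-i\pmod{p}$, so $j_{\mathrm{dec}}-j_{\mathrm{mov}}\not\equiv 0,1\pmod{p}$; this rules out the only possible left-half overlaps, and the crossings $j_1+j_2\in\{b_k,b_k+1\}$ are blocked by $j_\ell\le b_k/2$ together with $j_1\neq j_2$.
\item $cyc^k$ paired with a $mov$ or a $dec$ from $Q^k_1$. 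The condition $1+j\equiv 1\pmod{p}$ forces $j\ge p\ge 4$ for a $mov$, and $(1+j)\bmod p\notin\{1,2\}$ forces $j\ge 2$ for a $dec$, so in both cases $\{1,b_k\}$ is cleanly separated from the other args using $j\le b_k/2$.
\end{enumerate}

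The main obstacle, as I see it, is not conceptual but bookkeeping: one has to track the degenerate boundary configuration in which $(k-2)/(p-2)$ is an integer and $j=b_k/2$ is attained, so that the left and right halves of $args(mov^k_{b_k/2})$ coincide, as well as the near-center crossings where the right arg of one function threatens to meet the left arg of another. All such edge cases are resolved uniformly by combining $j\le b_k/2$ with $p\ge 4$, but they demand a systematic enumeration of sub-cases rather than a single compact inequality.
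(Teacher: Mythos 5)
Your proposal is correct and takes essentially the same route as the paper: part (i) rests on the disjointness of the residue conditions defining $MV^k_x$ and $DC^k_x$ combined with the bound $j\le b_k/2$, and part (ii) is the formal locality consequence of (i). Your version is simply more explicit --- the paper dismisses the same-class and $cyc^k$ subcases with an ``obviously'' and gives no argument for (ii) at all, whereas you enumerate every overlap pattern.
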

\begin{proof}
(i) Assume a.c. that there exist $1 \le x \le p$, $f,g \in Q^k_x$ and  $1 \le j \le 
b_k/2$ such that $f \neq g$ and $j \in args(f) \cap args(g)$. Obviously, functions 
$f$ and $g$ cannot be both in $MV^k_x$ or $DC^k_x$. Assume that $f \in MV^k_x \cup 
\{cyc^k\}$ and $g \in DC^k_x$. Then $(x+j) \bmod p \in \{1, 2\}$ from the first 
assumption and $(x+j) \bmod p \notin \{1, 2\}$ from the second one - a 
contradiction. \qed
\end{proof}
\begin{corollary} \label{color1}
Each set $Q^k_i$, $i=1, \ldots, p$, uniquely determines a mapping, in which
functions from $Q^k_i$ can be apply in any order. Moreover, if $f\in
Q^k_i$, $\overline{c}\in R^{b_k}$ and $j\in args(f)$ then
$(Q^k_i(\overline{c}))_j = (f(\overline{c}))_j$. \qed
\end{corollary}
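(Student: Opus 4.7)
The plan is to derive the corollary by induction on the number of functions in $Q^k_i$, leaning on Fact~\ref{fct-7} at each step. First I would fix any enumeration $f_1, \ldots, f_r$ of $Q^k_i$ and set $F_\ell = f_\ell \circ \cdots \circ f_1$. The key inductive claim is that for every $\overline{c} \in R^{b_k}$ and every $j \in \{1, \ldots, b_k\}$,
\begin{equation*}
(F_\ell(\overline{c}))_j = \begin{cases} (f_m(\overline{c}))_j & \text{if } j \in args(f_m) \text{ for some } m \le \ell,\\ c_j & \text{otherwise,} \end{cases}
\end{equation*}
with the index $m$ in the first case being unique thanks to Fact~\ref{fct-7}(i). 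Since the right-hand side depends only on the unordered set $\{f_1, \ldots, f_r\}$ and not on its enumeration, this formula delivers both assertions at once: the first assertion is the order-independence, and the second is the top branch of the formula read at $\ell = r$.

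For the inductive step I would split on whether $j \in args(f_{\ell+1})$. The easy case is $j \notin args(f_{\ell+1})$: by the very definition of $args$, $f_{\ell+1}$ leaves position $j$ untouched for every input, so $(F_{\ell+1}(\overline{c}))_j = (F_\ell(\overline{c}))_j$ and the inductive hypothesis closes this branch. The slightly more delicate case is $j \in args(f_{\ell+1})$: here Fact~\ref{fct-7}(i) ensures that $args(f_{\ell+1})$ is disjoint from every $args(f_m)$ with $m \le \ell$, so by the inductive hypothesis $F_\ell(\overline{c})$ and $\overline{c}$ coincide at every position in $args(f_{\ell+1})$. I would then read off from Definition~\ref{defFun} the locality property that each of $cyc^k$, $dec^k_{j,s}$, and $mov^k_j$ computes its value at an $args$-coordinate using only the input values at its own $args$-coordinates; this immediately gives $(f_{\ell+1}(F_\ell(\overline{c})))_j = (f_{\ell+1}(\overline{c}))_j$, as required.

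The only mild obstacle is the last step: Fact~\ref{fct-7}(ii) as stated handles only pairs of functions from $Q^k_i$, whereas the induction must compose $f_{\ell+1}$ with the arbitrary map $F_\ell$. The cleanest route is to bypass Fact~\ref{fct-7}(ii) here and verify locality directly from the three-line definitions of $cyc^k$, $dec^k_{j,s}$, and $mov^k_j$. Alternatively, one can apply Fact~\ref{fct-7}(ii) separately to each pair $(f_{\ell+1}, f_m)$ with $m \le \ell$ and propagate the equality along the composition; this works because $args(f_{\ell+1})$ avoids every $args(f_m)$ by part~(i), so none of the intermediate updates performed by $f_m$ can disturb the coordinates that $f_{\ell+1}$ actually reads.
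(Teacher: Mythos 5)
Your proof is correct and follows exactly the route the paper intends: the paper states this corollary with no explicit proof, treating it as an immediate consequence of Fact~\ref{fct-7}, and your induction on the enumeration (with the observation that the resulting coordinate formula is order-independent) is precisely the natural elaboration. You also correctly identify and resolve the one real subtlety --- that Fact~\ref{fct-7}(ii) covers only pairs, so the inductive step needs the locality of $cyc^k$, $dec^k_{j,s}$ and $mov^k_j$ (each writes to its $args$-coordinates using only values read from those same coordinates), which indeed holds by inspection of Definition~\ref{defFun}.
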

We would like to prove that the result of applying each $Q^k_i$, $i=1, \ldots, p$,
to a sequence $\overline{c}=(c_1,\ldots,c_{b_k})$ of numbers of ones in
columns $C^k_1, \ldots, C^k_{b_k}$ is equivalent to applying the set of
comparators $T^k_i$ to the content of registers, if each column is
sorted.
\begin{lemma} \label{l2}
Let $x\in\Sigma^{N_k}$ be a 0-1 sequence with sorted columns $C^k_1, \ldots,
  C^k_{b_k}$, let $c_i=ones(x_{C^k_i})$ and $\overline{c} =
  (c_1,\ldots,c_{b_k})$. Let $y_j=T^k_j(x)$, $d_{j,i}=ones((y_j)_{C^k_i})$
  and $\overline{d_j} =(d_{j,1},\ldots,d_{j,b_k})$, where
  $i=1,\ldots,b_k$ and $j=1, \ldots, p$. Then $Q^k_j(\overline{c}) = 
  \overline{d_j}$.
\end{lemma}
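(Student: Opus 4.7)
The overall plan is to unpack the definition of $T^k_j$ as a disjoint union of "original" stages $S_{k,j+pi}$ coming from $P_k$, translate each of these stages individually to a function on column counts via Fact~\ref{fct-5}, and then verify that the collection of functions obtained is exactly $Q^k_j$, so that Corollary~\ref{color1} finishes the job. More precisely, by definition $T^k_j = \bigcup\{S_{k,j+pi} : 0 \le i \le (D_k - j)/p\}$, and by Fact~\ref{f34}(v) any two stages $S_{k,r}$ whose indices $r$ lie in the same residue class mod $p$ have register sets contained in disjoint column pairs $L_{j'}\cup R_{j'}$. Consequently, when $T^k_j$ is applied to $x$ the individual sub-stages do not interfere, so the result equals what one obtains by applying them one by one in any order.

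Starting from $\overline{c}$, Fact~\ref{fct-5} translates each such sub-stage to one of the scalar maps $cyc^k$, $dec^k_{j',s}$, or $mov^k_{j'}$. The main bookkeeping step is to match, for each fixed $j\in\{1,\dots,p\}$, the set of stage-indices
\[
R_j = \{r : 1\le r\le D_k,\; r\equiv j\pmod p\}
\]
with the members of $Q^k_j$ listed in Definition~\ref{defQ}. The stage $S_{k,1}$ contributes $cyc^k$ only when $j=1$, which matches $Q^k_1$. A $mov$-stage sits at index $r=(p-1)j'+1$, and $r\equiv j\pmod p$ is equivalent to $j+j'\equiv 1\pmod p$, which is precisely the defining condition of $MV^k_j$. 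A $dec$-stage sits at index $r=j'+s$ with $s=(p-2)(j'-1)+r'$ for some $r'\in\{1,\dots,p-2\}$; reducing mod $p$, the congruence $r\equiv j\pmod p$ becomes $r'\equiv j+j'-2\pmod p$, i.e.\ $r'=-1+((j+j'-1)\bmod p)$, which lies in $\{1,\dots,p-2\}$ exactly when $(j+j')\bmod p\notin\{1,2\}$, matching $DC^k_j$. The truncation $s\le k-1$ in Definition~\ref{defMk} corresponds cleanly to the same truncation in Definition~\ref{defQ}.

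Once the correspondence $R_j \leftrightarrow Q^k_j$ is in place, Fact~\ref{fct-7}(i) guarantees that any two distinct elements of $Q^k_j$ have disjoint $args$, so by Fact~\ref{fct-7}(ii) their composition is independent of the order of application and agrees coordinate-by-coordinate with the map $Q^k_j$ from Corollary~\ref{color1}. Combining this with the disjoint-support observation for the register sets above, one concludes $\overline{d_j} = Q^k_j(\overline{c})$.

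I expect the main obstacle to be the purely combinatorial step of the indexing match in the second paragraph: verifying that the modular and range conditions in the definitions of $MV^k_j$ and $DC^k_j$ are exactly the ones produced by requiring $j + pi \equiv r \pmod{p}$ for the three stage types, and simultaneously checking the boundary cases ($r=1$, $r=D_k$, and the truncation $s\le k-1$). Everything else reduces to applying Facts~\ref{fct-5},~\ref{fct-7} and Corollary~\ref{color1} with no further computation.
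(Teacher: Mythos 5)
Your proposal is correct and follows essentially the same route as the paper: decompose $T^k_j$ into the sub-stages $S_{k,j+pi}$, use Fact~\ref{f34}(v) for disjointness of the affected columns, translate each sub-stage via Fact~\ref{fct-5}, and assemble with Fact~\ref{fct-7} and Corollary~\ref{color1}. The only difference is that you spell out the modular index-matching between stage indices and the conditions defining $MV^k_j$ and $DC^k_j$, which the paper leaves implicit as an asserted one-to-one correspondence; your computation of that correspondence is accurate.
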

\begin{proof}
Recall that $T^k_j=\bigcup\{S_{k,j+pi}: 0\le i\le
\frac{D_k-j}{p}\}$. For a set of comparators $S$ let us define
\[ cols(S) = \left\{ i\in\{1,\ldots,b_k\}: regs(S)\cap C^k_i\neq\emptyset
\right\} \enspace .\] From Fact \ref{f34}(i--iv) it follows that
$cols(S_{k,1})=\{1,b_k\}$ and for $j=1,\ldots,\frac{b_k}{2}$ $cols(S_{k,j+s}) =
\{j,b_k-j+1\}$ and $ cols(S_{k,(p-1)j+1}) = \{j, j+1, b_k-j, b_k-j+1\}
$. From Fact \ref{f34}(v) we get that $ cols(S_{k,j+pi}) \cap
cols(S_{k,j+pi'}) = \emptyset $ if $i\neq i'$. Thus we can observe a 1-1
correspondence between a function $f$ in $Q^k_j$ and a set of
comparators $S_{k,j+pi} \subseteq T^k_j$ such that
$args(f)=cols(S_{k,j+pi})$ Then for each $t\in args(f)$
$(Q^k_j(\overline{c}))_t = (f(\overline{c}))_t = (\overline{d_j})_t$, as
the consequence of Corollary \ref{color1} and Fact \ref{fct-5}. \qed
\end{proof}
\begin{definition} \label{flat}
We say that a sequence of numbers $\overline{c} = (c_1,\ldots,c_{2m})$
is {\em flat} if $c_1\le c_2\le \ldots, c_{2m}\le c_1 + 1$. We say that
a sequence $\overline{c}$ is {\em 2-flat} if subsequences
$(c_1,c_3,\ldots,c_{2m-1})$ and $(c_2,c_4,\ldots,c_{2m})$ are flat. We
say that $\overline{c}$ is balanced if $c_i+c_{2m-i+1} = c_1+c_{2m}$,
for $i=2,\ldots,m$. For a balanced sequence $\overline{c}$ define 
$height(\overline{c})$ as $c_1+c_{2m}$.
\end{definition}
\begin{proposition} \label{p3}
Let $k\ge p$, $x\in \Sigma^{N_k}$, $\overline{c} =
(c_1,\ldots,c_{b_k})$, where $c_i=ones(x_{C^k_i})$ ($C^k_i$ is as usual a
column in the matrix of registers), $i=1,\ldots,b_k$. Then
\begin{enumerate}[(i)]
\item $x$ is sorted if and only if columns of $x$ are sorted  and 
$\overline{c}$ is flat;
\item $x$ is 2-sorted if and only if columns of $x$ are sorted and 
$\overline{c}$ is 2-flat;
\end{enumerate}
\end{proposition}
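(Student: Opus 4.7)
The plan is to exploit the row-major layout of the registers together with the standard fact that a $0$-$1$ sequence is sorted iff it has the form $0^a 1^{N_k - a}$. Register $ib_k + j$ sits in row $i$, column $j$ of the matrix, so a sorted input is characterized by at most one transition row; the whole proof consists of translating that transition row into a statement about the column-height vector $\overline{c}$.

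For the forward direction of (i), I will let $x$ be sorted with $a = q b_k + r$ zeros, $0 \le r < b_k$. Then rows $0, \ldots, q-1$ are all zero, row $q$ has pattern $0^r 1^{b_k - r}$, and rows $q+1, \ldots, n_k-1$ are all one. Reading down each column yields a sorted column, and counting gives $c_j = n_k - q - 1$ for $j \le r$ and $c_j = n_k - q$ for $j > r$; this is non-decreasing with $c_{b_k} - c_1 \le 1$, which is exactly flatness. For the converse, assume the columns are sorted and $\overline{c}$ is flat. Flatness forces a threshold $s$ and value $c$ with $c_1 = \ldots = c_s = c$ and $c_{s+1} = \ldots = c_{b_k} = c+1$ (allowing $s = b_k$ in the constant case). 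Since column $j$ has its $c_j$ ones at the bottom, the whole matrix is zero in rows $0, \ldots, n_k - c - 2$, carries the pattern $0^s 1^{b_k - s}$ in row $n_k - c - 1$, and is one in rows $n_k - c, \ldots, n_k - 1$. Reading row by row recovers $0^a 1^{N_k - a}$ with $a = (n_k - c - 1) b_k + s$, so $x$ is sorted.

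For (ii), I will use that $b_k$ is even, so register $ib_k + j$ has the same parity as $j$. Hence the odd-indexed subsequence of $x$, read in order, is precisely the row-by-row readout of the submatrix whose columns are $C^k_1, C^k_3, \ldots, C^k_{b_k - 1}$, and analogously for the even-indexed subsequence. Applying part (i) to each of these two $n_k \times (b_k/2)$ submatrices, the odd-indexed half of $x$ is sorted iff the odd-indexed columns are sorted and $(c_1, c_3, \ldots, c_{b_k-1})$ is flat, and the even-indexed case is symmetric. Combining both statements gives exactly the definition of 2-flatness together with sortedness of all columns.

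I do not expect a genuine obstacle here; the argument is bookkeeping once the row/column arithmetic is set up. The only care needed is handling the boundary cases $r = 0$, $r = b_k - 1$, $s = 0$, and $s = b_k$, where the transition row is empty or full, and making sure that the correspondence between positions in $x$ and positions in the matrix is stated once and reused in both parts rather than reproved.
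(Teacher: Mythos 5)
Your proof is correct. The paper states Proposition~\ref{p3} without proof (it is treated as an immediate consequence of the row-major layout of the registers into the $n_k\times b_k$ matrix), and your argument --- characterizing a sorted $0$-$1$ input by its single transition row, translating that into the two-valued non-decreasing form of $\overline{c}$, and then reducing part (ii) to part (i) applied to the odd- and even-column submatrices using that $b_k$ is even --- is exactly the intended bookkeeping and fills that gap cleanly. The only cosmetic point is that the odd/even column-count subsequences have length $b_k/2$, which need not be even, so ``flat'' must be read as the condition $c_1\le\cdots\le c_n\le c_1+1$ for arbitrary length; the paper itself already uses it this way in Definition~\ref{flat} when defining 2-flat, so this is consistent.
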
 
Now we are ready to reduce the proof of Theorem \ref{3merger} to the
proof of following lemma.
\begin{lemma} \label{l3}
Let $k\ge p\ge 4$ and $b_k = 2\lceil\frac{k-2}{p-2}\rceil$. If for each 2-flat
sequence $\overline{c} = (c_1,\ldots,c_{b_k})$ of integers from $[0,2^{k-1}-1]$
the result of application $(Q^k_p\circ \ldots\circ Q^k_1)^{b_k-1}$ to
$(\overline{c})$ is a flat sequence, then $M_k$ is a $(b_k-1)$-pass merger of two
sorted sequences given in odd and even registers, respectively.
\end{lemma}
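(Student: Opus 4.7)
The plan is to assemble the reduction from the pieces already built in the section, so that Lemma \ref{l3} becomes a short bookkeeping argument. I would phrase it as follows.

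First I would invoke the Zero-One Principle for merging (the proposition stated just before Subsection \ref{ss:1}): it is enough to show that $M_k$ transforms every 2-sorted 0-1 input $x\in\Sigma^{N_k}$ into a sorted output after $b_k-1$ passes. Fix such an $x$. By Proposition \ref{p3}(ii), $x$ being 2-sorted is equivalent to the fact that every column $x_{C^k_i}$ is sorted \emph{and} the sequence of column weights $\overline{c}=(c_1,\ldots,c_{b_k})$ with $c_i=ones(x_{C^k_i})$ is 2-flat. Note also that each $c_i\in[0,n_k]=[0,2^{k-1}-1]$, so $\overline{c}$ lies in the range to which the hypothesis of Lemma \ref{l3} applies.

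Next I would lift this to all intermediate states by induction on the number of stages executed. Lemma \ref{l35} guarantees that after every stage of the multi-pass computation the columns remain sorted. Then Lemma \ref{l2} says that for any state $y$ whose columns are sorted, one application of the stage $T^k_j$ to $y$ changes the column-weight vector exactly by the map $Q^k_j$. Composing this for $j=1,\ldots,p$ shows that one full pass of $M_k=T^k_1,\ldots,T^k_p$ induces on the weight vector the map $Q^k_p\circ\cdots\circ Q^k_1$. By induction on the number of passes, after $b_k-1$ passes the column-weight vector is exactly
\[
\overline{c}^{\,(b_k-1)} \;=\; (Q^k_p\circ\cdots\circ Q^k_1)^{b_k-1}(\overline{c}),
\]
while the columns themselves are still sorted.

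By the hypothesis of Lemma \ref{l3}, $\overline{c}^{\,(b_k-1)}$ is flat. Combined with the fact that every column is sorted, Proposition \ref{p3}(i) yields that the register state $x^{(b_k-1)}$ is sorted. Hence $M_k$ correctly merges $x$ in $b_k-1$ passes, which is exactly the conclusion of the lemma.

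I do not expect any real obstacle here: the whole content of the lemma is the translation between the register model and the column-weight model, and this translation has already been set up by Lemma \ref{l35}, Lemma \ref{l2}, and Proposition \ref{p3}. The only point that requires a moment of care is that Lemma \ref{l2} is a single-pass statement applied to any state with sorted columns, so one must explicitly chain it across passes using Lemma \ref{l35} to re-establish the ``columns are sorted'' hypothesis before each pass; this is the induction step sketched above.
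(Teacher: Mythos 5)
Your proposal is correct and follows essentially the same route as the paper's proof: reduce to 0-1 inputs, use Proposition \ref{p3} to translate 2-sorted into sorted columns plus a 2-flat weight vector, chain Lemma \ref{l2} across passes (with Lemma \ref{l35} maintaining the sorted-columns invariant, which the paper compresses into ``easy induction''), and conclude with Proposition \ref{p3}(i). No gaps.
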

\begin{proof}
Assume that for each 2-flat sequence $\overline{c} = (c_1,\ldots,c_{b_k})$ the
result of application $(Q^k_p\circ \ldots\circ Q^k_1)^{b_k-1}$ to $(\overline{c})$
is a flat sequence. Let $\overline{x}\in \Sigma^{N_k}$ be a 2-sorted sequence and
$\overline{c} = (c_1,\ldots,c_{b_k})$, where $c_i=ones(\overline{x}_{C^k_i})$
($C^k_i$ is as usual a column in the matrix of registers), $i=1,\ldots,b_k$. Then
$\overline{c}$ is 2-flat due to Proposition \ref{p3} and each
$c_i\in[0,2^{k-1}-1]$, because the height of columns is $2^{k-1}-1$. Recall that
$\overline{x}^{(j)}=(M_k)^j(\overline{x})$ and let
$c_{j,i}=ones(\overline{x}^{(j)}_{C^k_i})$. Using Lemma \ref{l2} and easy
induction we get that the equality $(Q^k_p\circ \ldots\circ
Q^k_1)^{j}(\overline{c}) = (c_{j,1},\ldots,c_{j,b_k})$ is true for $j=1,\ldots,
b_k-1$. Since the result of $(Q^k_p\circ \ldots\circ Q^k_1)^{b_k-1}(\overline{c})$
is a flat sequence, the sequence $\overline{x}^{(b_k-1)}$ is sorted. \qed
\end{proof}

\subsection{Analysis of Balanced Columns}\label{ss:2} %****************************

Due to Lemma \ref{l3} we can analyse only the results of periodic
application of the functions $Q^k_1, \ldots, Q^k_p$ to a
sequence of integers representing the numbers of ones in each register
column. We know also that an initial sequence is 2-flat. To simplify
our analysis further, we start it with initial values restricted to be
balanced 2-flat sequences. In this section we prove that after $p(b_k-1) - b_k/2+1$
such application to a balanced 2-flat sequence we get a flat output
sequence (see Lemma \ref{merge-bal}). Then we observe that the
functions are monotone and any 2-flat sequence can be bounded from
below and above by balanced 2-flat sequences whose heights differ at most
by one. Using these facts we analyse general 2-flat sequences in the
next section.
\begin{lemma} \label{l4}
Let $k\ge p$ and $\overline{c} = (c_1,\ldots,c_{b_k})$ be a balanced
sequence of numbers. Let $s = height(\overline{c})$ and let $f$ be a
function from $Q^k_1\cup \ldots \cup Q^k_p$. Then $f(\overline{c})$ is
also balanced and $height(f(\overline{c}))=s$.
\end{lemma}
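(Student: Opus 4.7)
The plan is to reduce the lemma to a case analysis on the three types of elementary functions of which $Q^k_1\cup\ldots\cup Q^k_p$ consists, namely $cyc^k$, the various $mov^k_j$, and the various $dec^k_{j,s'}$ (writing $s'$ for the $dec$ index to avoid clash with the height $s$). By Fact \ref{fct-6}, each such $f$ leaves untouched every position outside $args(f)$, and $args(f)$ is in every case a set symmetric about the center of the sequence: $\{1,b_k\}$ for $cyc^k$, $\{j,b_k-j+1\}$ for $dec^k_{j,s'}$, and $\{j,j+1,b_k-j,b_k-j+1\}$ for $mov^k_j$. Hence every symmetric pair $(c_i,c_{b_k-i+1})$ with $i\notin args(f)$ retains both its entries and therefore its sum $s$, and only the pairs fully contained in $args(f)$ need to be checked.

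For $mov^k_j$ the verification reduces to the identity $\min(a,b)+\max(a,b)=a+b$: the input balance gives $c_j+c_{b_k-j+1}=s$ and $c_{j+1}+c_{b_k-j}=s$, and because $mov^k_j$ replaces $(c_j,c_{j+1})$ by $(\min(c_j,c_{j+1}),\max(c_j,c_{j+1}))$ while mirroring this on the right pair $(c_{b_k-j},c_{b_k-j+1})$, both output pairs again sum to $s$. For $dec^k_{j,s'}$ I would split on whether $c_j\le c_{b_k-j+1}+h_{s'}$: in the first case the new values at positions $j$ and $b_k-j+1$ are $c_j$ and $c_{b_k-j+1}$, in the second they are $c_{b_k-j+1}+h_{s'}$ and $c_j-h_{s'}$, and in both the sum equals $c_j+c_{b_k-j+1}=s$. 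The $cyc^k$ case is handled by the analogous split on the sign of $c_{b_k}-c_1-1$: either $c_{b_k}\le c_1+1$, giving new values $c_1$ and $c_{b_k}$, or $c_{b_k}>c_1+1$, giving new values $c_{b_k}-1$ and $c_1+1$, and in either case the pair sums to $c_1+c_{b_k}=s$.

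The height claim follows at the same moment as balance, since once $(f(\overline{c}))_1+(f(\overline{c}))_{b_k}=s$ is established we have $height(f(\overline{c}))=s$ directly from Definition \ref{flat}. The only mild obstacle is the bookkeeping in the $dec$ and $cyc$ cases to verify that the $h_{s'}$ and $\pm 1$ terms cancel correctly; no structural difficulty arises, because each elementary $f$ is by design a symmetric $\min/\max$ exchange and balance is precisely the invariant that such exchanges preserve.
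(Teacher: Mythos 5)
Your proposal is correct and follows essentially the same route as the paper: a case analysis over the three elementary function types, using the symmetry of $args(f)$ to restrict attention to the modified symmetric pairs and then checking that each such pair still sums to $s$. The only cosmetic difference is that the paper collapses your explicit case splits for $dec^k_{j,s'}$ and $cyc^k$ into one-line applications of the identity $\min(a,b)+\max(a,b)=a+b$ (e.g. $\min(c_j,c_{b_k-j+1}+h_{s'})+\max(c_j-h_{s'},c_{b_k-j+1})=\min(c_j-h_{s'},c_{b_k-j+1})+h_{s'}+\max(c_j-h_{s'},c_{b_k-j+1})=s$), which proves the same thing.
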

\begin{proof}
Let $\overline{c}$ and $s$ be as assumed in the lemma and let $f(\overline{c}) =
(d_1,\ldots,d_{b_k})$.  The function $f\in Q^k_1\cup \ldots \cup Q^k_p$ can be
either $cyc^k$ or one of $mov^k_j$, $dec^k_{j,t}$, where $j=1, \ldots, b_k/2$ and
$t = 1, \ldots, k-1$, according to Definition \ref{defQ}. Each of the functions
can modify only one or two pairs of positions of the form $(i,b_k-i+1)$ in
$\overline{c}$ (see Fact \ref{fct-6}). The other pairs are left untouched, so the
sum of their values cannot change. In case of $cyc^k$ the modified pair is
$(1,b_k)$ and $d_1+d_{b_k} = \max(c_1,c_{b_k}-1) + \min(c_1+1,c_{b_k}) = s$. In
case of $dec^k_{j,t}$ the pair is $(j,b_k-j+1)$ and $ d_j+d_{b_k-j+1} =
\min(c_j,c_{b_k-j+1}+h_t) + \max(c_j-h_t,c_{b_k-j+1}) = \min(c_j-h_t,c_{b_k-j+1})
+ h_t + \max(c_j-h_t,c_{b_k-j+1}) = s$.  Finally, if $f=mov^k_j$ then we have two
pairs $(j,b_k-j+1)$ and $(j+1,b_k-j)$. Then $ d_j+d_{b_k-j+1} = \min(c_j,c_{j+1})
+ \max(c_{b_k-j},c_{b_k-j+1}) = \min(c_j,c_{j+1}) + \max(s-c_{j+1},s-c_j) = s$. In
case of the other pair $d_{j+1}+d_{b_k-j} = \max(c_j,c_{j+1}) + \min(c_{b_k-j},
c_{b_k-j+1}) = \max(c_j,c_{j+1}) + \min(s-c_{j+1},s-c_j) = s$. \qed
\end{proof}
It follows from Lemma \ref{l4} that if we start the periodic
application of the functions $Q^k_1$, \ldots, $Q^k_p$ to a balanced
2-flat initial sequence then it remains balanced after each function
application and its height will not changed. Therefore, we can trace 
only the values in the first half of generated sequences. If needed, a
value in the second half can be computed from the height and the
corresponding value in the first half. To get a better view on the
structure of generated sequences, we subtract half of the height from
each element of the initial sequence and proceed with such modified
sequences to the end. At the end the subtracted value is added to each
element of the final sequence. The following fact justifies the described
above procedure.
\begin{fact}\label{fct-10}
Let $f$ be a function from $Q^k_1\cup \ldots \cup Q^k_p$. Then $f$ is monotone and
for each $t\in R$ and $(c_1,\ldots,c_{b_k}) \in R^{b_k}$ the following equation is
true
\[ f(c_1-t,\ldots,c_{b_k}-t) =  f(c_1,\ldots,c_{b_k}) - (t,\ldots,t) \enspace .\]
\end{fact}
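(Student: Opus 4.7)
The plan is to verify both claims coordinate by coordinate, exploiting the fact that every function in $Q^k_1\cup\ldots\cup Q^k_p$ is assembled from just three primitive operations: taking $\min$, taking $\max$, and adding a constant (namely $\pm 1$ in the case of $cyc^k$ and $\pm h_s$ in the case of $dec^k_{j,s}$). The coordinates that are not touched are simply the identity, so they trivially satisfy both properties.

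First I would record the two elementary facts that drive everything: (a) $\min$ and $\max$ are non-decreasing in each argument, and (b) $\min(x-t,y-t)=\min(x,y)-t$ and $\max(x-t,y-t)=\max(x,y)-t$ for any real $t$, while adding a fixed constant $h$ obviously commutes with subtracting $t$. Composition and pointwise combination of monotone non-decreasing functions is again monotone, and likewise (a)--(b) extend to any finite composition of these primitives. Thus it suffices to inspect each coordinate of $cyc^k$, $mov^k_j$ and $dec^k_{j,s}$ as written in Definition \ref{defFun}.

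Next I would go through the three cases in turn, reading off the coordinate formulas from Definition \ref{defFun}. For $cyc^k$, the nontrivial coordinates are $\max(c_1,c_{b_k}-1)$ and $\min(c_1+1,c_{b_k})$; both are monotone in $(c_1,c_{b_k})$ and satisfy the translation identity because the $\pm 1$ constants are unaffected by subtracting $t$ from every input. For $dec^k_{j,s}$, the two active coordinates are $\min(c_j,c_{b_k-j+1}+h_s)$ and $\max(c_j-h_s,c_{b_k-j+1})$; again $h_s$ is a fixed constant, so monotonicity and translation-invariance follow. For $mov^k_j$, the four active coordinates are of the form $\min(c_t,c_{t+1})$ or $\max(c_{t-1},c_t)$ with no additive constants at all, so both properties are immediate. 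Untouched coordinates are copies of $c_t$ and thus preserve both properties.

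There is no real obstacle here; the only care needed is to make sure every coordinate formula in Definition \ref{defFun} is checked explicitly and that the translation identity is applied coordinatewise. I would close by noting that since each $Q^k_i$ is by Corollary \ref{color1} well-defined as a composition of such functions on disjoint coordinate sets, the same two properties lift to the sets $Q^k_i$ themselves, which is exactly what the subsequent balanced/unbalanced bracketing arguments in Subsection \ref{ss:2} will require.
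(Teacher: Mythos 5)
Your proof is correct and follows essentially the same route as the paper's: both reduce everything to the monotonicity of $\min$ and $\max$ and the identities $\min(x-t,y-t)=\min(x,y)-t$, $\max(x-t,y-t)=\max(x,y)-t$, observing that the fixed additive constants ($\pm 1$, $\pm h_s$) commute with the uniform shift. Your version is merely more explicit in checking each coordinate of $cyc^k$, $dec^k_{j,s}$ and $mov^k_j$, where the paper states the inheritance in one line.
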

\begin{proof}
The fact follows from the similar properties of $\min$ and $\max$
functions: they are monotone and the equations: $\min(x-t,y-t) =
\min(x,y)-t$ and $\max(x-t,y-t) = \max(x,y)-t$ are obviously true. Each
$f$ in $Q^k_1\cup \ldots \cup Q^k_p$ is defined with the help of these
simple functions, thus $f$ inherits the properties. \qed
\end{proof}
\begin{corollary}
Let $f = f_l\circ f_{l-1}\circ\ldots\circ f_1$, where each $f_i$ is
from $\{Q^k_1, \ldots, Q^k_p\}$, $1\le i\le l$. Then $f$ is monotone
and for any $t\in R$ and $(c_1,\ldots,c_{b_k})\in R^{b_k}$
\[ f(c_1-t,\ldots,c_{b_k}-t) =  
             f(c_1,\ldots,c_{b_k}) - (t,\ldots,t) \enspace .\] 
\end{corollary}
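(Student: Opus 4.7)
The plan is a straightforward induction on $l$, because both properties in question --- monotonicity and translation-equivariance along the all-ones direction --- are closed under composition. Before running the induction, I first need to lift Fact \ref{fct-10} from the individual generators in $Q^k_1 \cup \ldots \cup Q^k_p$ to the set-mappings $Q^k_i$ themselves, which is a small but necessary step since the $f_i$ in the corollary are elements of $\{Q^k_1, \ldots, Q^k_p\}$, not individual $cyc^k$, $mov^k_j$, $dec^k_{j,s}$ maps.

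For the lifting step, observe that by Corollary \ref{color1} together with Fact \ref{fct-7}(i), each coordinate of $Q^k_i(\overline{c})$ is determined independently of the others: for each $j$, either $(Q^k_i(\overline{c}))_j = (f(\overline{c}))_j$ for the unique $f \in Q^k_i$ with $j \in args(f)$, or $(Q^k_i(\overline{c}))_j = c_j$ when no such $f$ exists. In both cases, the $j$-th coordinate is a monotone function of $\overline{c}$ by Fact \ref{fct-10}, and it satisfies the translation identity coordinate-wise (the identity map trivially, the individual $f$'s by Fact \ref{fct-10}). Hence $Q^k_i$ itself is monotone and translation-equivariant.

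Now I do induction on $l$. The base case $l=1$ is exactly the lifting step. For the inductive step, write $f = f_l \circ g$ with $g = f_{l-1} \circ \ldots \circ f_1$; by the inductive hypothesis, $g$ is monotone and satisfies the translation identity. Then
\begin{align*}
f(c_1 - t, \ldots, c_{b_k} - t)
  &= f_l(g(c_1 - t, \ldots, c_{b_k} - t)) \\
  &= f_l(g(c_1, \ldots, c_{b_k}) - (t, \ldots, t)) \\
  &= f_l(g(c_1, \ldots, c_{b_k})) - (t, \ldots, t) \\
  &= f(c_1, \ldots, c_{b_k}) - (t, \ldots, t),
\end{align*}
using the inductive hypothesis for $g$ in the second equality and the base case for $f_l$ in the third. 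Monotonicity propagates in the same way: if $\overline{c} \le \overline{c}'$ coordinate-wise, then $g(\overline{c}) \le g(\overline{c}')$ by hypothesis and so $f_l(g(\overline{c})) \le f_l(g(\overline{c}'))$ by the base case.

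The only mildly non-routine point is the lifting of Fact \ref{fct-10} from the individual generators to the whole mappings $Q^k_i$, which rests on the disjointness of the supports $args(\cdot)$ (Fact \ref{fct-7}) and the coordinate-wise description in Corollary \ref{color1}. Once this is done, there is no real obstacle: the class of monotone, translation-equivariant self-maps of $R^{b_k}$ is closed under composition, so the corollary follows by induction with no combinatorial or quantitative estimate required.
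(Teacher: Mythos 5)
Your proof is correct and matches the intended argument: the paper states this corollary without proof, treating it as immediate from Fact \ref{fct-10} since monotonicity and translation-equivariance are closed under composition. Your explicit lifting of Fact \ref{fct-10} from the individual generators to the set-mappings $Q^k_i$ (via Corollary \ref{color1} and the disjoint supports in Fact \ref{fct-7}) is a careful treatment of a step the paper leaves implicit, but it is the same route.
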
 
\begin{definition} \label{reduce}
Let $\overline{c}=(c_1,\ldots,c_{b_k})\in R^{b_k}$ be a balanced
sequence and $s=height(\overline{c})$. We call $(c_1-\frac{s}{2},
c_2-\frac{s}{2}, \ldots, c_{b_k/2}-\frac{s}{2})\in R^{b_k/2}$ the reduced
sequence of $\overline{c}$ and denote it by $reduce(\overline{c})$. For
a sequence $\overline{d} = (d_1, \ldots, d_{b_k/2})\in R^{b_k/2}$ we define
$s$-extended sequence $ext(\overline{d},s)$ as
\[(d_1+\frac{s}{2}, d_2+\frac{s}{2}, \ldots, d_{b_k/2}+\frac{s}{2}, 
  \frac{s}{2}-d_{b_k/2}, \frac{s}{2}-d_{k-3}, \ldots, \frac{s}{2}-d_1) \enspace .\]

For any $t\in R$ and a function $f: R^{b_k}\mapsto R^{b_k}$ that maps
balanced sequences to balanced ones and preserves heights let
$reduce(f,t)$ denote a function on $R^{b_k/2}$ such that for any
$\overline{d}\in R^{b_k/2}$
\[(reduce(f,t))(\overline{d}) = reduce(f(ext(\overline{d},t)))\].
\end{definition}
Observe that for a balanced sequence $\overline{c}$ with height $s$ the
sequence $ext(reduce(\overline{c}),s)$ is equal to
$\overline{c}$. Moreover, for any $t\in R$ and a sequence
$\overline{d}\in R^{b_k/2}$ the sequence $ext(\overline{d},t)$ is balanced
and its height is $t$, thus $reduce(ext(\overline{d},t)) =
\overline{d}$. Note also that functions $Q^k_1$, \ldots, $Q^k_p$
preserve the property of being balanced and the sequence height (see
Lemma \ref{l4}), so we can analyse a periodic application of their
reduced forms to a reduced balanced 2-flat input.
\begin{fact}
Let $f = f_l\circ f_{l-1}\circ\ldots\circ f_1$, where $f_i\in\{Q^k_1,
\ldots, Q^k_p\}$, $1\le i\le l$. Let $\overline{c}\in R^{b_k}$ be
balanced and $s=height(\overline{c})$ Let $\hat{f_i} = reduce(f_i,s)$,
$1\le i\le l$, and $\hat{f} = \hat{f_l} \circ \hat{f_{l-1}} \circ
\ldots \circ \hat{f_1}$. Then $f(\overline{c}) =
ext((\hat{f})(reduce(\overline{c})),s)$.
\end{fact}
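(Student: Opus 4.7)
The plan is to prove this by induction on $l$, the length of the composition. The two key facts already in place are: (a) by Lemma~\ref{l4}, each $f_i\in\{Q^k_1,\ldots,Q^k_p\}$ maps balanced sequences to balanced sequences of the same height, so iterated application preserves both properties; and (b) the remarks immediately following Definition~\ref{reduce} give two identities: $ext(reduce(\overline{c}),s)=\overline{c}$ whenever $\overline{c}$ is balanced with height $s$, and $reduce(ext(\overline{d},t))=\overline{d}$ for any $\overline{d}\in R^{b_k/2}$ and $t\in R$. Together these say that, on balanced sequences of height $s$, the map $ext(\cdot,s)\circ reduce(\cdot)$ is the identity.

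For the base case $l=1$, I would unfold the definition $\hat f_1(reduce(\overline{c}))=reduce(f_1(ext(reduce(\overline{c}),s)))$. Because $\overline{c}$ is balanced with height $s$, the inner pair collapses via identity (a) to $reduce(f_1(\overline{c}))$. Applying $ext(\cdot,s)$ then gives $ext(reduce(f_1(\overline{c})),s)$; since $f_1(\overline{c})$ is itself balanced with height $s$ by Lemma~\ref{l4}, identity (b)--or rather the complementary identity--collapses this to $f_1(\overline{c})$, as required.

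For the inductive step, write $g=f_{l-1}\circ\cdots\circ f_1$ and $\hat g=\hat f_{l-1}\circ\cdots\circ\hat f_1$. The induction hypothesis yields $g(\overline{c})=ext(\hat g(reduce(\overline{c})),s)$, and by iterating Lemma~\ref{l4} the sequence $g(\overline{c})$ is balanced with height $s$. Setting $\overline{d}=\hat g(reduce(\overline{c}))$, we then have $f(\overline{c})=f_l(g(\overline{c}))=f_l(ext(\overline{d},s))$. Applying exactly the base-case argument to the single function $f_l$ acting on the balanced height-$s$ input $ext(\overline{d},s)$, whose reduction is $\overline{d}$, gives $f_l(ext(\overline{d},s))=ext(\hat f_l(\overline{d}),s)=ext(\hat f(reduce(\overline{c})),s)$.

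The argument is essentially bookkeeping, so I do not expect a real obstacle. The one place to be careful is checking that the assumption "$\hat f_i=reduce(f_i,s)$ uses the common height $s$" is in fact consistent along the composition; this is exactly what Lemma~\ref{l4} guarantees, since every intermediate sequence $f_i(\cdots f_1(\overline{c}))$ remains balanced with height $s$, so the same reduction parameter $s$ is the correct one at every step. Without this constant-height property one could not legitimately define a single $\hat f$ independent of the intermediate data.
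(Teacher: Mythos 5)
Your proof is correct: the induction on $l$, resting on Lemma~\ref{l4} (so every intermediate sequence stays balanced with the same height $s$, which legitimizes using the single reduction parameter $s$ throughout) together with the identities $ext(reduce(\overline{c}),s)=\overline{c}$ and $reduce(ext(\overline{d},t))=\overline{d}$ noted after Definition~\ref{reduce}, is exactly the bookkeeping the paper has in mind; the paper in fact states this fact without any proof, treating it as an immediate consequence of those observations. The only blemish is a labeling slip in your base case, where the collapse of $ext(reduce(\overline{c}),s)$ to $\overline{c}$ is attributed to your item (a) rather than (b) --- the mathematics is unaffected.
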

\begin{definition}
Define $MinMax(x,y)$ to be $(min(x,y),\max(x,y))$, $Min(x) =
\min(x,-x)$ and $Cyc(x) = \max(x,-x-1)$. Let $Dec_i(x) = \min(x,-x+H_i)$,
where $H_i = 2^i-1, i=1,\ldots$.
\end{definition}
\begin{fact} \label{fct-reduce}
Let $k \ge p$. For each $f \in Q^k_1 \cup \ldots \cup Q^k_p$ and $t \ge 0$ the
function $reduce(f,t)$ does not depend on $t$ and for any sequence $\overline{d}
\in R^{b_k/2}$ and an index $u$, $1\le u \le \frac{b_k}{2}$ the following
equations are true:
\begin{enumerate}[(i)]
\item $(reduce(cyc^k,t)(\overline{d}))_u = \text{if }u = 1 \text{ then } Cyc(d_1) 
\text{ else } d_u$;
\item $(reduce(dec^k_{j,s},t)(\overline{d}))_u = \text{if }u = j \text{ then } 
Dec_{k-s-1}(d_j) \text{ else } d_u$;
\item $(reduce(mov^k_j,t)(\overline{d}))_u = \text{if }u \in \{j,j+1\} \text{ then 
} (MinMax(d_j,d_{j+1}))_{u-j+1} \text{ else } d_u$, for $j < b_k/2$;
\item $(reduce(mov^k_{b_k/2},t)(\overline{d}))_u = \text{if }u = b_k/2 \text{ then 
} Min(d_{b_k/2}) \text{ else } d_u$.
\end{enumerate} 
\end{fact}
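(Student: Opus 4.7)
The plan is to verify each of the four cases directly by unfolding the definitions of $reduce$ and $ext$, applying the appropriate function from Definition \ref{defFun}, and observing that all occurrences of $t/2$ cancel.

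First I would fix $\overline{d}=(d_1,\ldots,d_{b_k/2})\in R^{b_k/2}$ and $t \ge 0$, and write $\overline{c} = ext(\overline{d},t)$ explicitly, so $c_i = d_i + t/2$ for $1 \le i \le b_k/2$ and $c_{b_k - i + 1} = t/2 - d_i$ for the same range. By Fact \ref{fct-6}, each function in $Q^k_1 \cup \ldots \cup Q^k_p$ only modifies indices at or symmetrically paired around the middle; in particular, all indices $u \in \{1,\ldots,b_k/2\}$ not in $args(f)$ satisfy $(f(\overline{c}))_u = c_u = d_u + t/2$, so after subtracting $t/2$ in the outer $reduce$ we recover $d_u$. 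This takes care of the ``else $d_u$'' branches uniformly.

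For the active indices I would compute case by case. For $cyc^k$ at $u=1$: $(cyc^k(\overline{c}))_1 = \max(c_1, c_{b_k}-1) = \max(d_1 + t/2, -d_1 - 1 + t/2) = t/2 + \max(d_1,-d_1-1)$, and subtracting $t/2$ gives $Cyc(d_1)$. For $dec^k_{j,s}$ at $u=j$: $(dec^k_{j,s}(\overline{c}))_j = \min(c_j, c_{b_k-j+1} + h_s) = \min(d_j + t/2, -d_j + t/2 + h_s)$, giving $\min(d_j, -d_j + h_s)$ after reduction, which equals $Dec_{k-s-1}(d_j)$ since $h_s = 2^{k-s-1}-1 = H_{k-s-1}$. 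For $mov^k_j$ with $j < b_k/2$ at $u \in \{j,j+1\}$: the two relevant components are $c_j = d_j + t/2$ and $c_{j+1} = d_{j+1} + t/2$, so $\min$ and $\max$ each pull out $t/2$ cleanly, yielding $(MinMax(d_j,d_{j+1}))_{u-j+1}$. For $mov^k_{b_k/2}$ at $u = b_k/2$: the two active neighbors are $c_{b_k/2} = d_{b_k/2} + t/2$ and $c_{b_k/2+1} = t/2 - d_{b_k/2}$, so $\min(c_{b_k/2},c_{b_k/2+1}) = t/2 + \min(d_{b_k/2},-d_{b_k/2}) = t/2 + Min(d_{b_k/2})$.

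The only mildly delicate points are bookkeeping. One needs to check that in case (iv), the two pairs $(j,b_k-j+1)$ and $(j+1,b_k-j)$ collapse to a single pair when $j = b_k/2$, so $mov^k_{b_k/2}$ is effectively a short comparator across the center of symmetry; this is why the reduced form only touches one coordinate. One also needs to note that, because $f$ preserves balancedness and height by Lemma \ref{l4}, reading the first $b_k/2$ coordinates of $f(\overline{c})$ and subtracting $t/2$ really does produce $reduce(f(\overline{c}))$ in the sense of Definition \ref{reduce}. Neither step is substantive; the whole proof is essentially a four-line verification driven by the identities $\min(x+c,y+c) = \min(x,y)+c$ and $\max(x+c,y+c)=\max(x,y)+c$ already used in Fact \ref{fct-10}, which is what makes $t/2$ factor out and guarantees the independence of $reduce(f,t)$ from $t$.
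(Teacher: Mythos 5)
Your proposal is correct and follows essentially the same route as the paper's proof: unfold $ext(\overline{d},t)$, handle inactive indices via $args(f)$, and verify each of the four cases by letting $t/2$ factor out of $\min$ and $\max$. The additional remarks (the collapse of the two pairs of $mov^k_{b_k/2}$ into one, and the appeal to Lemma \ref{l4} to justify reading off $reduce(f(ext(\overline{d},t)))$ from the first half) match what the paper does implicitly or explicitly.
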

\begin{proof}
By Lemma \ref{l4} the considered functions preserve the height of sequences and 
their property of being balanced, thus we can used their reduced forms and 
$(reduce(f,t))(\overline{d}) = reduce(f(ext(\overline{d},t)))$.
If $u\notin args(f)$ then $(reduce(f(ext(\overline{d},t))))_u = d_u$ according to 
Def.~\ref{reduce}. If  $u\in args(f)$ then we have to consider the following 
cases.\\[3pt]
\noindent\textbf{Case $f = cyc^k$.} Then $u$ must be equal to $1$ and 
$(reduce(cyc^k(ext(\overline{d},t))))_1 = \max(d_1+\frac{t}{2},\frac{t}{2}-d_1-1) - 
\frac{t}{2} = \max(d_1,-d_1 - 1) = Cyc(d_1)$.\\[3pt]
\noindent\textbf{Case $f = dec^k_{j,s}$.} Then $u$ must be equal to $j$ and 
$(reduce(dec^k_{j,s}(ext(\overline{d},t))))_j = 
\min(d_j+\frac{t}{2},\frac{t}{2}-d_j+h_s) - \frac{t}{2} = \min(d_j,-d_j + h_s) = 
Dec_{k-s-1}(d_1)$, because $h_s = 2^{k-s-1}-1 = H_{k-s-1}$.\\[3pt]
\noindent\textbf{Case $f = mov^k_j$ and $j < \frac{b_k}{2}$.} Then $u \in \{j, 
j+1\}$. For $u = j$, $(reduce(mov^k_j(ext(\overline{d},t))))_j 
= \min(d_j+\frac{t}{2},d_{j+1}+\frac{t}{2}) - \frac{t}{2} = \min(d_j,d_{j+1}) = 
(MinMax(d_j,d_{j+1}))_1$. For $u = j+1$ the proof is similar.\\[3pt]
\noindent\textbf{Case $f = mov^k_j$ and $j = \frac{b_k}{2}$.} Then $u$ must be 
 $b_k/2$ and $(reduce(mov^k_{b_k/2}(ext(\overline{d},t))))_{b_k/2} = 
\min(d_{b_k/2}+\frac{t}{2},\frac{t}{2}-d_{b_k/2}) - \frac{t}{2} = 
\min(d_{b_k/2},-d_{b_k/2}) = Min(d_{b_k/2})$. \qed
\end{proof}
\begin{definition}\label{def-Cyc}
Let $k\ge p$ and for each $f \in Q^k_1 \cup \ldots \cup Q^k_p$ let $\hat{f}$
denote its reduced form $reduce(f,*)$ (it does not depend on the second argument).
Let $\hat{Q}^k_1$, \ldots, $\hat{Q}^k_p$ denote the following sets of reduced
functions: $\hat{Q}^k_i = \{\hat{f}: f \in Q^k_j\}$, where $i=1, \ldots, p$.
\end{definition}
\begin{lemma}\label{l5}
Let $k\ge p$ and $t\in R$. Then the function $reduce(Q^k_i,t)$ does not depend on
$t$ and $reduce(Q^k_i,t) = \hat{Q}^k_i$, where $i=1, \ldots, p$.
\end{lemma}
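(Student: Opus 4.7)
The plan is to verify the identity pointwise on an arbitrary sequence $\overline{d}\in R^{b_k/2}$, reducing to the single-function calculations already carried out in Fact \ref{fct-reduce}. The needed infrastructure is all in place: Lemma \ref{l4} guarantees that every $f\in Q^k_1\cup\ldots\cup Q^k_p$ (and hence $Q^k_i$ itself, via Corollary \ref{color1}) preserves balancedness and the height, so $reduce(Q^k_i,t)$ is defined; Fact \ref{fct-reduce} already gives $reduce(f,t)=\hat f$ for each individual $f$, with no dependence on $t$; Fact \ref{fct-7}(i) says that distinct $f,g\in Q^k_i$ have disjoint $args$ and that $Q^k_i$ acts coordinate-wise as described in Corollary \ref{color1}.

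First I would fix $t\in R$ and $\overline d\in R^{b_k/2}$, set $\overline c=ext(\overline d,t)$ (balanced, height $t$) and, by Lemma \ref{l4}, note that $Q^k_i(\overline c)$ is again balanced of height $t$, so its first-half projection equals $(reduce(Q^k_i,t))(\overline d)$ by Def.~\ref{reduce}. For each $u\in\{1,\ldots,b_k/2\}$ I would split into the two cases of Corollary \ref{color1}: either there is a unique $f\in Q^k_i$ with $u\in args(f)$, in which case $(Q^k_i(\overline c))_u=(f(\overline c))_u$, or no such $f$ exists and $(Q^k_i(\overline c))_u=c_u=d_u+\tfrac{t}{2}$. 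In the first case subtracting $\tfrac{t}{2}$ yields $(f(ext(\overline d,t)))_u-\tfrac{t}{2}=(reduce(f,t)(\overline d))_u=(\hat f(\overline d))_u$ by Fact \ref{fct-reduce}; in the second it yields $d_u$.

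Next I would show that the same case analysis describes $(\hat Q^k_i(\overline d))_u$. Inspecting Fact \ref{fct-reduce}, the set of indices where $\hat f$ can modify a sequence is exactly $args(f)\cap\{1,\ldots,b_k/2\}$: it is $\{1\}$ for $\widehat{cyc^k}$, $\{j\}$ for $\widehat{dec^k_{j,s}}$, $\{j,j+1\}$ for $\widehat{mov^k_j}$ with $j<b_k/2$, and $\{b_k/2\}$ for $\widehat{mov^k_{b_k/2}}$. Since the full $args$ of distinct members of $Q^k_i$ are disjoint (Fact \ref{fct-7}(i)), their first-half restrictions are disjoint as well, so $\hat Q^k_i$ too is a well-defined coordinate-wise mapping in the sense of Corollary \ref{color1}: its value at $u$ equals $(\hat f(\overline d))_u$ when $u$ belongs to some (unique) reduced $args$, and $d_u$ otherwise. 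Matching these two case splits term by term delivers $reduce(Q^k_i,t)(\overline d)=\hat Q^k_i(\overline d)$, and $t$-independence is inherited directly from Fact \ref{fct-reduce}.

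The only real obstacle is the index bookkeeping in the previous paragraph, namely the claim that ``$u\in\{1,\ldots,b_k/2\}$ lies in some $args(f)$ for $f\in Q^k_i$'' and ``$u$ lies in some reduced argument set for $\hat f\in\hat Q^k_i$'' define the same set of indices, together with the disjointness of the reduced sets. Both follow by a short case inspection of the three function families listed in Fact \ref{fct-reduce}; no new ideas are needed beyond what has already been assembled.
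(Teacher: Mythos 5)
Your proposal is correct and follows essentially the same route as the paper's own proof: a pointwise comparison at each index $u\le b_k/2$, splitting on whether $u$ lies in $args(f)$ for some (necessarily unique, by Fact \ref{fct-7}) $f\in Q^k_i$, and invoking Fact \ref{fct-reduce} together with the observation that $args(\hat f)=args(f)\cap\{1,\ldots,b_k/2\}$. No substantive difference from the paper's argument.
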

\begin{proof}
Let $f$ be any function in $Q^k_1 \cup \ldots \cup Q^k_p$ and let $\hat{f}$ denote
its reduced form. By Fact \ref{fct-reduce}, we know that $args(\hat{f}) = args(f)
\cap \{1, \ldots, \frac{b_k}{2}\}$. By the definitions, $args(Q^k_i) =
\bigcup_{f\in Q^k_i} args(f)$ and  $args(\hat{Q}^k_i) = \bigcup_{f\in Q^k_i}
args(\hat{f})$. Consider now a sequence $\overline{d} \in R^{b_k/2}$ and an index
$u$, $1\le u \le \frac{b_k}{2}$. If $u \notin args(Q^k_i)$ then
$(reduce(Q^k_i,t)(\overline{d}))_u = d_u = (\hat{Q^k_i}(\overline{d})_u)$.
Otherwise, if $u \in args(f)$, $f \in Q^k_i$, then
$(reduce(Q^k_i,t)(\overline{d}))_u = (reduce(Q^k_i(ext(\overline{d},t))))_u =
(reduce(f(ext(\overline{d},t))))_u = (\hat{f}(\overline{d}))_u = 
(\hat{Q^k_i}(\overline{d}))_u$.  \qed
\end{proof}

Instead of tracing individual values in reduced sequences after each application
of a function from $\{\hat{Q}^k_1, \ldots, \hat{Q}^k_p\}$ we will trace intervals
in which the values should be and observe how the lengths of intervals are
decreasing during the computation. So let us now define the intervals and give a
fact about computations on them.
\begin{definition} \label{interval}
Let $k\ge 4$, $H_i=2^i-1$ for $1\le i\le k-1$. Let $I(0)$ denote the
interval $[-\frac 12,0]$ and, in similar way, let $I(i) = [-\frac
  12,\frac{H_i}{2}]$, $1\le i\le k-1$, $I(-k) = [-\frac{H_{k-1}}{2},0]$
and $I(\pm k) = [-\frac{H_{k-1}}{2},\frac{H_{k-1}}{2}]$. Moreover, we
write $I(w_1,w_2,\ldots,w_l)$ for the Cartesian product $I(w_1)
\times I(w_2) \times \ldots \times I(w_l)$, where each $w_i \in
\{0,1,2,\ldots,k-1,-k,\pm k\}$.
\end{definition}
\begin{fact} \label{fct-12}
The following inclusions are true:
\begin{enumerate}
\item $Dec_i(I(i+1))\subseteq I(i)$ and 
  $Dec_i(I(w))\subseteq I(w)$, for $1\le i\le k-2$ and $w\in\{0,-k,\pm k\}$;
\item $Cyc(I(-k)) \subseteq I(k-1)$ and 
  $Cyc(w)\subseteq Cyc(w)$, for $w\in\{0,k-1\}$;
\item $Min(I(\pm k)) \subseteq I(-k)$ and $Min(I(1)) \subseteq I(0)$;
\item $MinMax(I(\pm k,-k)) \subseteq (I(-k,\pm k))$;
\item $MinMax(I(i,w)) \subseteq (I(w,i))$, 
                              for $1\le i\le k-1$ and $w\in\{0,-k\}$.
\end{enumerate}
\end{fact}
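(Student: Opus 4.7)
The plan is to verify each of the five inclusions by routine case analysis, exploiting the piecewise-linear, unimodal structure of the one-variable functions and the coordinatewise monotonicity of $MinMax$. I would first locate the breakpoint of each one-variable function, i.e., the unique $x$ at which the two arguments of the inner $\min$ or $\max$ coincide: $Dec_i$ is increasing on $(-\infty,H_i/2]$ and decreasing on $[H_i/2,\infty)$ with peak value $H_i/2$; $Cyc$ is decreasing on $(-\infty,-1/2]$ and increasing on $[-1/2,\infty)$ with trough $-1/2$; $Min$ is increasing on $(-\infty,0]$ and decreasing on $[0,\infty)$ with peak $0$. Once this unimodal picture is in hand, the image of any interval is completely determined by the values at its two endpoints together with the breakpoint value if it happens to fall inside.

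For statement (1), the essential computation is $Dec_i(H_{i+1}/2)=-H_{i+1}/2+H_i=-1/2$, which uses the arithmetic identity $H_{i+1}=2H_i+1$. Combined with $Dec_i(-1/2)=-1/2$ and the interior peak $H_i/2$ attained at $x=H_i/2\in I(i+1)$, this yields $Dec_i(I(i+1))\subseteq[-1/2,H_i/2]=I(i)$. For $w\in\{0,-k\}$ the interval $I(w)$ lies entirely in $\{x\le 0\}$, where $x<H_i/2$ so $Dec_i$ acts as the identity; for $w=\pm k$ the breakpoint lies inside $I(\pm k)$ and the only thing to check is $Dec_i(H_{k-1}/2)=H_i-H_{k-1}/2\ge -H_{k-1}/2$, which is immediate since $H_i\ge 0$.

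Statements (2) and (3) are analogous endpoint checks, and I expect them to be even shorter. For $Cyc$, the intervals $I(0)$ and $I(k-1)$ both lie in $\{x\ge -1/2\}$ where $Cyc$ is the identity, and for $I(-k)$ the only nontrivial evaluation is $Cyc(-H_{k-1}/2)=H_{k-1}/2-1\le H_{k-1}/2$, which together with $Cyc(0)=0$ puts the image inside $[-1/2,H_{k-1}/2]=I(k-1)$. For $Min$, the intervals $I(1)$ and $I(\pm k)$ are symmetric about $0$, and $Min$ folds the positive half onto the negative half, so its image is contained in $I(0)$ and $I(-k)$ respectively.

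For statements (4) and (5), I would treat $MinMax$ coordinatewise: on a product of intervals the image is contained in the product of the ranges of $\min$ and $\max$. In every case considered the second argument ranges over an interval of the form $[\cdot,0]$ (namely $I(-k)$ or $I(0)$), which forces $\min(x,y)\le 0$ and thus pins the min-coordinate into $I(-k)$ or $I(0)$; the max-coordinate then inherits its upper bound from the first factor and a lower bound of $-1/2$ or $-H_{k-1}/2$ from comparing endpoints. The only genuine ``obstacle'' in the whole proof is really the identity $H_i-H_{i+1}/2=-1/2$ in statement (1); everything else is definition-unpacking and evaluation at endpoints, which is exactly what the intervals $I(\cdot)$ were designed to absorb.
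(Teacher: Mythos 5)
Your proposal is correct and takes essentially the same route as the paper: a direct case analysis based on the piecewise-linear form of each function, with the only nontrivial computation being $Dec_i(H_{i+1}/2)=H_i-H_{i+1}/2=-1/2$ via $H_{i+1}=2H_i+1$ --- precisely the one step the paper writes out in full (it proves only item~1 and declares the remaining items straightforward). Your explicit endpoint/breakpoint checks for items 2--5 fill in exactly what the paper leaves to the reader, so there is nothing to flag.
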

\begin{proof}
The proof of each inclusion is a straightforward consequence of the
definitions of a given function and intervals. Therefore we check only
inclusions given in the first item. Let $x\in I(i+1) = [-\frac
  12,\frac{H_{i+1}}{2}]$. If $x\in I(i) = [-\frac
  12,\frac{H_i}{2}]$. then $Dec_i(x) = \min(x,-x+H_i) = x$ since $2x\le
H_i$. Otherwise $x$ must be in $(\frac{H_i}{2},\frac{H_{i+1}}{2}]$, but
  then $x > -x+H_i$ and $Dec_i(x) = -x+H_i \in [-\frac
    12,\frac{H_i}{2})$ since $H_{i+1} = 2H_i+1$.

To proof the second inclusion for $Dec_i$ let us observe that if $x\le 0$
then $Dec_i(x) = x$. It follows that $Dec_i(I(0))\subseteq I(0)$ and
$Dec_i(I(-k))\subseteq I(-k)$. In case of $x\in I(\pm k)$ we have
to check only the positive values of $x$. such that $x\ge -x+H_i$. But then
$Dec_i(x) = -x+H_i > -x$ and both $x,-x\in I(\pm k)$. \qed
\end{proof}

Now we are ready to define sequences of intervals that are used to describe states
of computation after each periodic application of functions $\hat{Q}^k_1$, \ldots,
$\hat{Q}^k_p$ to a reduced sequence of numbers of ones in columns.

\begin{definition} \label{def-seq}
Let $k\ge p$. For $0 \le x \le p$ and $1 \le l \le b_k/2$ let 
\[e^k_p(x,l) = \max(0, k - (p-2)(l-1) - (x + l -1 ) \bmod p)\]
be an auxiliary function to define the following sequences of length $b_k/2$
\begin{align*}
(U^k_x)_l &= \text{if~~~} x+l \equiv 1 \pmod p \text{~~then~~} -k \text{~~else~~} 
\pm k,\\ 
(V^k_x)_l &= \text{if~~~} x+l \equiv 1 \pmod p \text{~~then~~} -k \text{~~else~~} 
e^k_p(x,l),\\ 
(W^k_x)_l &= \text{if~~~} x+l \equiv 1 \pmod p \text{~~then~~} 0 \text{~~else~~} 
e^k_p(x,l),\\ 
(Z^k)_l   &= 0.
\end{align*}
\end{definition}

Note that the elements of the defined above sequences are interval descriptors as 
defined in Definition \ref{interval} and we have also $U^k_0 = U^k_p$, $V^k_0 = 
V^k_p$ and $W^k_0 = W^k_p$. 
\begin{definition}
Let $k\ge p$. Let $\overline{a}=(a_1,\ldots,a_n)$ and
$\overline{b}=(b_1,\ldots,b_n)$ be any sequences, where $n\ge \frac{b_k}{2}$. For
$0\le i\le \frac{b_k}{2}$ let $join_k(i,\overline{a},\overline{b})$ denote $(a_1,
\ldots, a_i, b_{i+1}, \ldots, b_{b_k/2})$.
\end{definition}

\begin{definition}\label{def-X}
Let $k\ge p$. Let $X^k_i$ denote a state sequence after $i$ stages and
be defined as:
\[ X^k_i = \begin{cases}
join_k(\lceil\frac{i+1}{p-1}\rceil,V^k_{i\bmod p},U^k_{i\bmod p}) & 1\le i\le 
\frac{b_k}{2}(p-1)-1\\
join_k(\frac{b_k}{2}p-i,V^k_{i\bmod p},W^k_{i\bmod p}) & \frac{b_k}{2}(p-1)\le 
i\le \frac{b_k}{2}p - 1\\
join_k(\lceil\frac{i+1-\frac{b_k}{2}p}{p-1}\rceil,Z^k,W^k_{i\bmod p}) & 
\frac{b_k}{2}p \le i\le p(b_k-1) - (\frac{b_k}{2}-1)\\
\end{cases}
\]
\end{definition}

For example, to create $X^k_1$ we take the first element of $V^k_1$ and the rest
of elements from $U^k_1$ obtaining the sequence $(k-1) \cdot (\pm k)^{p-2} \cdot
(-k) \cdot (\pm k)^{p-1} \cdot (-k) \cdot (\pm k)^{p-1} \cdot (-k) \ldots$ of
length $b_k/2$. In the next lemma we claim that $X^k_1$ really describes the state
after the first stage of computation, where input is a balanced 2-flat sequence.
\begin{lemma}\label{l6}
Let $k\ge p$ and let $\overline{c} = (c_1,\ldots,c_{b_k})$ be a balanced 2-flat
sequence of integers from $[0,2^{k-1}-1]$. Then
$(\hat{Q}^k_1)(reduce(\overline{c})) \in I(X^k_1)$.
\end{lemma}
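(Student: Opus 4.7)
The approach is to apply $\hat{Q}^k_1$ coordinate by coordinate after reducing. First form $\overline{d}=reduce(\overline{c})=(d_1,\ldots,d_{b_k/2})$ with $d_l=c_l-s/2$ and $s=c_1+c_{b_k}$. Because each $c_l\in[0,2^{k-1}-1]$ and balancedness gives $c_l+c_{b_k-l+1}=s$, a short computation yields $|d_l|\le H_{k-1}/2$, so $\overline{d}\in I(\pm k)^{b_k/2}$ to start with. This already discharges every coordinate of $X^k_1$ whose $U^k_1$-descriptor is $\pm k$.

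Invoke Corollary \ref{color1} (the elements of $\hat{Q}^k_1$ act on disjoint coordinate sets, hence commute) and use Fact \ref{fct-reduce} to read off each elementary action. $\widehat{cyc^k}$ touches coordinate $1$ and, by Fact \ref{fct-12}(2), sends $I(\pm k)$ into $I(k-1)=I((V^k_1)_1)$. A coordinate $j$ hit by $\widehat{dec^k_{j,s}}$ receives $Dec_{k-s-1}$ and stays inside $I(\pm k)$ by Fact \ref{fct-12}(1). Coordinates $j,j+1$ hit by $\widehat{mov^k_j}$ with $j<b_k/2$ are transformed by $MinMax$ and both remain in $I(\pm k)$. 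So the only coordinates still requiring work are those for which $U^k_1$ demands $-k$: the multiples of $p$ in $\{2,\ldots,b_k/2\}$.

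At each such $l$ the definition of $MV^k_1$ places $\widehat{mov^k_l}$ in $\hat{Q}^k_1$. If $l=b_k/2$, Fact \ref{fct-reduce}(iv) gives $Min(d_l)=\min(d_l,-d_l)\le 0$ for free. If $l<b_k/2$, Fact \ref{fct-reduce}(iii) applies $MinMax$ to $(d_l,d_{l+1})$ and the task reduces to proving $\min(d_l,d_{l+1})\le 0$. Write $a=c_1$, $b=c_{b_k}$ and $t=(a-b)/2$; the balanced 2-flat hypothesis forces every odd-index reduced entry into $\{t,t+1\}$ and every even-index entry into $\{-t-1,-t\}$, with the ``step'' of the odd subsequence at some position $\alpha$ coupled by balancedness to the mirrored step of the even subsequence at $b_k-\alpha+1$. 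Since $l$ and $l+1$ have opposite parities, $d_l+d_{l+1}\in\{-1,0,1\}$; the extremal value $+1$ demands both entries on the ``high'' side, which a parity-case check turns into $b_k<2l$. But $\widehat{mov^k_l}\in\hat{Q}^k_1$ together with $l<b_k/2$ gives $2l<b_k$, ruling this out; therefore $\min(d_l,d_{l+1})\le 0$ and the value lands in $I(-k)$.

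The main obstacle I expect is the last paragraph's coupling argument: stating precisely which side of each step is ``high'', checking both parities of $l$, and separately treating the degenerate configuration in which one of the two subsequences is constant. Everything else is a direct unpacking of Facts \ref{fct-reduce} and \ref{fct-12}.
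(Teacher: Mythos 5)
Your plan is correct, and its skeleton coincides with the paper's: pass to $\overline{d}=reduce(\overline{c})$, check $\overline{d}\in I(\pm k)^{b_k/2}$, then push each coordinate through the elementary functions of $\hat{Q}^k_1$ via Facts \ref{fct-reduce} and \ref{fct-12}, so that the only coordinates needing real work are position $1$ (target $I(k-1)$) and the multiples of $p$ (target $I(-k)$). Where you differ is in how the latter are handled. The paper first proves a \emph{global} structural statement (Fact \ref{fct-13} and its corollary) that $\overline{d}$ lies in $I(k-1,-k,k-1,\ldots)\cup I(-k,k-1,-k,\ldots)$ --- one parity class bounded below by $-\frac12$, the other bounded above by $0$ --- and then feeds those pairs to $MinMax$/$Min$; you prove only the \emph{local} statement $\min(d_l,d_{l+1})\le 0$ at the positions where a reduced $mov$ acts, via the two-valued description $\{t,t+1\}$, $\{-t-1,-t\}$ of the parity classes and the mirrored step positions forced by balancedness. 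The combinatorial crux is identical in both versions: everything reduces to the half-integer case $t=d_1=-\frac12$, where balancedness couples the step of the odd flat subsequence to that of the even one and pushes any adjacent ``double-high'' pair past index $b_k/2$; so your sketch does close, including the constant-subsequence degeneracy you flag. Two small repairs. First, Fact \ref{fct-12}.2 as stated gives only $Cyc(I(-k))\subseteq I(k-1)$ and $Cyc(I(k-1))\subseteq I(k-1)$, not $Cyc(I(\pm k))\subseteq I(k-1)$; the latter is what you use and is immediate from $\max(x,-x-1)\ge -\frac12$ and $\max(x,-x-1)\le H_{k-1}/2$ on $I(\pm k)$, so prove it in a line rather than citing the fact. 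Second, ``sum $=+1$'' is not by itself the bad event (e.g.\ $t=3$ gives sum $+1$ with $\min=-3$); the bad event is both entries positive, which forces $t=-\frac12$ and both entries on their high sides --- your mirror argument then applies, but it should be phrased that way.
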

\begin{proof}
Recall that $H_i=2^i-1$. Let $s = height(\overline{c})$ and $\overline{d} =
(d_1,\ldots,d_{b_k/2}) = reduce(\overline{c})$ By Definitions \ref{flat} and
\ref{reduce} $s=c_i+c_{b_k-i+1}$ and each $d_i = c_i-\frac{s}{2} =
\frac{c_i-c_{b_k-i+1}}{2}$. Observe that each $d_i\in I(\pm k) =
[-\frac{H_{k-1}}{2},\frac{H_{k-1}}{2}]$. We can get this from the following
sequence of inequalities: $-\frac{H_{k-1}}{2} \le \frac{-c_{b_k-i+1}}{2} \le
\frac{c_i-c_{b_k-i+1}}{2} \le \frac{c_i}{2} \le \frac{H_{k-1}}{2}$. Moreover, the
sequence $\overline{d}$ is 2-flat, because $\overline{c}$ is 2-flat. That means
that $d_1\le d_3\le d_5\le \ldots \le d_{k'} \le d_1+1$ and $d_2\le d_4\le d_6\le
\ldots \le d_{k''} \le d_2+1$, where $k' = 2\lceil\frac{b_k}{4}\rceil-1$ and $k''
= 2\lfloor\frac{b_k}{4}\rfloor$.
\begin{fact} \label{fct-13}
Either $-\frac{1}{2}\le d_1$ and $d_{k''}\le 0$ or $-\frac{1}{2}\le d_2$ 
and $d_{k'}\le 0$.
\end{fact}

To prove the fact we consider three cases of the value of $d_1$.

\noindent\textbf{Case $d_1\ge 0$.} In this case we have to prove only that 
$d_{k''}\le 0$. But it is true since $d_{k''}=\frac{c_{k''}-c_{b_k-k''+1}}{2} 
\le \frac{c_{b_k}-c_1}{2} = -d_1 \le 0$. The last inequality holds, because 
$\overline{c}$ is 2-flat and both $k''$ and $b_k$ are even.\\[3pt]
\noindent\textbf{Case $d_1\le -1$.} Then $d_{k'} \le d_1+1 \le 0$. Thus it remains 
to prove that $d_2\ge -\frac{1}{2}$. Similar to the previous case, we 
observe that $d_2 = \frac{c_2-c_{b_k-1}}{2} \ge \frac{c_{b_k}-1-(c_1+1)}{2} = 
-d_1-1 \ge 0$.\\[3pt]
\noindent\textbf{Case $d_1 = -\frac{1}{2}$.} Then $d_{k'} \le d_1+1 =
\frac{1}{2}$ and from $-\frac{1}{2} = \frac{c_1-c_{b_k}}{2}$ we get
$c_1+1 = c_{b_k} \le c_2+1$. Since $c_2 \ge c_1$, we have $d_2 \ge d_1
= -\frac{1}{2}$. If $d_{k'}\le 0$, we are done. Otherwise $d_{k'} =
\frac{1}{2}$ and we have to show that $d_{k''} \le 0$. To this end let
us notice that $\frac{s}{2} = c_1-d_1 = c_1+\frac{1}{2}$ and
$c_{b_k-k'+1} = s-c_{k'} = s-(d_{k'}+\frac{s}{2}) =
\frac{s}{2}-\frac{1}{2} = c_1$. It follows that $c_{k''} = c_1$ since
$c_1 \le c_2 \le c_{k''} \le c_{b_k-k'+1} = c_1$. Thus $d_{k''} = d_1
= -\frac{1}{2}$ and this concludes the proof of Fact \ref{fct-13}.

From Fact \ref{fct-13} and since $\overline{d}$ is 2-flat we can immediately 
get the following corollary.
\begin{corollary}
$\overline{d} \in I((k-1,-k,k-1,-k,\ldots) \cup I(-k,k-1,-k,k-1,\ldots)$.
\end{corollary}

To finish the proof of the lemma we need one more fact:
\begin{fact}
$(\hat{Q}^k_1)(I((k-1, -k, k-1, -k, \ldots) \cup I(-k, k-1,
  -k, k-1, \ldots)) \subseteq I(X^k_1)$.
\end{fact}
Observe, firstly, that the $Cyc$ function is applied to the first position in an
input sequence, thus the input to $Cyc$ is either from $I(k-1)$ or from $I(-k)$.
By Fact \ref{fct-12}.2, $Cyc(I(k-1)) \subseteq I(k-1)$ and $Cyc(-k) \subseteq
I(k-1)$, thus each corresponding output on the first position is correct. On the
other positions in the output sequence we have either $I(\pm k)$ or $I(-k)$ and
$I(-k)$ appears only on positions, which indices are multiples of $p$. If $j$, $1
\le j \le \frac{b_k}{2}$, is a multiple of $p$, then $j \in args(mov^k_j) \in
Q^k_1$ and that means that in $\hat{Q}^k_1$ the $MinMax$ function is applied to
positions $j$ and $j+1$ or the $Min$ function if $j = \frac{b_k}{2}$. In the
former case, on the positions $j$ and $j+1$ in an input sequence, we have a pair
from either $I(-k, k-1)$ or $I(k-1, -k)$. By Fact \ref{fct-12}.4 the output on the
$j$-th position must be from $I(-k)$. In the later case, the $Min$ function is
applied to an element of $I(-k) \cup I(k-1) \subseteq I(\pm k)$. By Fact
\ref{fct-12}.3 we have $Min(I(\pm k)) \subseteq I(-k)$. Finally, on a position $j
> 1$  such that $j \bmod p \notin \{0,1\}$, the input value is from $I(\pm k)$ and
only a $Dec_*$ function can be applied to that value. But $Dec_i(I(\pm k))
\subseteq I(\pm k)$ by Fact \ref{fct-12}.1 and that finishes the proof of the
fact.  \qed
\end{proof}

Informally speaking, the next steps of a computation go as follows: each value
$\pm k$ is moving to its neighbour right position every $p-1$ round with the
help of $MinMax$ function; at the last position the value is changed to $-k$
by $Min$ function; each value $-k$ is moving to the left every round and at
the first position it is changed to $k-1$ by $Cyc$; each value $k-1$ is
decreased by one $p-2$ times at the first position with the help of $Dec_*$
functions, then it is moved to the second position, decreased $p-2$ times
again and so on; at the last position the value is finally decreased from one
to zero by $Dec_0$ or $Min$ and starts moving to the left, one position a
round, stopping at the first position or next to the previous zero. After
$p(b_k-1) - (\frac{b_k}{2}-1)$ rounds the sequence contains only zeroes. See
Figure \ref{states} to observe the initial steps of the process. In the next
lemma we formally describe such computations. To prove it we need one more
technical fact. %
\begin{figure}%[ht]
\centering
\epsfig{file=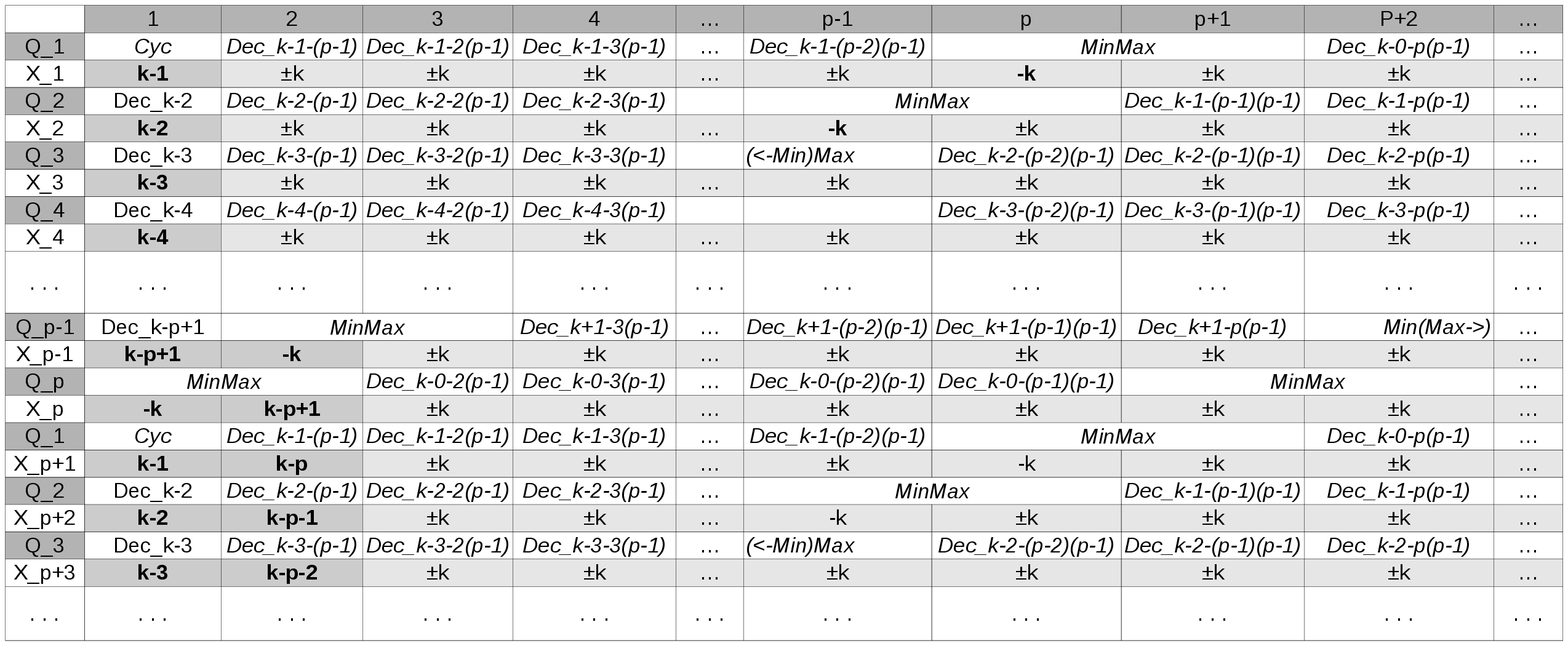, width=4.8in}
\caption{The initial steps of a computation on sequences of interval descriptors.}
\label{states}
\end{figure}
\begin{fact} \label{fct-15}
For all $i$ and $l$ such that $1 < i \le p(b_k-1) - (\frac{b_k}{2}-1)$ and $1 \le
l \le b_k/2$ the pair $(X^k_{i-1,l}, X^k_{i,l})$ is equal to either (a)
$(U^k_{x',l}, U^k_{x,l})$ or (b) $(V^k_{x',l}, V^k_{x,l})$ or (c) $(W^k_{x',l},
W^k_{x,l})$ or (d) $(Z^k_{l}, Z^k_{l})$, where $x' = (i-1) \bmod p$ and $x = i
\bmod p$.
\end{fact}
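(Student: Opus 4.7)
The plan is purely combinatorial: unfold Definition \ref{def-X} and verify the claim by case analysis on which of the three defining ranges for $X^k_{i-1}$ and $X^k_i$ apply, and within each case on the position of $l$ relative to the join-index $j(i)$ (the first argument of $join_k$ in the definition of $X^k_i$). The governing observation is that the four templates $U^k_x$, $V^k_x$, $W^k_x$, $Z^k$ differ only on entries where $x+l \not\equiv 1 \pmod p$; whenever $x+l \equiv 1 \pmod p$, Definition \ref{def-seq} gives $V^k_{x,l} = U^k_{x,l} = -k$ and $W^k_{x,l} = Z^k_l = 0$, so the templates collapse pairwise. The proof then reduces to showing that every move of the join-index lands on precisely a position where such a collision occurs.

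First I would dispose of the easy subcase: if $i-1$ and $i$ lie in the same defining range and $l$ is strictly on the same side of the join-index at both steps, then $(X^k_{i-1,l}, X^k_{i,l})$ is by construction $(T^k_{x',l}, T^k_{x,l})$ for the template $T$ of that side. The interesting subcase is when the join-index moves past $l$. In Phase 1 the index $\lceil(i+1)/(p-1)\rceil$ increases from $j-1$ to $j$ exactly at $i=(j-1)(p-1)$; for $l=j$ this yields $x \equiv 1-j \pmod p$, hence $x+l \equiv 1 \pmod p$, so $V^k_{x,l} = U^k_{x,l} = -k$, and the pair $(U^k_{x',l}, V^k_{x,l})$ equals $(U^k_{x',l}, U^k_{x,l})$. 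Phase 3 is analogous with $Z$ in place of $V$ and $W$ in place of $U$: at the boundary one obtains $W^k_{x,l} = Z^k_l = 0$, so the pair matches $(W^k_{x',l}, W^k_{x,l})$. Phase 2 is the dual case, with the join-index $\frac{b_k}{2}p - i$ decreasing by one per step; when the boundary shrinks past $l$, I would compute $x+l \equiv 1 \pmod p$ at step $i$, forcing $W^k_{x,l} = 0$, while at step $i-1$ we have $x'+l \equiv 0 \pmod p$, so $V^k_{x',l} = e^k_p(x',l) = W^k_{x',l}$, and the pair again matches $(W^k_{x',l}, W^k_{x,l})$.

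It then remains to handle the two phase-boundary steps $i = \frac{b_k}{2}(p-1)$ and $i = \frac{b_k}{2}p$. At the former, both join-indices equal $\frac{b_k}{2}$, so every position is in the left $V$-part at both steps and the pair is automatically $(V^k_{x',l}, V^k_{x,l})$. At the latter, both join-indices equal $1$; positions $l \ge 2$ lie in the $W$-part for both steps, while the single position $l=1$ has $x=0$ and $x'=p-1$, so $x+1 \equiv 1 \pmod p$ gives $Z^k_1 = W^k_{0,1} = 0$ and $x'+1 \not\equiv 1 \pmod p$ gives $V^k_{p-1,1} = e^k_p(p-1,1) = W^k_{p-1,1}$, hence the pair again matches $W$.

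The main obstacle is purely bookkeeping: checking that the three join-formulas in Definition \ref{def-X} really do move the boundary by exactly one position per step (they do, since $\lceil(i+1)/(p-1)\rceil - \lceil i/(p-1)\rceil \in \{0,1\}$ and the Phase 2 boundary decreases by exactly $1$), and then verifying at each boundary step the single modular identity $x+l \equiv 1 \pmod p$. Once this identity is in hand for each of the three phases and each of the two phase transitions, the pairwise collapse of the four templates follows directly from Definitions \ref{def-seq} and \ref{def-X}, and the fact is proved.
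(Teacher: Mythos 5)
Your proposal is correct and follows essentially the same route as the paper's proof: both reduce to the observation that the only nontrivial positions are those where the $join_k$ boundary moves, and both verify case by case (within each of the three phases and at the two phase transitions) that at such a position $x+l\equiv 1\pmod p$, which collapses the relevant pair of templates ($U$ with $V$, or $W$ with $Z$, or $V$ with $W$ at the step before). Your modular computations at each boundary match the paper's, so no gap.
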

\begin{proof}
The fact is obviously true for such pair of $i$ and $l$ that both $X^k_{i-1}$ and
$X^k_i$ are defined in the same case of Definition \ref{def-X} and the first
argument of $join_k$ does not change its value between $X^k_{i-1}$ and $X^k_i$.
Thus we have to prove the fact for the following other cases.

\noindent\textbf{Case $1 < i < \frac{b_k}{2}(p-1)$}. We have to consider only $i =
a(p-1)$ and $l = a + 1$ for an integer $a$. Then $x = i \bmod p \equiv -a \pmod p$
and $x + l \equiv 1 \pmod p$ By Definition \ref{def-X}, we have $X^k_{i-1,l} =
U^k_{x',l}$ and $X^k_{i,l} = V^k_{x,l}$. By Definition \ref{def-seq}, $V^k_{x,l} =
U^k_{x,l}$, thus we are in case (a) of the fact.

\noindent\textbf{Case $i = \frac{b_k}{2}(p-1)$}. Then $X^k_{i-1} =
join_k(b_k/2, V^k_{x'}, U^k_{x'}) = V^k_{x'}$ and $X^k_i = join_k(b_k/2,
V^k_x,W^k_x) = V^k_x$ by the definition. It follows that for all values of $l$
we get case (b) of the fact.

\noindent\textbf{Case $\frac{b_k}{2}(p-1) < i < \frac{b_k}{2}p$}. We have to
consider only $l = \frac{b_k}{2}p - i$. If $x + l \not\equiv 1 \pmod p$ then
$W^k_{x,l} = V^k_{x,l}$, by Definition \ref{def-seq}, and we are again in case (b)
of the fact. Otherwise $x + l \equiv 1 \pmod p$, but $x' + l \not\equiv 1 \pmod
p$. In this case $X^k_{i-1} = V^k_{x',l} = W^k_{x',l}$ and $X^k_i = W^k_{x,l}$. 
Thus we get case (c) of the fact.

\noindent\textbf{Case $i = \frac{b_k}{2}p$}. Then $X^k_{i-1} = join_k(1, V^k_{x'},
W^k_{x'})$ and $X^k_i = join_k(1, Z^k,W^k_x)$ by the definition. The only case we
have to check is $l = 1$. In this case $x + l \equiv 1 \pmod p$ and, by Definition 
\ref{def-seq}, $W^k_{x,1} = 0 = Z^k_1 = X^k_{i,1}$. In addition, we have $x' + l
\not\equiv 1 \pmod p$, therefore $X^k_{i-1,1} = V^k_{x',1} = W^k_{x',1}$ and we
have case (c) of the fact.

\noindent\textbf{Case $\frac{b_k}{2}p < i \le p(b_k-1) - (\frac{b_k}{2}-1)$}. In 
this last case we have to consider only $i = \frac{b_p}{2}p + b(p-1)$ and $l = b+1$ 
for an integer $b$. Then $x \equiv -b \pmod p$ and, consequently, $x +l \equiv 1 
\pmod p$. It follows that $X^k_{i,l} = Z^k_l = 0 = W^k_{x,l}$ and, by Definition 
\ref{def-X}, $X^k_{i-1,l} = W^k_{x',l}$, so we are again in case (c) of the fact. 
\qed
\end{proof}
In the next key lemma of this subsections we claim that state sequences defined in
Definition \ref{def-X} really describe any computation on intervals assuming that
we start with a balanced 2-flat sequence. 
\begin{lemma}\label{l7}
For $k\ge p$ and each $i=1,2,\ldots,p(b_k-1)-\frac{b_k}{2}$ the following 
inclusion holds:
\[ (\hat{Q}^k_{i\bmod p +1})(I(X^k_i))\subseteq I(X^k_{i+1}). \]
\end{lemma}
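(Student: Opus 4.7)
The plan is a double induction: on $i$ for the sequence of rounds, and within each round a coordinate-by-coordinate check over $l \in \{1, \ldots, b_k/2\}$. Fact \ref{fct-15}, shifted one step forward, tells us that for every $(i, l)$ the pair $(X^k_{i,l}, X^k_{i+1,l})$ falls into one of four families $(U^k_{*,l}, U^k_{*,l})$, $(V^k_{*,l}, V^k_{*,l})$, $(W^k_{*,l}, W^k_{*,l})$, or $(0, 0)$. Combined with Definition \ref{def-seq}, each coordinate transition is therefore one of a short list of concrete interval-label transitions: $-k \to -k$, $-k \to k-1$, $k-1 \to e^k_p((i+1) \bmod p, 1)$, $e \to e-1$, $1 \to 0$, $0 \to 0$, $\pm k \to \pm k$, $\pm k \to -k$, and the two-coordinate swap $(\pm k, -k) \to (-k, \pm k)$.

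The second step is to identify, for each coordinate $l$, which function of $\hat{Q}^k_{(i \bmod p)+1}$ actually acts at $l$. By Definition \ref{defQ} and Fact \ref{fct-reduce} the candidate is the reduced $cyc^k$ (only at $l = 1$ when $i \equiv 0 \pmod p$), some reduced $mov^k_j$ touching $\{j, j+1\}$ (or just $l = j = b_k/2$ in the $Min$ variant), or some reduced $dec^k_{j,s}$ acting at $j$. The congruence conditions in the definitions of $MV^k_x$ and $DC^k_x$ mesh exactly with the condition ``$x + l \equiv 1 \pmod p$'' from Definition \ref{def-seq}; a short arithmetic check confirms this alignment round by round. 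A key computation is that when a reduced $dec^k_{j,s}$ acts at $l = j$ in step $i \to i+1$, the subscript $k - s - 1$ supplied by Fact \ref{fct-reduce}(ii) equals exactly $e^k_p((i+1) \bmod p, l)$, which is what is needed to invoke Fact \ref{fct-12}(1).

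With this alignment in place, each of the local transitions is settled directly by Fact \ref{fct-12}: the invariances $Dec_i(I(w)) \subseteq I(w)$ for $w \in \{0, -k, \pm k\}$ and $Cyc(I(w)) \subseteq I(w)$ for $w \in \{0, k-1\}$ handle ``no-change'' coordinates; the drop $Dec_i(I(i+1)) \subseteq I(i)$ implements the decrement of $e^k_p$; the rise $Cyc(I(-k)) \subseteq I(k-1)$ fires once per wave at $l = 1$; the reductions $Min(I(\pm k)) \subseteq I(-k)$ and $Min(I(1)) \subseteq I(0)$ finalize values at position $b_k/2$; and the swaps $MinMax(I(\pm k, -k)) \subseteq I(-k, \pm k)$ and $MinMax(I(i, w)) \subseteq I(w, i)$ move $-k$ and $\pm k$ tokens in the expected directions. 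When no function acts at $l$, the coordinate is frozen and the inclusion is trivial.

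The main obstacle is the bookkeeping at the two regime boundaries of Definition \ref{def-X}, namely $i = \frac{b_k}{2}(p-1)$ and $i = \frac{b_k}{2}p$. At the first, the cut index of $join_k$ reaches $b_k/2$ and the right half switches from $U^k$- to $W^k$-intervals; at the second, the left half starts being replaced by $Z^k$. Both pair types (b) and (c) of Fact \ref{fct-15} appear simultaneously there, and one must verify that the $mov^k$ or $cyc^k$ function applied at the cut is compatible with the resulting $V^k \to W^k$ or $V^k \to Z^k$ transition. Away from the boundaries the three phases can each be handled by a uniform case distinction on the residue of $i + l$ modulo $p$, so most of the proof is routine once the boundary behavior is pinned down.
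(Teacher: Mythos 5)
Your proposal follows essentially the same route as the paper's proof: it reduces the claim to coordinate-wise interval transitions classified by Fact \ref{fct-15}, matches each position to the unique reduced function acting there via the congruence conditions of Definition \ref{defQ}, verifies the arithmetic alignment between the $Dec$ subscript $k-s-1$ and the decrement of $e^k_p$, and closes each case with the inclusions of Fact \ref{fct-12}, with the regime boundaries of Definition \ref{def-X} handled through the case analysis already packaged in Fact \ref{fct-15}. The paper organizes the same verification as one case table per function ($Cyc$, $Dec_*$, $MinMax$, $Min$) rather than per phase, but the content is identical.
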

\begin{proof}
We have to prove, equivalently, that for $k\ge p$ and $x = 1, \ldots, p$ the
following inclusions are true: $( \hat{Q}^k_x)(I(X^k_{pj+x-1})) \subseteq
I(X^k_{pj+x})$, where $j=1, \ldots\ \lfloor\frac{p(b_k-1)-b_k/2}{p}\rfloor$ for
$x=1$ and $j=0,1, \ldots \lfloor\frac{p(b_k-1)-b_k/2-x+1}{p}\rfloor$ for $x > 1$.
The value of the function $\hat{Q}^k_x$, $x=1, \ldots, p$, on a fixed position can
be computed with the help of one of the functions $Cyc$, $Dec_*$, $MinMax$ and
$Min$ introduced in Definition \ref{def-Cyc} (see also Fact \ref{fct-reduce}). We
consider these functions one after another analysing which positions in state
sequences are modified by them and what values are in that positions before and
after applying a function. In the following, we denote by $A_{i,j}$ the $j$-th
element of a sequence $A_i$.

The function $Cyc$ corresponds to $\hat{cyc}^k$, which is used only in the
definition of $\hat{Q}^k_1$ and modifies just the position 1 of the sequences
$I(X^k_{pj})$, where $j=1, \ldots\ \lfloor\frac{p(b_k-1)-b_k/2}{p}\rfloor$. Thus
it is enough to show the inclusion $Cyc(I(X^k_{pj,1}))\subseteq I(X^k_{pj+1,1})$.
By Definition \ref{def-X} the argument of $Cyc\cdot I$ can be either $X^k_{pj,1} =
V^k_{0,1} = -k$ for $pj < \frac{b_k}{2}p$ or $X^k_{pj,1} = Z_1 = 0$ for $pj \ge
\frac{b_k}{2}p$. The corresponding value of the next state sequence is
$X^k_{pj+1,1} = V^k_{1,1} = k-1$ for $pj+1 < \frac{b_k}{2}p$ or $X^k_{pj+1,1} =
Z_1 = 0$ for $pj+1 \ge \frac{b_k}{2}p $. Using Fact \ref{fct-12}, both inclusions
$Cyc(I(-k)) \subseteq I(k-1)$ and $Cyc(I(0)) \subseteq I(0)$ are true and we are
done.

In the set $\hat{Q}^k_x$ there are several $\hat{dec}^k_{l,s}$ functions, each of
which satisfies the conditions $x + l \not\equiv 1, 2 \pmod p$, $1 \le l \le
b_k/2$ and $s = (p-2)(l-1) - 1 - (x + l - 1) \bmod p \le k-1$. We know also that
$args(\hat{dec}^k_{l,s}) = \{l\}$ and $(\hat{dec}^k_{l,s}(\overline{d}))_l =
Dec_{k-s-1}(d_l)$ for a sequence $\overline{d} = (d_1, \ldots, d_{b_k/2})$, thus we
can rewrite our proof goal for that functions as the following fact.
\begin{fact}
Let $k\ge p$. For any $x$, $l$ and $s$ such that $1 \le x \le p$, $1 \le l \le
b_k/2$, $x + l \not\equiv 1,2 \pmod p$ and $1 \le s = (p-2)(l-1) - 1 - (x + l - 1)
\bmod p \le k -1$ we have
\[Dec_{k-s-1}(I(X^k_{pj+x-1,l})) \subseteq I(X^k_{pj+x,l}),\] 
for any $j\ge 0$ such that state sequences  $X^k_{pj+x-1}$
and $X^k_{pj+x}$ are defined.
\end{fact}

The sequences $X^k_*$ are defined with the help of sequences $U^k_*$, $V^k_*$,
$W^k_*$ and $Z_*$. In $U^k_x$, $V^k_x$ and $W^k_x$ there are {\em strange} 
``moving-left'' elements $-k$ or $0$ that appears on positions whose indices 
$\equiv -x+1 \pmod p$. Thus those strange elements cannot appear on position $l$ in 
$X^k_{pj+x-1}$ and $X^k_{pj+x}$, since, otherwise, $l \equiv -x+1 \pmod p$ or $l 
\equiv -(x-1)+1 \pmod p$, but we know that $l+x \not\equiv 1,2 \pmod p$. By Fact 
\ref{fct-15}, we have to consider just the following three cases of 
values $X^k_{pj+x-1,l}$ and $X^k_{pj+x,l}$. 
\setlength{\tabcolsep}{3pt}
\begin{center}
\begin{tabular}{||l|l||c|c||}
\hline\hline \multicolumn{1}{||c|}{Cases of} & 
      \multicolumn{1}{|c||}{Cases of} & Value of  & Value of   \\ 
      $y=X^k_{pj+x-1,l}$  & $y'=X^k_{pj+x,l}$  & $y$  & $y'$  \\ \hline
\hline $y=U^k_{x-1,l}$ & $y'=U^k_{x,l}$ & $\pm k$  & $\pm k$  \\ 
\cline{1-4} $y=V^k_{x-1,l}=W^k_{x-1,l}$ & $y'=V^k_{x,l}=W^k_{x,l}$ & $k-s$ & 
$k-s-1$ \\ 
\cline{1-4} $y=Z_{l}$ & $y'=Z_{l}$ & 0 & 0  \\ 
\hline\hline 
\end{tabular}
\end{center}
In all cases above we have $Dec_{k-s-1}(I(y)) \subseteq I(y')$ by Fact
\ref{fct-12}.1. Since it it not obvious that $V^k_{x-1,l} = k-s$ and $V^k_{x,l} =
k - s - 1$ (the second case in the table), we prove these equations now. Since $x
+ l \not\equiv 1,2 \pmod p$ and $s \le k-1$, it follows that $V^k_{x-1,l} =
e^k_p(x-1,l) = \max(0, k - (p-2)(l-1) - (x-1 + l - 1) \bmod p) = k -
\min(k,(p-2)(l-1) - 1 + (x + l - 1) \bmod p) = k - s$. In a similar way,
$V^k_{x,l} = e^k_p(x,l) = \max(0, k - (p-2)(l-1) - (x + l - 1) \bmod p) = k - 1 -
\min(k-1, -1 + (p-2)(l-1) + (x + l - 1) \bmod p) = k - s - 1$.

The next function to be analysed is $MinMax$. It corresponds to all
$\hat{mov}^k_{l}$ functions in a set $\hat{Q}^k_x$, where $1 \le l < b_k/2$ and $1
\le x \le p$. By Definitions \ref{defQ} and \ref{def-Cyc}, each such function
satisfies the condition $l+x \equiv 1 \pmod p$. We know also that
$args(\hat{mov}^k_{l}) = \{l,l+1\}$ and, by Fact \ref{fct-reduce}
$(\hat{mov}^k_{l}(\overline{d}))_{l+j} = (MinMax(d_l,d_{l+1}))_{1+j}$ for $j \in
\{0,1\}$ and any sequence $\overline{d} = (d_1, \ldots, d_{b_k/2})$. Thus, to
prove the lemma, it suffices to show the following fact.
\begin{fact}
Let $k\ge p$. For any $x$ and $l$ such that $1 \le x \le p$, $1 \le l < b_k/2$ and
 $l+x \equiv 1 \pmod p$ we have
\[MinMax(I(X^k_{pj+x-1,l},X^k_{pj+x-1,l+1})) \subseteq I(X^k_{pj+x,l},X^k_{pj+x,l+1}),\]
where $j \ge 0$ is an integer such that state sequences  $X^k_{pj+x-1}$
and $X^k_{pj+x}$ are defined.
\end{fact}

As we do with the previous functions, we prove the fact by considering all possible
cases in the following table. All of its values are set according to Definition
\ref{def-seq}, since $x + l = x -1 + l + 1 \equiv 1 \pmod p$. To reduce the size
of the table we also use the following shortcuts: $a=pj+x$ and $y = k - (p-2)l - 1$.
\begin{center}
\begin{tabular}{|*{4}{|l}|*{4}{|c}||}
\hline\hline \multicolumn{2}{||c|}{Cases of $(s_1,s_2)$} & 
      \multicolumn{2}{|c||}{Cases of $(t_1,t_2)$} & 
      \multicolumn{2}{|c|}{Value of} & 
      \multicolumn{2}{|c||}{Value of} \\ 
  $s_1 = X^k_{a-1,l}$  & $s_2 = X^k_{a-1,l+1}$  & $t_1 = X^k_{a,l}$  & $t_2 = 
  X^k_{a,l+1}$  & $s_1$  & $s_2$  & $t_1$  & $t_2$ \\ \hline
\hline $U^k_{x-1,l}$ & $U^k_{x-1,l+1}$ & $U^k_{x,l} = V^k_{x,l}$ & $U^k_{x,l+1}$ &  
          $\pm k$    &       $-k$      &    $-k$     &    $\pm k$ \\
\hline $V^k_{x-1,l}$ & $V^k_{x-1,l+1} = U^k_{x-1,l+1}$ & $V^k_{x,l}$ & $V^k_{x,l+1} 
= W^k_{x,l+1}$ & $y$ &       $-k$      &   $-k$      &    $y$ \\ 
\hline $W^k_{x-1,l} = V^k_{x-1,l}$ & $W^k_{x-1,l+1} = Z_{l+1}$ & $W^k_{x,l}$ & 
$W^k_{x,l+1}$ &            $y$     &       $0$       &       $0$   &     $y$    \\ 
\hline $Z_{l}$ & $ Z_{l+1} = W^k_{x-1,l+1}$ & $Z_{l}$ & $Z_{l+1}$ & 
         $0$   &       $0$                   &   $0$   &    $0$   \\ 
\hline\hline 
\end{tabular}
\end{center}
In all cases above we have $MinMax(I(s_1,s_2)) \subseteq I(t_1,t_2)$ by Facts
\ref{fct-12}.4 and \ref{fct-12}.5. Thus, to end the proof of the fact we have to
check whether $V^k_{x-1,l} = V^k_{x,l+1} = y$. From the definition $V^k_{x-1,l} =
e^k_p(x-1,l) = \max(0, k - (p - 2)(l - 1) - (x-1+l-1) \bmod p) = \max(0, k -
(p-2)l + (p-2) - (p-1)) = \max(0, k - (p-2)l - 1) = y$. The equality $(x-1+l-1)
\bmod p = p-1$ follows from $x + l \equiv 1 \pmod p$. We use also the fact that
from $l < b_k/2$ we can get $(p-2)l + 1 \le k$. In the same way, $V^k_{x,l+1} =
e^k_p(x,l+1) = \max(0, k - (p - 2)l - (x+l+1-1) \bmod p) = y$.

The last function to be considered is $Min$. It corresponds to all $\hat{mov^k_l}$
functions in $\hat{Q}^k_x$, $1 \le x \le p$, such that $x + l \equiv 1 \pmod p$
and $l = (k-2)/(p-2)$. Thus, to finish the proof of the lemma, it suffices to show
the following fact.
\begin{fact}
Let $k\ge p$. For all $1 \le x \le p$ and $l = (k-2)/(p-2)$ such that 
$x + l \equiv 1 \pmod p$ we have 
\[ Min(I(X^k_{pj+x-1,l})) \subseteq I(X^k_{pj+x,l}),\] 
where $j \ge 0$ is an integer such that state sequences  $X^k_{pj+x-1}$
and $X^k_{pj+x}$ are defined.
\end{fact}
As in the case of previous functions we prove the fact by considering all 
possible cases in the following table. 
\begin{center}
\begin{tabular}{||l|l||c|c||}
\hline\hline
      \multicolumn{1}{||c|}{Cases of} & \multicolumn{1}{|c||}{Cases of} & 
      Value of  & Value of  \\ 
      $s=X^k_{pj+x-1,l}$  & $t=X^k_{pj+x,l}$  & $s$  & $t$  \\ \hline
\hline $s = U^k_{x-1,l}$ & $t = U^k_{x,l} = V^k_{x,l}$ & $\pm k$ & 
$-k$  \\ 
\hline $s = V^k_{x-1,l} = W^k_{x-1,l}$ & $t = W^k_{x,l} = Z_{l}$ & 
$1$ & 
$0$ \\ 
\hline $s = Z_{l}$ & $t = Z_{l}$ & 0 & $0$ \\ 
\hline\hline 
\end{tabular}
\end{center}
In all cases above we have $Min(I(s)) \subseteq I(t)$ by Fact \ref{fct-12}.3.
Observe that $l + x - 1 \equiv 0 \pmod p$, thus we have to check whether 
$V^k_{x-1,l} = 1$. By the definition $V^k_{x-1,l} = e^k_p(x-1,l)  = \max(0, k - 
(p-2)(l-1) - (x -1 + l -1) \bmod p) = \max(0, k - (p-2)l + p - 2 - (p-1)) = \max(0, 
k - (k-2) - 1) = 1$. \qed
\end{proof}
\begin{lemma} \label{merge-bal}
Let $k\ge p$, $D = p(b_k - 1) - (\frac{b_k}{2} - 1)$ and let $\overline{c} =
(c_1,\ldots,c_{b_k})$ be a balanced 2-flat sequence of integers from
$[0,2^{k-1}-1]$ and let $s=height(\overline{c})$. Let $f = f_{D}\circ f_{D-1}\circ
\ldots\circ f_1$, where $f_i = Q^k_{((i-1)\bmod 3)+1}$, $i=1,\ldots,D$. Then
$f(\overline{c}) = (\frac{s}{2})^{b_k}$ if $s$ is even or $f(\overline{c}) =
(\frac{s-1}{2})^{b_k/2}\cdot(\frac{s+1}{2})^{b_k/2}$ otherwise.
\end{lemma}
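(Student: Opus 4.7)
The plan is to reduce the claim to the reduced-sequence analysis developed in Subsection \ref{ss:2} and then read off the exact final values by combining the envelope-of-intervals lemmas with a short integrality argument.

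First I would observe that by Lemma \ref{l4} every function in $Q^k_1\cup\ldots\cup Q^k_p$ preserves both balancedness and the height $s$, so each intermediate sequence along the trajectory of $f$ is balanced of height $s$ and is fully determined by its reduced form in $R^{b_k/2}$. By Lemma \ref{l5}, the effect of $Q^k_j$ on such a balanced input agrees, after reducing, with $\hat{Q}^k_j$. Thus the whole problem becomes: show that the reduced initial sequence $\overline{d}=reduce(\overline{c})\in R^{b_k/2}$, after $D$ alternating applications of $\hat{Q}^k_1,\ldots,\hat{Q}^k_p$, lands inside $I(0)^{b_k/2}$.

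Next I would track the reduced sequence through the state sequences $X^k_i$. By Lemma \ref{l6}, after the first stage the reduced sequence lies in $I(X^k_1)$; then Lemma \ref{l7}, applied inductively for $i=1,\ldots,D-1$ (note $D-1 = p(b_k-1)-\tfrac{b_k}{2}$, matching the hypothesis of Lemma \ref{l7}), gives that after $D$ stages the reduced sequence lies in $I(X^k_D)$. A short direct computation in Definition \ref{def-X} (third case with $i=D$) shows $\lceil (D+1-\tfrac{b_k}{2}p)/(p-1)\rceil = \tfrac{b_k}{2}$, since the numerator equals $\tfrac{b_k}{2}(p-1)-(p-2)$; hence $X^k_D = Z^k = (0,\ldots,0)$, and every coordinate of the reduced sequence after $D$ stages lies in $I(0)=[-\tfrac12,0]$.

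Finally I would de-reduce using integrality. All functions in $Q^k_*$ are defined via $\min$ and $\max$ of integer-valued inputs, so if $f(\overline{c})=(c'_1,\ldots,c'_{b_k})$ then each $c'_i$ is an integer, and $c'_i-\tfrac{s}{2}\in[-\tfrac12,0]$. This forces $c'_i=\tfrac{s}{2}$ when $s$ is even, and $c'_i=\tfrac{s-1}{2}$ when $s$ is odd, for $1\le i\le \tfrac{b_k}{2}$. The second half is then recovered from balancedness via $c'_{b_k-i+1}=s-c'_i$, which yields $\tfrac{s}{2}$ and $\tfrac{s+1}{2}$ respectively, matching the two claimed formulas. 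The most delicate step is the endpoint computation $X^k_D=Z^k$, which depends on how the ceiling function in Definition \ref{def-X} evaluates precisely at $i=D$; once that boundary case is pinned down, everything else is a direct application of the machinery already built up in Lemmas \ref{l4}--\ref{l7}.
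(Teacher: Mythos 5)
Your proposal is correct and follows essentially the same route as the paper's proof: pass to reduced sequences via Lemma \ref{l4} and Lemma \ref{l5}, drive them into $I(X^k_D)=I(0)^{b_k/2}$ using Lemma \ref{l6} and an induction on Lemma \ref{l7}, and then pin down the exact output by an integrality argument (the paper phrases this as the reduced entries lying in $\mathbb{Z}$ or $\mathbb{Z}+\tfrac12$ according to the parity of $s$, while you equivalently argue on the unreduced integer entries). Your explicit verification that the ceiling in Definition \ref{def-X} evaluates to $\tfrac{b_k}{2}$ at $i=D$ is a detail the paper leaves implicit, and is a welcome addition.
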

\begin{proof}
Since each $f_i$ maps a balanced sequence to a balanced one, let $\hat{f}_i
= reduce(f_i,s) = \hat{Q}^k_{((i-1)\bmod p)+1}$, where the later
equality follows from Lemma \ref{l5}. Let also $\overline{d}_0 =
reduce(\overline{c})$ and let $\overline{d}_i = \hat{f}_i(\overline{d}_{i-1})$
for $i=1,\ldots,D$. Then $\overline{d}_1\in I(X^k_1)$ by Lemma \ref{l6}
and for $i=2,\ldots,D$ we get $\overline{d}_i\in I(X^k_i)$ by an easy
induction and Lemma \ref{l7}. Let $\mathbb{Z}$ denote, as usual, the set of 
integers. By $\mathbb{Z}_{\frac{1}{2}}$ we will denote the set 
$\{z+\frac{1}{2}| z\in\mathbb{Z}\}$. Looking at Definitions \ref{reduce} and 
\ref{def-Cyc} observe the 
following fact:
\begin{fact}
If $s$ is even then all elements of sequences $\overline{d}_i$, $i=0,\ldots, 
D$, are integers. If $s$ is odd then  all elements of sequences 
$\overline{d}_i$, $i=0,\ldots, D$, are in $\mathbb{Z}_{\frac{1}{2}}$.
\end{fact}
Since $\overline{d}_{D} \in I(X^k_{D}) = I(0^{b_k/2})$ and $I(0)\cap 
\mathbb{Z} = \{0\}$ and $I(0)\cap \mathbb{Z}_{\frac{1}{2}} = 
\{\frac{1}{2}\}$, it follows that $\overline{d}_{D} = 0^{b_k/2}$ if $s$ is 
even and  $\overline{d}_{D} = \frac{1}{2}^{b_k/2}$, otherwise. Using now 
the definition of $s$-extended sequence to $0^{b_k/2}$ and $\frac{1}{2}^{b_k/2}$ 
we get the desired conclusion of the lemma. \qed
\end{proof}
In this way, with respect to Lemma \ref{l3}, we have proved that the network $M_k$
is able to merge in $p(b_k - 1) - (\frac{b_k}{2} - 1)$ stages two sorted sequences
given in odd and even registers, provided that the numbers of ones in our matrix
columns form a balanced sequence. If the sequence is not balanced, $\frac{b_k}{2}
- 1$ additional stages are needed to get a sorted output.

\subsection{Analysis of General Columns}\label{ss:3} %*****************************

In a general case we will use balanced sequences as lower and upper bounds on
the numbers of ones in our matrix columns and observe that $Q^k_x$, $1 \le x \le 
p$, are monotone functions (see Fact \ref{fct-10}).
\begin{definition}\label{def-14}
Let $k\ge p$ and let $\overline{c} = (c_1,\ldots,c_{b_k})$ be a
2-flat sequence of integers from $[0,2^{k-1}-1]$ that is not balanced.
Since both $\overline{c}_{odd} =(c_1,\ldots,c_{b_k-1})$ and 
$\overline{c}_{evn} =(c_2,\ldots,c_{b_k})$ are 
flat sequences, let $i$ ($j$, respectively) be such that $c_{2i-1} < 
c_{2i+1}$ 
($c_{b_k-2j} < c_{b_k-2j+2}$, respectively) or let $i=k-2$ ($j=k-2$) if 
$\overline{c}_{odd}$ ($\overline{c}_{evn}$, respectively) is a constant 
sequence. The defined below sequences $\check{c}$ and $\hat{c}$ we will call 
lower and upper bounds of $\overline{c}$. If $i < j$ then for $l=1, \ldots, 
b_k$ 
\begin{align*}
\check{c}_l = \begin{cases}
c_1       &\text{if $l$ is odd and $l\le 2j-1$}\\
c_{b_k-1}   &\text{if $l$ is odd and $l\ge 2j+1$}\\
c_l       &\text{if $l$ is even}\\
\end{cases}
~&~
\hat{c}_l = \begin{cases}
c_{b_k-1} &\text{if $l$ is odd}\\
c_{b_k}   &\text{if $l$ is even}\\
\end{cases}
\end{align*} 
If $i > j$ then for $l=1, \ldots, b_k$ 
\begin{align*}
\check{c}_l = 
\begin{cases}
c_1      &\text{if $l$ is odd}\\
c_2      &\text{if $l$ is even}\\
\end{cases}
~&~
\hat{c}_l = 
\begin{cases}
c_l       &\text{if $l$ is odd}\\
c_2       &\text{if $l$ is even and $l\le b_k-2i$}\\
c_{b_k}   &\text{if $l$ is even and $l >  b_k-2i$}\\
\end{cases}
\end{align*} 
\end{definition}
\begin{fact}\label{fct-19}
For $k\ge p$ and any not balanced 2-flat sequence $\overline{c} =
(c_1,\ldots,c_{b_k})$ of integers from $[0,2^{k-1}-1]$ the sequences
$\check{c}$ and $\hat{c}$ are balanced, $height(\check{c})+1 =
height(\hat{c})$ and $\check{c}\le\overline{c}\le\hat{c}$.
\end{fact}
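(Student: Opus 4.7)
My plan is to verify each of the three claims (balance of $\check{c}$ and $\hat{c}$, heights differing by one, and the sandwich $\check{c}\le\overline{c}\le\hat{c}$) by direct case analysis on the shape of $\overline{c}$. The key observation is structural: since $\overline{c}$ is 2-flat, both $\overline{c}_{odd}=(c_1,c_3,\ldots,c_{b_k-1})$ and $\overline{c}_{evn}=(c_2,c_4,\ldots,c_{b_k})$ are flat sequences. Hence each takes at most two consecutive values, say $\{c_1,c_1{+}1\}=\{c_1,c_{b_k-1}\}$ on the odd side and $\{c_2,c_2{+}1\}=\{c_2,c_{b_k}\}$ on the even side, with a unique jump position (if any). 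The indices $i$ and $j$ in Definition~\ref{def-14} record exactly these jump positions (with constants treated by $i$ or $j$ equal to $k-2$). A short direct computation shows that $\overline{c}$ is balanced iff the two jumps are aligned ($i=j$ once the constant conventions are made; note that $k-2\ge b_k/2$ since $b_k=2\lceil\tfrac{k-2}{p-2}\rceil\le k$ for $p\ge 4$, so the constant conventions always end up in the correct side of $i<j$ vs.\ $i>j$). Thus the unbalanced hypothesis forces exactly one of the two cases $i<j$ or $i>j$.

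Next, I handle the case $i<j$; the case $i>j$ is symmetric by swapping the roles of odd and even subsequences. For $\hat{c}$, each odd entry is the odd maximum $c_{b_k-1}$ and each even entry is the even maximum $c_{b_k}$, so $\hat{c}_l+\hat{c}_{b_k-l+1}=c_{b_k-1}+c_{b_k}$ for every $l$ (since one index is odd and the other even), giving balance with height $h_+:=c_{b_k-1}+c_{b_k}$. The bound $\overline{c}\le\hat{c}$ is immediate because $c_{b_k-1}$ (resp.\ $c_{b_k}$) dominates every odd (resp.\ even) entry. For $\check{c}$, the even entries are unchanged and the odd entries are set to $c_1$ on positions $l\le 2j-1$ and to $c_{b_k-1}$ on positions $l\ge 2j+1$; the inequality $\check{c}\le\overline{c}$ then follows because $c_1$ is the odd minimum while, since $i<j$, every odd position $l\ge 2j+1>2i+1$ is already past the odd-subsequence jump, so $c_l=c_{b_k-1}$ there.

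The only nontrivial piece is the balance of $\check{c}$. I will verify $\check{c}_l+\check{c}_{b_k-l+1}=c_1+c_2+1$ for every $l$ by splitting into four sub-cases according to the parity of $l$ and whether $l$ (or its mirror $b_k-l+1$) lies below or above the even-subsequence jump at $b_k-2j$. Using $c_{b_k-1}=c_1+1$ and $c_{b_k}=c_2+1$, each sub-case collapses to the same value $c_1+c_2+1$; this simultaneously gives $height(\check{c})=c_1+c_2+1=h_+-1$, settling the height claim.

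The main obstacle is the bookkeeping in this last step — keeping track of which of $l$ and $b_k-l+1$ is odd versus even, and on which side of its respective jump it lies, against the matched cutoff $2j{-}1\mid 2j{+}1$ dictated by the even-subsequence jump at $b_k-2j$. The symmetric case $i>j$ is handled by the same argument with the roles of odd/even indices (and the thresholds $2j$, $b_k-2i$) interchanged, and the degenerate cases in which one of the two subsequences is constant reduce to sub-cases of $i<j$ or $i>j$ because the convention $i=k-2$ (or $j=k-2$) is large enough that the corresponding piecewise definition in Definition~\ref{def-14} collapses to a single branch.
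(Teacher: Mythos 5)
Your proposal is correct and follows essentially the same route as the paper's proof: treat the case $i<j$ (with $i>j$ symmetric), read off balance of $\hat{c}$ directly, verify balance of $\check{c}$ by checking the pair sums $\check{c}_l+\check{c}_{b_k-l+1}$ against the jump relations $c_{b_k-1}=c_1+1$, $c_{b_k}=c_2+1$, and obtain the sandwich from the flatness of the odd and even subsequences. The only caveat is that your common value $c_1+c_2+1$ for the pair sums of $\check{c}$ presupposes $j<k-2$; in the degenerate constant case the common value is $c_1+c_{b_k}$ instead, but as you note the definition collapses to a single branch there, so the conclusion is unaffected (the paper handles this by splitting on $j=k-2$ explicitly).
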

\begin{proof}
Let $i$ and $j$ be defined as in Definition \ref{def-14}. We will consider 
only the case $i<j$. The proof of the other case is similar. Directly 
from the definition we get that $\hat{c}$ is balanced. To see that 
$\check{c}$ is also balanced let us check for $l=1,\ldots,b_k/2$ whether the 
sum $\check{c}_{2l-1} + \check{c}_{b_k-2l+2}$ is constant.
\[
\check{c}_{2l-1} + \check{c}_{b_k-2l+2} = \check{c}_{2l-1} + c_{b_k-2l+2}
= \begin{cases}
c_1 + c_{b_k-2l+2} = c_1 + c_{b_k}  & \text{if  $l\le j$}\\
c_{b_k-1} + c_{b_k-2l+2} = c_{b_k-1} + c_2 & \text{otherwise}
\end{cases}
\]
If $j = k-2$ there is no otherwise case and we are done. If $j < k-2$ then 
$c_{b_k}-c_2 = c_{b_k-1}-c_1 = 1$, because of the definition of $i$ and $j$ 
and we are also done. Moreover $height(\check{c}) + 1 = c_1 + c_{b_k} +1 = 
c_{b_k-1} + c_{b_k} = height(\hat{c})$. To prove that $\check{c}\le 
\overline{c}\le \hat{c}$ we consider even and odd indices. For even indices 
from the definition we have: $\check{c}_{2l} = c_{2l} \le c_{b_k} = 
\hat{c}_{2l}$. For odd indices $\hat{c}_{2l-1} = c_{b_k-1} \ge c_{2l-1} \ge 
c_1$. If $l\le j$ we are done, otherwise, $c_{2l-1} = c_{b_k-1} = 
\check{c}_{2l-1}$, because $\overline{c}_{odd}$ is flat. \qed
\end{proof}
\begin{theorem}\label{thm-19}
Let $k\ge p$, $D = p(b_k - 1)$ and let $\overline{c} = (c_1,\ldots,c_{b_k})$ be a
2-flat sequence of integers from $[0,2^{k-1}-1]$. Let $f = f_{D}\circ f_{D-1}\circ
\ldots\circ f_1$, where $f_i = Q^k_{((i-1)\bmod 3)+1}$, $i=1,\ldots,D$. Then
$f(\overline{c})$ is a flat sequence.
\end{theorem}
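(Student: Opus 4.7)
The plan is to bootstrap from Lemma \ref{merge-bal}, which handles balanced 2-flat inputs, to the general 2-flat case by combining the sandwich bounds of Fact \ref{fct-19} with the monotonicity of every $Q^k_i$ established in Fact \ref{fct-10}. First I would dispose of the balanced case: Lemma \ref{merge-bal} already gives, in only $D' = p(b_k - 1) - (b_k/2 - 1)$ stages, a flat output of the form $(s/2)^{b_k}$ or $((s-1)/2)^{b_k/2}((s+1)/2)^{b_k/2}$, and a direct check against Definition \ref{defFun} shows that each of $cyc^k$, $dec^k_{j,s}$ and $mov^k_j$ acts as the identity on any flat sequence (the relevant $\min$ and $\max$ arguments are already ordered and differ by at most $1$). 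Hence the remaining $D - D' = b_k/2 - 1$ stages change nothing, and $f(\overline{c})$ is flat in the balanced case.

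For a non-balanced 2-flat $\overline{c}$, Fact \ref{fct-19} supplies balanced 2-flat bounds $\check{c}\le\overline{c}\le\hat{c}$ with $height(\hat{c}) = height(\check{c})+1 = s+1$. By the balanced case just handled, both $f(\check{c})$ and $f(\hat{c})$ are flat, and reading off the formulas from Lemma \ref{merge-bal} one sees that the componentwise difference $f(\hat{c}) - f(\check{c})$ is one of the two \emph{step} vectors $0^{b_k/2}1^{b_k/2}$ (when $s$ is even) or $1^{b_k/2}0^{b_k/2}$ (when $s$ is odd). Applying the monotone $f$ to $\check{c}\le\overline{c}\le\hat{c}$ preserves the inequalities, so $f(\check{c})\le f(\overline{c})\le f(\hat{c})$; on $b_k/2$ positions the value of $f(\overline{c})$ is pinned to a common constant, and on the remaining $b_k/2$ positions it lies in a two-element set $\{v,v+1\}$ for some integer $v$. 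In particular all entries of $f(\overline{c})$ lie in $\{v,v+1\}$, so the range-at-most-one part of the flatness condition is automatic from the sandwich.

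The main obstacle I anticipate is the remaining nondecreasing part: proving that on the free half of the indices the values of $f(\overline{c})$ form a nondecreasing pattern of $v$'s followed by $(v+1)$'s. My plan is to reuse the interval and reduced-function machinery of Subsection \ref{ss:2}. After the first $D'$ stages the state is already sandwich-bounded between two flats differing on exactly one contiguous block, so tracking the last $b_k/2 - 1$ stages reduces, via Fact \ref{fct-reduce} and the propagation arguments underlying Lemma \ref{l7}, to a small bubble problem on the two-value alphabet $\{v,v+1\}$. Each application of the appropriate $Q^k_i$ moves any misplaced $v+1$ by one position across the short comparators toward the correct end of the free half, so $b_k/2 - 1$ extra stages suffice to make the full output nondecreasing, completing the proof.
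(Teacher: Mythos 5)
Your first two paragraphs coincide with the paper's own strategy: dispose of the balanced case via Lemma \ref{merge-bal} plus the observation that flat sequences are fixed points of all the $Q^k_*$ functions, then sandwich a general 2-flat input between the balanced bounds $\check{c}\le\overline{c}\le\hat{c}$ of Fact \ref{fct-19} and push the inequalities through $f$ by monotonicity (Fact \ref{fct-10}). Your computation that $f(\hat{c})-f(\check{c})$ is a step vector, so that half of the output is pinned and the other half is confined to $\{v,v+1\}$, is also correct and matches what the paper extracts in its Facts \ref{fct-21} and \ref{fct-22}. You also correctly identify that the sandwich alone does not yield flatness and that the real work is showing the free half comes out nondecreasing.

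The gap is in your third paragraph. You propose to finish with a ``small bubble problem'' confined to the last $b_k/2-1$ stages, claiming that ``each application of the appropriate $Q^k_i$ moves any misplaced $v+1$ by one position.'' That is not true: by Definition \ref{defQ}, the stage $Q^k_x$ contains only those $mov^k_j$ with $j+x\equiv 1\pmod p$, so in any single stage only about one in every $p$ adjacent pairs of the free half is compared. An arbitrary $\{v,v+1\}$-sequence of length $b_k/2$ cannot be sorted by such a sparse comparator schedule in $b_k/2-1$ stages, so the argument as framed would fail. What saves the construction --- and what the paper's Fact \ref{fct-23} proves by a careful double induction --- is that the misplaced elements do not form an arbitrary configuration at stage $D-(b_k/2-1)$: the $t$-th one first appears only after stage $i_t=pb_k/2+(p-1)(t-1)$, at position $t$ (resp.\ $b_k-t+1$), and from then on it rides the leftward- (resp.\ rightward-) moving wave of scheduled comparators, shifting exactly one position per stage because $(t-i)+(i_t+i)\equiv 1\pmod p$ holds for every $i$, until it is blocked by the previously settled minority elements at position $n_t$. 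The sorting is thus pipelined over all stages from $pb_k/2$ to $D$, synchronized with the comparator schedule, rather than performed in the tail. Your proposal needs this timing/synchronization argument (or an equivalent invariant tracking where each minority element can be after each stage); without it the nondecreasing half of the flatness condition is not established.
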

\begin{proof}
Let $\overline{c}$ be a 2-flat sequence of integers from $[0,2^{k-1}-1]$. If
$\overline{c}$ is balanced then $f(\overline{c})$ is a flat sequence due to Lemma
\ref{merge-bal} and an observation that flat sequences are not modified by $Q^k_*$ 
functions. Otherwise, let $\check{c}$ and $\hat{c}$ be its balanced lower
and upper bounds, as defined in Definition \ref{def-14}. Let $\overline{c}_0 =
\overline{c}$, $\check{c}_0 = \check{c}$, $\hat{c}_0 = \hat{c}$ and for $i = 1,
\ldots, D$ let us define $\overline{c}_i = f_i(\overline{c}_{i-1})$, $\check{c}_i
= f_i(\check{c}_{i-1})$ and $\hat{c}_i = f_i(\hat{c}_{i-1})$. Observe that
$\check{c}_i\le \overline{c}_i\le \hat{c}_i$, because of monotonicity of functions
$Q^k_x$, $1 \le x \le p$, and Fact \ref{fct-19}. To prove that $\overline{c}_{D}$
is a flat sequence we need the following three technical facts.
\begin{fact}\label{fct-21}
Let $s = height(\check{c})$. If $s$ is even then $\overline{c}_{i,j} =
\frac{s}{2}$ and $\overline{c}_{i,b_k-j+1}\in\{\frac{s}{2},\frac{s}{2}+1\}$
for each $i$ and $j$ such that $p b_k/2 \le i \le D - (b_k/2 -1)$ and $1 \le j \le
\lceil\frac{i+1-(p b_k/2)}{2}\rceil$. If $s$ is odd then  $\overline{c}_{i,j}
\in\{\frac{s-1}{2},\frac{s+1}{2}\}$ and $\overline{c}_{i,b_k-j+1} =
\frac{s+1}{2}$ for each $i$ and $j$ such that $p b_k/2 \le i \le D - (b_k/2 - 1)$ 
and $1 \le j \le \lceil\frac{i+1-(p b_k/2)}{2}\rceil$. 
\end{fact}
\begin{proof}
Since both $\check{c}$ and $\hat{c}$ are balanced, we can consider reduced forms
of them and use Lemmas \ref{l6} and \ref{l7}. For the given range of $i$'s values
that means that both $reduce(\check{c}_i)$ and $reduce(\hat{c}_i)$ are in
$I(X^k_i)$. Recall that $I(X^k_i) = I(join_k(\lceil\frac{i+1-(p
b_k/2)}{2}\rceil,Z^k,W^k_i))$ for $i = p b_k/2, \ldots, D - (b_k/2 - 1)$. Hence
for a given range of $j$'s values both $reduce(\check{c}_i)_j$ and
$reduce(\hat{c}_i)_j$ are in $I(0) = [-\frac{1}{2},0]$. From Fact \ref{fct-19} we
know that $height(\hat{c}) = s+1$ and from Lemma \ref{l4} that heights are
preserved in sequences $\check{c}_i$ and $\hat{c}_i$. Thus, from the definition of
a reduced sequence, $\check{c}_{i,j} \in [\frac{s-1}{2},\frac{s}{2}]$,
$\check{c}_{i,b_k-j+1} \in [\frac{s}{2},\frac{s+1}{2}]$, $\hat{c}_{i,j} \in
[\frac{s}{2},\frac{s+1}{2}]$ and $\hat{c}_{i,b_k-j+1} \in
[\frac{s+1}{2},\frac{s+2}{2}]$. Since all $\check{c}_i$ and $\hat{c}_i$ are
sequences of integers, for even $s$ we get $\check{c}_{i,j} =
\check{c}_{i,b_k-j+1} = \hat{c}_{i,j} = \frac{s}{2}$ and $\hat{c}_{i,b_k-j+1} =
\frac{s+2}{2}$; for odd $s$ we conclude that $\check{c}_{i,j} = \frac{s-1}{2}$ and
$\check{c}_{i,b_k-j+1} = \hat{c}_{i,j} = \hat{c}_{i,b_k-j+1} = \frac{s+1}{2}$.
Since $\check{c}_{i,j} \le \overline{c}_{i,j} \le \hat{c}_{i,j}$, the fact
follows. \qed
\end{proof}
The second fact extends the first fact up to the last stage of our 
computation.
\begin{fact}\label{fct-22}
Let $s = height(\check{c})$. If $s$ is even then $\overline{c}_{i,j} =
\frac{s}{2}$ and $\overline{c}_{i,b_k-j+1}\in\{\frac{s}{2},\frac{s}{2}+1\}$ for
each $i = D-(b_k/2-2), \ldots, D$ and $j = 1, \ldots, b_k/2$. If $s$ is odd then
$\overline{c}_{i,j} \in\{\frac{s-1}{2},\frac{s+1}{2}\}$ and
$\overline{c}_{i,b_k-j+1} = \frac{s+1}{2}$ for each $i = D-(b_k/2-2), \ldots, D$
and $j = 1, \ldots, b_k/2$.
\end{fact}
\begin{proof}
Consider first the sequence $\overline{c}_{D-(b_k/2-1)}$ and observe that for $i =
D-(b_k/2-1)$ the value of $\lceil\frac{i+1-(p b_k/2)}{p-1}\rceil$ is equal to
$b_k/2$. It follows from Fact \ref{fct-21} that for even $s$ all values from the
left half of $\overline{c}_{D-(b_k/2-1)}$ are equal to $\frac{s}{2}$ and all
values from the right half of $\overline{c}_{D-(b_k/2-1)}$ are in $\{\frac{s}{2},
\frac{s}{2}+1\}$. For odd $s$ all values from the left half of
$\overline{c}_{D-(b_k/2-1)}$ are in $\{\frac{s-1}{2},\frac{s+1}{2}\}$ and all
values from the right half of $\overline{c}_{D-(b_k/2-1)}$ are equal to
$\frac{s+1}{2}$. Since $Q^k_x$, $1 \le x \le p$, are built of functions
$dec^k_*$, $mov^k_*$ and $cyc^k$ (cf. Definitions \ref{defFun} and \ref{defQ})
observe that each function $f_i$, $i = D-(b_k/2-2), \ldots, D$ can exchange only
the values at positions from $args(mov^k_*)$ that are from non-constant half of
arguments (in case of $dec^k_*$ and $cyc^k$ we can observe that for $a\le b\le
a+1$ and any $h\ge 0$ we have $\min(a,b+h) = a$, $\max(a-h,b) = b$, $\max(a,b-1) =
a$ and $\min(a+1,b) = b$, that is, the functions are identity mappings in stages
$D-(b_k/2-2), \ldots, D$). The $mov^k_*$ functions can exchange only unequal
values at neighbor positions moving the smaller value to the left. \qed
\end{proof}
The last fact states that unequal values $\overline{c}_{i,j}$ described in the
previous two facts are getting sorted during the last stages of the
computation. Observe that if $s$ is odd (even, respectively) then we have to
trace the sorting process only in a left (right, respectively) region of
indices $[1,\min(b_k/2,\lceil\frac{i+1-(p b_k/2)}{p-1}\rceil)]$ ($[\max(b_k/2
+ 1, b_k-\lceil\frac{i+1-(p b_k/2)}{p-1}\rceil+1),b_k]$, respectively), where
$i = p b_k/2, \ldots, D$ and the values to be sorted differs at most by one.
The other part is already sorted. We trace the positions of the smaller values
$s'=\frac{s-1}{2}$ in the left region and the greater values
$s'=\frac{s}{2}+1$ in the right region. We will call each such  $s'$ a moving
element. For $t = 1, \ldots, b_k/2$ let us define $i_t = p b_k/2 + (p-1)(t-1)$
to be the stage, after which the length of the region extends from $t-1$ to
$t$ and a new element appears in it. Let $t'=t$ for odd $s$ and $t'=b_k-t+1$,
otherwise, be the position of this new element and $a_t=c_{i_t,t'}$ be its
value. Finally, let $n_t = |\{1\le l\le t | a_l = s'\}|$ be the number of
moving elements in the region after stage $i_t$.
\begin{fact}\label{fct-23}
Using the above definitions, for $t = 1, \ldots, b_k/2$, if $a_t = s'$ then for $i
= 0, \ldots, D - i_t$ we have $c_{i_t+i,\max(t-i,n_t)} = a_t$ if $s$ is 
odd and $c_{i_t+i,\min(t'+i,b_k-n_k+1)} = a_t$, otherwise.
\end{fact}
\begin{proof}
We prove the fact only for odd $s$, that is, for the left region. The proof
for the right region is symmetric. We would like to show that if $a_t = s'$
appears at position $t' = t$ after stage $i_t$ then it moves in each of the
following stages one position to the left up to its final position $n_t$. The
proof is by induction on $t$ and $i$. If $t=1$ and $a_1=s'$ appears at
position 1 after stage $i_1=p b_k/2$ then $n_1=1$ and $a_1$ is already at its
final position. It never moves, because values at second position are $\ge
s'$, by Facts \ref{fct-21} and \ref{fct-22}. If $t>1$ and $a_t = s'$ then the
basis $i=0$ is obviously true. In the inductive step $i>0$ we assume that 
$c_{i_t+i-1,\max(t'-i+1,n_t)} = a_t$ and that the fact is true for smaller 
values of $t$. If $\max(t-i+1,n_t) = n_t$ then also $\max(t-i,n_t) = n_t$ 
and, by the induction hypothesis, values at positions $1, \ldots, n_t-1$ 
are all equal $s'$. That means that $a_t$ is at its final position and we are 
done. Thus we left with the case: $n_t < t-i+1$, that is, with $n_t \le t-i$. 

Consider the sequences $\overline{c}_{i_t+i-1}$ and $\overline{c}_{i_t+i} =
f_{i_t+i}(\overline{c}_{i_t+i-1})$. We know that $\overline{c}_{i_t+i-1,t-i+1} =
s'$. To prove that $\overline{c}_{i_t+i,t-i} = s'$ we would like to show that $s'$
is moved one position to the left by $f_{i_t+i}$, i.e. that
$\overline{c}_{i_t+i-1,t-i} = s'+1$ and $mov^k_{t-i}\in f_{i_t+i}$. The later is a
direct consequence of an observation that $mov^k_a\in f_b$ if and only if
$(a+b)\equiv 1 \pmod p$. In our case $(t-i) + (i_t+i) = t+i_t = t + p b_k/2 +
(p-1)(t-1) = p(b_k/2 + t - 1) +1 \equiv 1 \pmod p$. To prove the former, let us
consider any $a_u = s'$, $u\le t-1$. Then $i_u \le i_t-(p-1) \le i_t - 2$ and $n_u
\le n_t-1$. By the induction hypothesis, $c_{i_u+j,\max(u-j,n_u)} = s'$. Setting
$j = i_t - i_u + i-1$ we get $j\ge i+1$ and $\max(u-j,n_u) \le
\max(t-1-(i+1),n_t-1) < \max(t-i,n_t) = t-i$. Moreover, $i_u+j = i_t+i-1$. That
means that in the sequence $\overline{c}_{i_t+i-1}$ none of $n_t$ elements $s'$ is
at position $t-i$ and, consequently, $\overline{c}_{i_t+i-1,t-i} = s'+1$. Since
$mov^k_{t-i}$ switches $s'$ with $s'+1$, this completes the proof of Fact
\ref{fct-23}. \qed
\end{proof}

Now we are ready to prove that $\overline{c}_{D}$ is a flat sequence. By Fact
\ref{fct-22}, if $s$ is odd then $\overline{c}_{D} \in
\{\frac{s-1}{2},\frac{s+1}{2}\}^{b_k/2}(\frac{s+1}{2})^{b_k/2}$, otherwise,
$\overline{c}_{D} \in
(\frac{s}{2})^{b_k/2}\{\frac{s}{2},\frac{s}{2}+1\}^{b_k/2}$. The number of
minority (moving) elements in $\overline{c}_{D}$ has been denote by
$n_{b_k/2}$. If $s$ is odd and $a_t$, $t = 1, \ldots, b_k/2$, is a minority
element $\frac{s-1}{2}$, then, by Fact \ref{fct-23}, $c_{D,n_t} =
\frac{s-1}{2}$. If $s$ is even and $a_t$, $t = 1, \ldots, b_k/2$, is a
minority element $\frac{s}{2}+1$, then, by Fact \ref{fct-23}, $c_{D,b_k-n_t+1}
= \frac{s}{2}+1$. In both cases this proves that  $\overline{c}_{D}$ is flat,
which completes the proof of Theorem \ref{thm-19}. \qed
\end{proof}

\subsection{Proof of Theorem \ref{3merger}} %********************************

Theorem \ref{3merger} follows directly from Theorem \ref{thm-19} and Lemma
\ref{l3}. Let $k\ge p \ge 4$, $b_k = 2\lceil\frac{k-2}{p-2}\rceil$ and
$\overline{c}$ be any 2-flat sequence of integers from $[0,2^{k-1}-1]$. By
Theorem \ref{thm-19} the result of application $(Q^k_p \circ Q^k_{p-1} \circ
\ldots \circ Q^k_1)^{b_k - 1}$ to $(\overline{c})$ is a flat sequence. Then,
by Lemma \ref {l3}, the network $M_k$ is a $b_k - 1$-pass merger of two sorted
sequences given in odd and even registers, respectively.

\section{Average sorting times} \label{sec4} 
It is easy to observe that our $M^p_k$ networks are also periodic sorters, because
they contain all neighbour conparators $[i:i+1]$, $1\le i< N^p_k$. We were curious
how efficient periodic sorters were they, when the early stopping property would
be applied, that is, when a periodic application of $M^p_k$ would be stopped just
after none of comparator in  $M^p_k$ exchange values. We measured the average and
maximal sorting times (the number of rounds) of $10^5$ (pseudo)random permutations
on selected $M^p_k$ networks, $3 \le p \le 5$, $9 \le k \le 14$ and the results
are shown in Fig. \ref{sort_times}. Surprisingly, the average and maximal sorting
times are quite close to $\log^2(N^p_k)$. An open question is what is the
worst-case sorting time of $M^p_k$. 
\begin{figure}[ht]
\centering
\epsfig{file=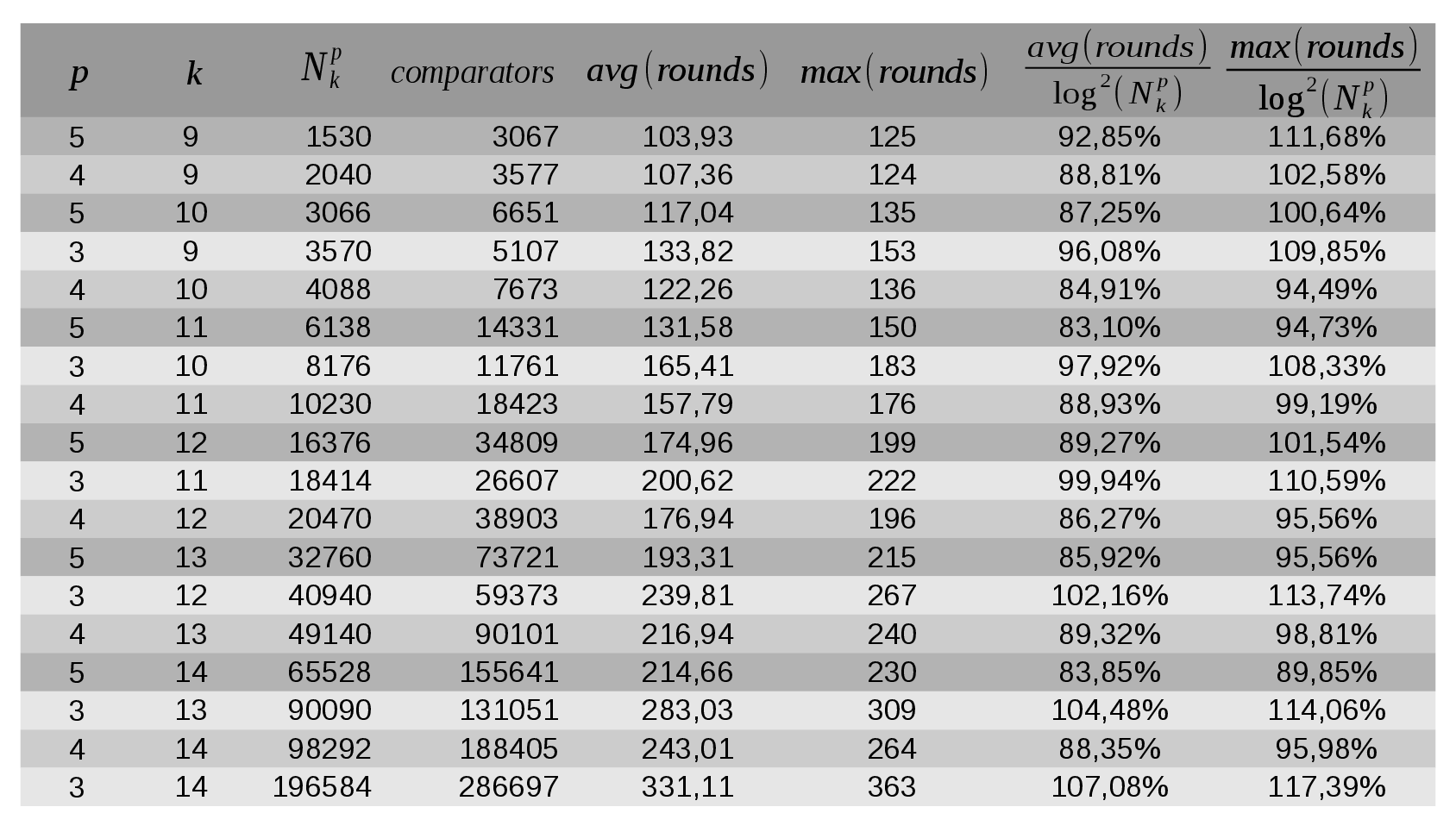, width=4.8in}
\caption{The average sorting times of $10^5$ random permutations with $M^p_k$ 
networks for $3\le p\le 5$ and $9\le k\le 14$.}
\label{sort_times}
\end{figure}

\section{Conclusions}  %****************************************************

For each $k \ge p \ge 4$ we have shown a construction of a p-periodic merging
comparator network of $N^p_k = (2^k-2)\lceil\frac{k-2}{p-2}\rceil$ registers and
proved that it merge any two sorted sequences (given in odd and even registers,
respectively) in time $D^p_k = p(b_k - 1) = p (2\lceil\frac{k-2}{p-2}\rceil - 1)$.
The construction is regular and quite simple. It is created based on the duality
between constant-periodic and constant-delay comparator networks and can be
considered as a natural extension of the previous construction of $3$-periodic
merging networks. Also the proof is a generalisation of the corresponding proof
given for $3$-periodic merging networks. An open question remains whether the
given merging times are optimal for $p$-periodic comparator networks.

Finally, one can observe that for $k > p \ge 4$ we get $2^k \le 2(2^k-2) \le
N^p_k$, which implies $k \le \log N^p_k$. Now we can bound merging times $D^p_k$
for $p = 4,5,6$ as $D^4_k \le 4k-8 \le 4\log N^4_k$, $D^5_k \le 3.33\log N^5_k$
and $D^6_k \le 3\log N^6_k$. Because of skipped negative terms, exact ratios to
$\log N^p_k$ are even better for small values of $k$ (compare Fig.
\ref{merge_times}). 
\begin{figure}[ht]
\centering
\epsfig{file=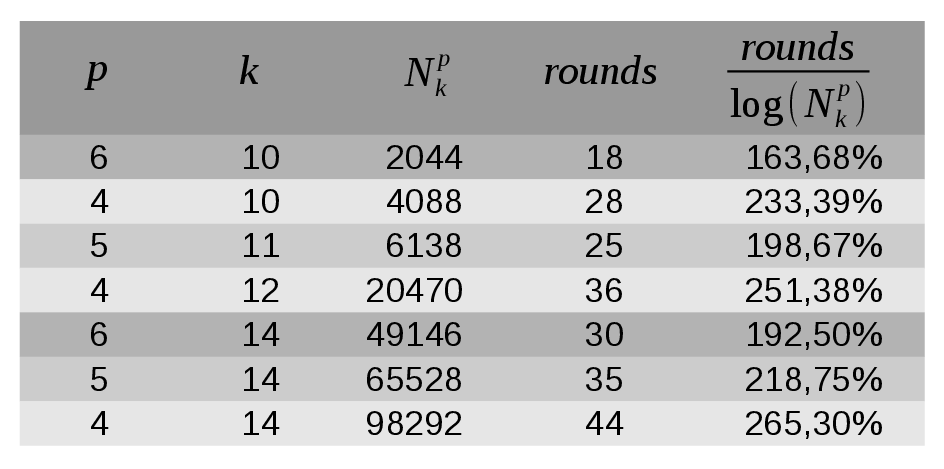, width=2.8in}
\caption{The merging times of some $M^p_k$ periodic
networks compared to $\log N$ non-periodic merging time.}
\label{merge_times}
\end{figure}

\bibliography{merging-spaa}{}

\begin{figure}%[ht]
\centering
\epsfig{file=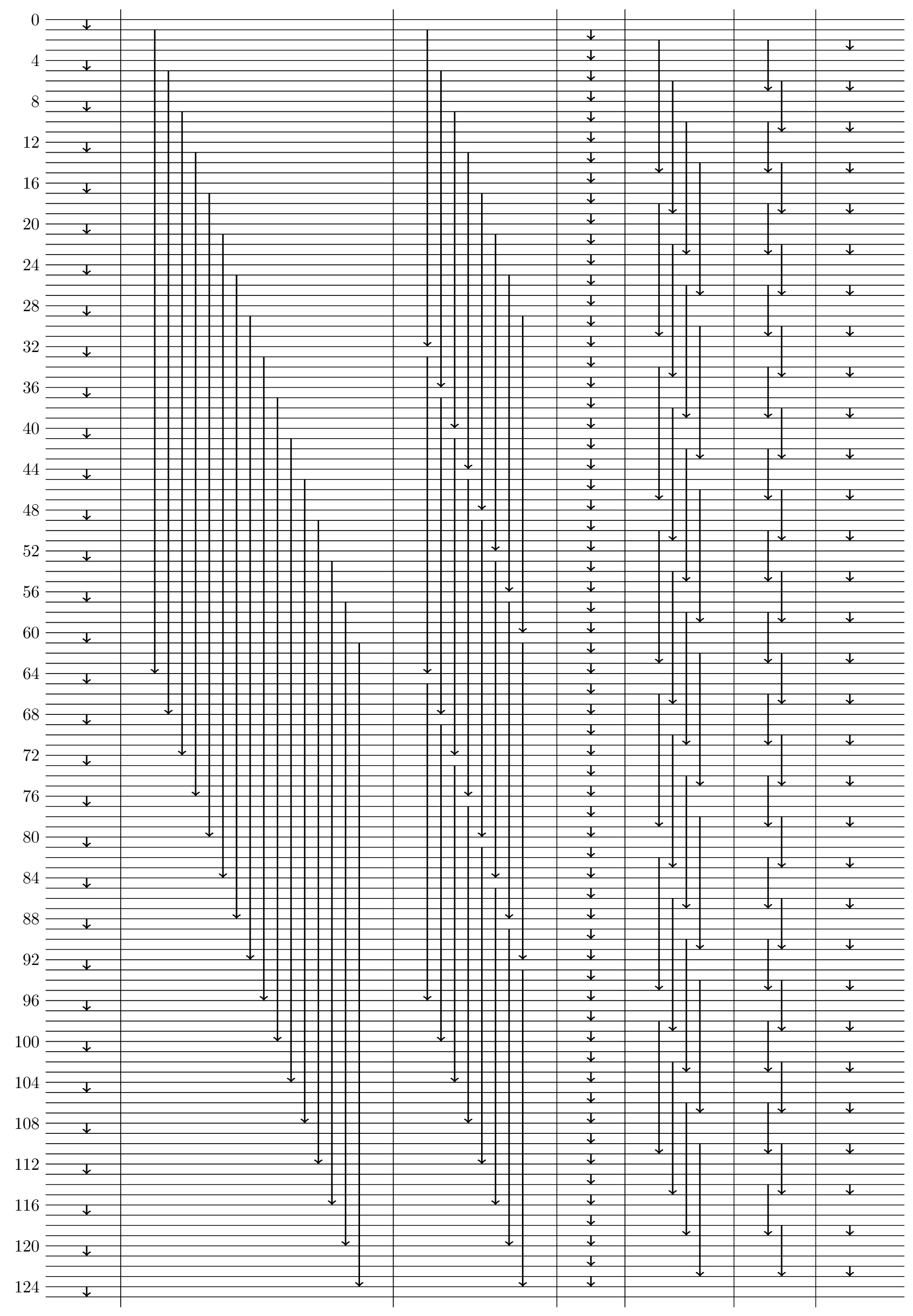, height=3in}
\hspace{20pt}
\epsfig{file=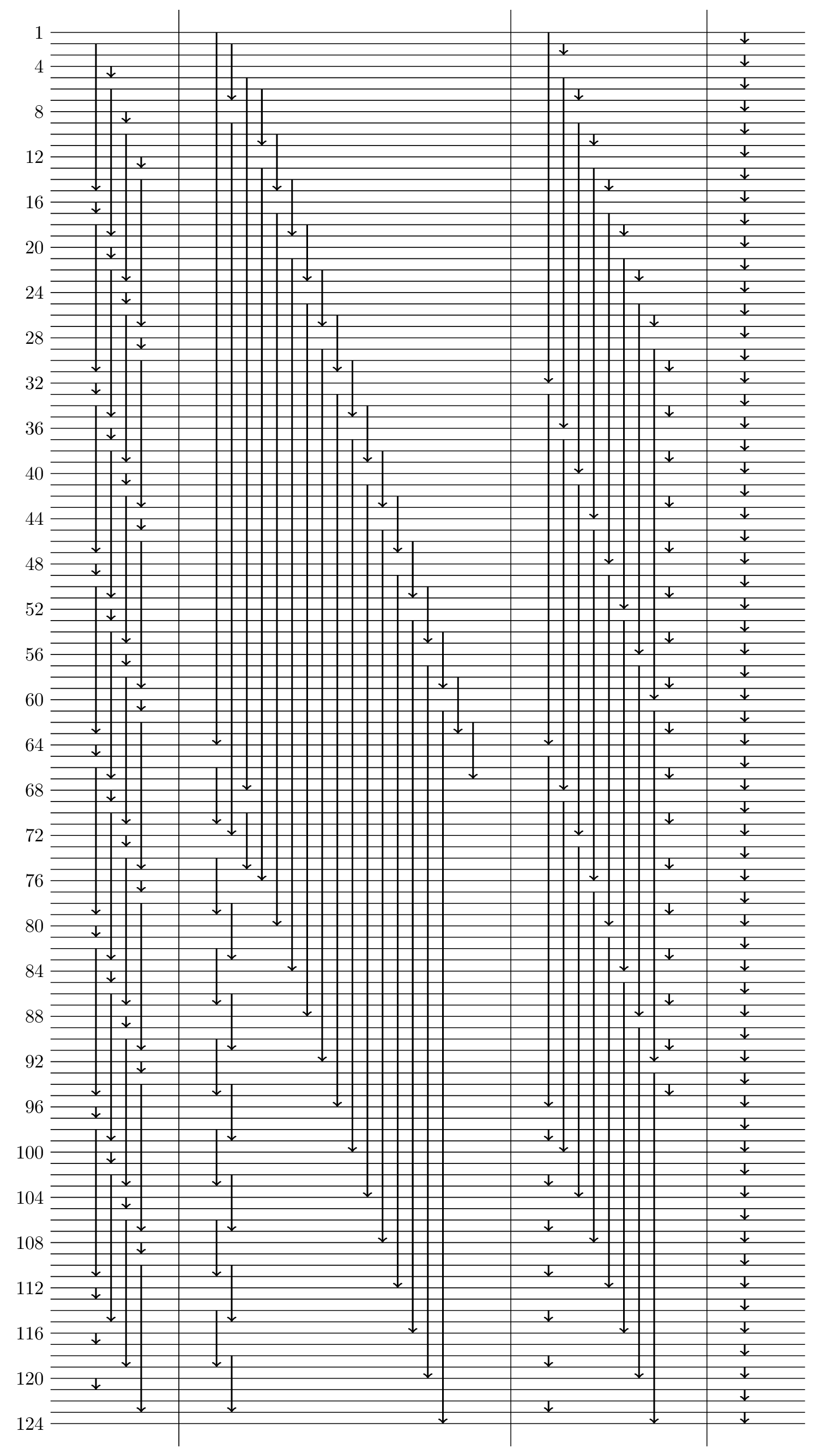, height=3in}
\caption{The traditional drawing of  $P^4_6$ (left) and $M^4_6$ (right) networks. 
Vertical lines separate stages.}
\label{merge46}
\end{figure}

\end{document}